\providecommand{\tabularnewline}{\\}
  \newtheorem{assumption}{Assumption}
  \newtheorem{definitn}{Definition}
  \newtheorem{remrk}{Remark}
  \newtheorem{lemma}{Lemma}
  \newtheorem{thm}{Theorem}
  \newtheorem{problem}{Problem}
  \newtheorem{cor}{Corollary}
\begin{document}

\title{{\huge Limited Feedback Design for Interference Alignment on MIMO
Interference Networks with Heterogeneous Path Loss and Spatial Correlations}}

\author{Xiongbin Rao, Liangzhong Ruan, \emph{Student Member, IEEE}, and \\Vincent
K.N. Lau, \emph{Fellow, IEEE}%
\thanks{The authors are with ECE Department, the Hong Kong University of Science
and Technology, Hong Kong (e-mails: \{xrao,stevenr,eeknlau\}@ust.hk).%
}\vspace{-0.6in}
}
\maketitle
\begin{abstract}
Interference alignment is degree of freedom optimal in $K$-user MIMO
interference channels and many previous works have studied the transceiver
designs. However, these works predominantly focus on networks with
perfect channel state information at the transmitters and symmetrical
interference topology. In this paper, we consider a limited feedback
system with \textit{heterogeneous path loss and spatial correlations},
and investigate how the dynamics of the interference topology can
be exploited to improve the feedback efficiency. We propose a novel
spatial codebook design, and perform dynamic quantization via bit
allocations to adapt to the asymmetry of the interference topology.
We bound the system throughput under the proposed dynamic scheme in
terms of the transmit SNR, feedback bits and the interference topology
parameters. It is shown that when the number of feedback bits scales
with SNR as $C_{s}\cdot\log\textrm{SNR}$, the sum degrees of freedom
of the network are preserved. Moreover, the value of scaling coefficient
$C_{s}$ can be significantly reduced in networks with asymmetric
interference topology.
\end{abstract}

\section{Introduction}

\subsection{Prior Works}

The capacity region for the interference channel remains unknown,
although researchers have been working on it for more than thirty
years \cite{cadambe2008interference,jafar2010MNDoF}. Conventional
schemes either treat interference as noise or use channel orthogonalization
to avoid interference. However, these schemes are non-capacity achieving
in general. Interference alignment (IA), which tries to align the
aggregate interference from different transmitters (Txs) into a lower
dimensional subspace at each receiver (Rx), is shown to be degree
of freedom (DoF) optimal in interference channels \cite{cadambe2008interference}
as well as other network scenarios such as the MIMO-X channels \cite{jafar2007MIMOX}.
In addition, despite the fact that IA is optimal only at high SNR,
the IA method potentially gives simpler solutions because the direct
channels are not needed to compute the precoders and decorrelators
\cite{gomadam2011distributed}. As such, there is a surge in the research
interests of IA.

To implement IA, signal dimensions are needed and those dimensions
can be obtained via symbol extension (time or frequency domain) or
by multiple antennas (spatial domain) \cite{cadambe2008interference,jafar2010MNDoF}.
Existing IA deisgn involving symbol extensions has high signal dimensions%
\footnote{The IA solution exploiting symbol extensions \cite{cadambe2008interference,jafar2010MNDoF}
requires $\mathcal{O}((KN)^{2K^{2}N^{2}})$ ($K$ denotes the number
of Tx-Rx pair, $N$ the number of antennas at each node) dimension
of signal space to achieve the optimal DoF, which is difficult to
realize in practice.%
} \cite{jafar2010MNDoF} and is hard to implement in practice. As a
result, many recent IA works have considered IA solutions in the spatial
domain, i.e., without symbol extensions \cite{gomadam2011distributed,peters2011cooperative,santamaria2010maximum}.
However, these approaches are all based on the assumption of perfect
channel state information at the transmitters (CSIT), which is hard
to obtain in practice. As such, we shall focus on studying the limited
feedback design and the associated performance analysis of IA.

The issue of limited feedback on MIMO networks is widely studied in
the research community. For instance, for MIMO broadcast channels
(BC) with zero-forcing beamforming, the performance loss due to limited
feedback is studied in \cite{jindal2006mimo,yoo2007multi}. However,
these works cannot be easily extended to MIMO interference channels
with IA processing as the analysis highly depends on the BC topology
and the zero-forcing strategy at Txs. There are some works that consider
MIMO interference networks adopting IA under limited feedback. For
instance, in \cite{ayach2012interference}, IA with analog feedback
is considered and the performance degradation is studied. In \cite{kim2012new},
a new quantization scheme is studied to reduce the quantization distortion
on MIMO interference networks. However, these works have considered
homogeneous path loss and i.i.d fading and thus failed to exploit
the potential benefits introduced by the asymmetric interference topology.
Besides these works, there are also some papers \cite{krishnamachari2009interference,thukral2009interference}
that investigate the feedback bits scaling law on MIMO interference
networks. The authors show that it is sufficient to maintain the maximum
DoF feasible by IA when the number of CSI feedback bits at each Rx
node scales on $\mathcal{O}(\log(\textrm{SNR}))$. However, these
works analyze the scaling law in the high feedback bits regime only
and thus fail to quantize the network performance when we have finite
feedback bits. Moreover, the potential possibility brought by heterogeneous
path loss and spatial correlations to reduce the scaling bits are
not explored.

\subsection{Remaining Challenges}

In this paper, we consider MIMO interference networks with heterogeneous
path loss as well as \textit{\emph{spatial}} correlations, and focus
on investigating the limited feedback performance of IA in spatial
domain. In view of the prior works, there are two key technical challenges
that need to be addressed.
\begin{itemize}
\item \textbf{How to exploit heterogeneous path loss and spatial correlations
to reduce the limited feedback?} Traditionally, the CSI matrices are
stacked into long vectors and then quantized by regular vector quantization
(VQ) \cite{love2004value,jongren2002combining}. However, as these
schemes adopt symmetric codebooks, they are inefficient when the channel
matrices are spatially correlated \cite{shiu2000fading,ruan2011dynamic}
or have heterogeneous path loss. Intuitively, when the links in the
MIMO interference networks have spatial correlations, the normalized
channel matrices will no longer be isotropic over the Grassmanian
subspace \cite{dai2008quantization}. Furthermore, if the links in
the MIMO interference network have different path loss and spatial
correlations, they should not be allocated the same amount of bits
for limited feedback. The challenge is therefore how to exploit this
asymmetry in the network topology to improve the efficiency of limited
feedback. There are some works on point-to-point MIMO links that exploit
the spatial correlations to improve limited feedback performance \cite{raghavan2007systematic,raghavan2006near}.
However, these works require closed-form precoders and hence, they
cannot be extended to our problem with general MIMO interference network
topology where there is no closed-form IA transceivers. In this paper,
we propose a novel spatial codebook design to exploit the spatial
correlations on MIMO interference networks. There are some works that
consider dynamic bits allocations, such as the feedback bits partitioning
between desired and interfering channels \cite{bhagavatula2011adaptive},
to improve the feedback efficiency. Motivated by this idea, we further
perform dynamic quantization via bit allocations for different interference
links to further exploit the asymmetry of the interference topology.
\item \textbf{What is the trade-off between the feedback rate and the network
throughput in general asymmetric MIMO interference networks?} In literature,
there are very limited works that analyze the performance loss due
to limited feedback for IA on MIMO interference networks. In \cite{cho2011feedback},
the author gives some analysis on the trade-off between the throughput
loss and limited feedback rate. However, the approach in this work
relies on the closed-form IA solution for the precoder and hence,
only the specialized topology and single stream transmission case
is analyzed. Yet, the approach cannot be extended to our case because
of the lack of closed-form IA solution for precoders. In this paper,
we shall study the tradeoff between the network throughput and feedback
rate based on the proposed scheme for more general interference topologies.
From the analysis, we can obtain useful insights on how the system
performance depends on parameters of the network topology such as
the path loss and the spatial correlations. 
\end{itemize}

\subsection{Outline of the Paper}

This paper is organized as follows. In Section II, we give our system
model of $K$-user MIMO interference networks with heterogeneous path
loss as well as spatial correlations, and specify the limited feedback
topology. In Section III, we shall address the first technical challenge.
We first illustrate the potential advantage of heterogeneous path
loss and spatial correlations on the improvement of feedback efficiency
using a toy example. Based on that, we then propose a novel spatial
codebook design as well as dynamic quantization via bit allocations
to adapt to the interference topology. In Section IV, we shall address
the second challenge. We analyze network throughput under the proposed
limited feedback scheme and give the throughput bounds. In Section
V, we compare the performances of the proposed dynamic feedback scheme
with several baselines via simulations. Through both analysis and
simulations, we show that by exploiting the heterogeneous path loss
and spatial correlations in the MIMO interference network, the proposed
scheme significantly improves the system performance in a wide range
of operation regimes.

\textit{Notation}s: The following notations are used in the paper.
Uppercase and lowercase boldface denote matrices and vectors respectively.
The operators $(\cdot)^{*}$, $(\cdot)^{T}$, $(\cdot)^{H}$, $||\cdot||$,
vec$(\cdot)$, $\mathbb{E}\{\cdot\}$, Tr$(\cdot)$, $\textrm{rank}(\cdot)$,
$\otimes$, are complex conjugate, transpose, conjugate transpose,
Frobenius norm, stacking vectorization, expectation, trace, rank,
Kronecker product operator respectively, diag$(\cdot)$ denotes forming
matrix operator using the inputs as diagonal blocks, $\textrm{span}(\{\mathbf{a}\})$
denotes the linear space spanned by the vectors in $\{\mathbf{a}\}$,
$\log(\cdot)$ is the logarithm of base 2, $\mathcal{O}(\cdot)$ denotes
the asymptotic upper bound, i.e., $f(x)=\mathcal{O}(g(x))$ if there
is a positive constant $M$ such that $|f(x)|\leq M\cdot|g(x)|$ for
all sufficiently large $x$.

\section{System Model}

In this section, we shall first elaborate the interference network
topology with heterogeneous path loss and spatial correlation. We
further define the notion of interference topology profile and illustrate
by using some examples. Finally, we will elaborate the limited feedback
topology for the MIMO interference network with IA processing.

\subsection{Topology of the MIMO Interference Network}

We consider a $K$-user MIMO interference network in which each Tx
is equipped with $N_{t}$ antennas and each Rx with $N_{r}$ antennas
as shown in Fig. \ref{fig:Information-flow-for} (A). Denote the transmit
SNR at each Tx as $P$, the large scale fading gain from Tx $i$ to
Rx $j$ as $l_{ji}$, the small scale fading matrix from Tx $i$ to
Rx $j$ as $\mathbf{H}_{ji}\in\mathbb{C}^{N_{r}\times N_{t}}$. Let
$d$ be the number of data streams transmitted by each Tx-Rx pair.
The received signal $\mathbf{y}_{j}\in\mathbb{C}^{d\times1}$ at the
Rx $j$ is given by:

\begin{equation}
\mathbf{y}_{j}=l_{jj}^{\nicefrac{1}{2}}\mathbf{\mathbf{U}}_{j}^{H}\mathbf{H}_{jj}\mathbf{V}_{j}\mathbf{x}_{j}+\mathbf{\mathbf{U}}_{j}^{H}(\sum_{i\neq j}^{K}l_{ji}^{\nicefrac{1}{2}}\mathbf{H}_{ji}\mathbf{V}_{i}\mathbf{x}_{i}+\mathbf{z}_{j}),\qquad\forall j\in\left\{ 1,2,\cdots K\right\} \label{eq:signal}
\end{equation}
where $\mathbf{x}_{i}\sim\mathcal{CN}(\mathbf{0},\;\frac{P}{d}\mathbf{I}_{d})$
is the encoded information symbol at Tx $i$ for corresponding Rx
$i$, $\mathbf{V}_{i}\in\mathbb{C}^{N_{t}\times d}$ the transmit
precoding matrix of Tx $i$, $\mathbf{U}_{j}\in\mathbb{C}^{N_{r}\times d}$
the decorrelator of Rx $j$, and $\mathbf{z}_{j}\in\mathbb{C}^{N_{r}\times1}$
the complex Gaussian noise with zero mean and unit variance. We have
the following assumption regarding $\mathbf{H}_{ji}$ by using the
Kronecker correlation model \cite{tulino2004random}.
\begin{assumption}
[Channel Fading Model]The channel matrix $\mathbf{H}_{ji}$ in this
paper is given by:
\begin{equation}
\mathbf{H}_{ji}=(\mathbf{\Phi}_{ji}^{r})^{\nicefrac{1}{2}}\mathbf{H}_{ji}^{w}(\mathbf{\Phi}_{ji}^{t})^{\nicefrac{1}{2}}\label{eq:csi}
\end{equation}
where%
\footnote{We define the square root of a PSD matrix $\mathbf{\Phi}$ as $\mathbf{\Phi}^{\frac{1}{2}}=\mathbf{F}\mathbf{\Lambda}^{\frac{1}{2}}\mathbf{F}^{H}$,
where $\mathbf{\Phi}=\mathbf{F}\mathbf{\Lambda}\mathbf{F}^{H}$ denotes
the eigenvalue decomposition.%
} $\mathbf{H}_{ji}^{w}\in\mathbb{C}^{N_{r}\times N_{t}}$ and each
entry of it is i.i.d. $\mathcal{CN}(0,1)$, $\mathbf{\Phi}_{ji}^{r}\in\mathbb{C}^{N_{r}\times N_{r}}$,
$\mathbf{\Phi}_{ji}^{t}\in\mathbb{C}^{N_{t}\times N_{t}}$ are deterministic
positive semi-definite (PSD) matrices which stand for the spatial
correlation matrices at Rx, Tx side respectively, $\mathbf{\Phi}_{ji}^{r}$,
$\mathbf{\Phi}_{ji}^{t}$ are normalized such that $\textrm{Tr}(\mathbf{\Phi}_{ji}^{r})=N_{r}$,
$\textrm{Tr}(\mathbf{\Phi}_{ji}^{t})=N_{t}$. Denote $M_{ji}^{r}=\textrm{rank}(\mathbf{\Phi}_{ji}^{r})$,
$M_{ji}^{t}=\textrm{rank}(\mathbf{\Phi}_{ji}^{t})$ ($0<M_{ji}^{r}\leq N_{r}$,
$0<M_{ji}^{t}\leq N_{t}$), and the non-zero eigenvalues of $\mathbf{\Phi}_{ji}^{r}$,
$\mathbf{\Phi}_{ji}^{t}$ as $\{\lambda_{ji,1},\cdots\lambda_{ji,M_{ji}^{r}}\}$,
$\{\sigma_{ji,1},\cdots\sigma_{ji,M_{ji}^{t}}\}$ respectively.\hfill \IEEEQED
\end{assumption}

\subsection{Interference Topology Profile}

In this section, we define the notion of $\emph{interference topology profile}$
($\mathcal{ITP}$) which is used to capture the heterogeneous path
loss and spatial correlations in the MIMO interference network. 
\begin{definitn}
[Interference Topology Profile]We define the set of all the channel
statistics $\mathcal{ITP}=\{\mathbf{\Phi}_{ji}^{r},\:\mathbf{\Phi}_{ji}^{t},\; l_{ji}\}$
as the \textit{interference topology profile}.\hfill \IEEEQED
\end{definitn}
%askldfaj

As such, the $\mathcal{ITP}$ fully characterizes the heterogeneity
of the path loss and spatial correlation among the interference links.
We give several examples below with a ($K=4$, $N_{t}=3$, $N_{r}=2$,
$d=1$) interference network.
\begin{itemize}
\item \textbf{A fully connected MIMO interference network with i.i.d. Rayleigh
fading}: If $\mathbf{\Phi}_{ji}^{r}=\mathbf{\Phi}_{ji}^{t}=\mathbf{I}$,
$l_{ji}=1$, $\forall i,j\in\{1,\cdots4\}$, then the MIMO interference
network reduces to the conventional fully connected interference channel
in which all the elements of the channel matrices $\{\mathbf{H}_{ji}\}$
are i.i.d. Rayleigh fading.
\item \textbf{A fully connected MIMO interference network with asymmetric
spatial correlation}: Due to local scattering effects, the MIMO channel
matrix $\mathbf{H}_{ji}$ may not be i.i.d. and in some cases, there
will be spatial correlations. For instance, if $\mathbf{\Phi}_{31}^{t}=\textrm{diag}([\begin{array}{ccc}
2.8 & 0.1 & 0.1\end{array}])$, other $\mathbf{\Phi}_{ji}^{t}=\mathbf{I}$, all $\mathbf{\Phi}_{ji}^{r}=\mathbf{I}$,
$l_{ji}=1$ in $\mathcal{ITP}$, then this network corresponds to
an example of a fully connected MIMO interference network with asymmetric
spatial correlation.
\item \textbf{A partially connected MIMO interference network with heterogeneous
path loss}: In practice, different cross-links might have heterogeneous
path loss due to different geometric distributions between Txs and
Rxs. For instance, if $l_{14}=10^{-8}$, other $l_{ji}=1$, all $\mathbf{\Phi}_{ji}^{t}=\mathbf{\Phi}_{ji}^{r}=\mathbf{I}$,
then this network corresponds to an example of a partially connected
(since $l_{14}\ll\textrm{other }l_{ji}$, the link from Tx 4 to Rx
1 can be regarded as disconnected.) MIMO interference network with
heterogeneous path loss.
\end{itemize}
\begin{figure}
\begin{centering}
\includegraphics[scale=0.8]{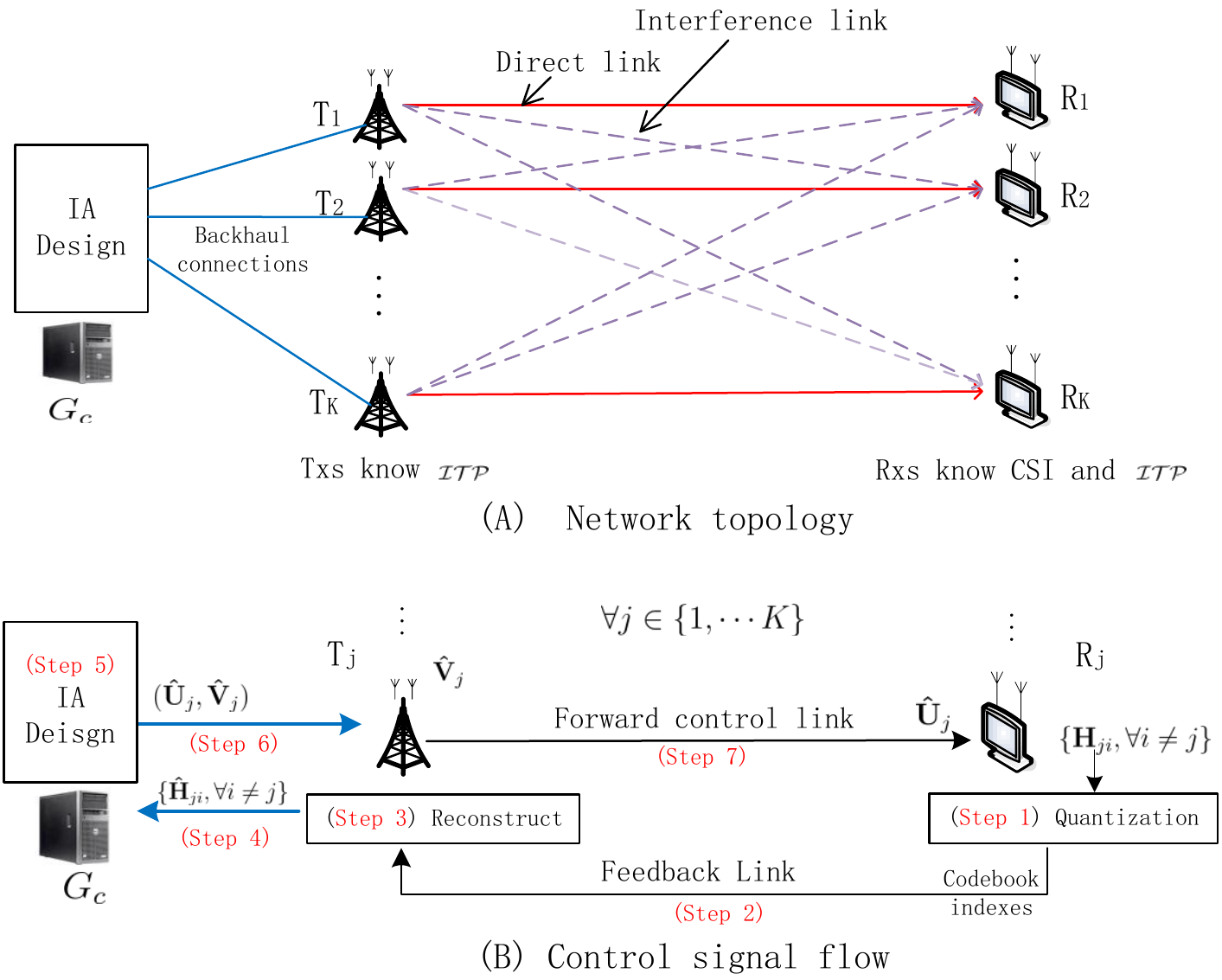}
\par\end{centering}

\noindent \caption{\label{fig:Information-flow-for}System model.}
\end{figure}

\subsection{Limited Feedback Topology }
\begin{assumption}
[Network Connection and Information State]As illustrated in Fig.
\ref{fig:Information-flow-for} (A), we assume that there is a BS
controller $G_{c}$ that has backhaul connections to all the Txs.
We also assume that the instantaneous CSI is available at the corresponding
Rx side but not at the Tx side, and that the \textit{$\{\mathbf{\Phi}_{ji}^{r},\:\mathbf{\Phi}_{ji}^{t},\; l_{ji},\forall i\}$}
is available at both the Rx $j$ and Tx $j$. \hfill \IEEEQED
\end{assumption}

\begin{remrk}
[Practical Considerations]In practice, $\{\mathbf{\Phi}_{ji}^{r},\:\mathbf{\Phi}_{ji}^{t},\; l_{ji},\forall i\}$
can be obtained at Tx by either reciprocity of path loss and spatial
correlations or explicit feedback of them from the Rx side. In either
case, the path loss and spatial correlations are slowly varying and
the acquisition of them at the transmitters can be done with very
small overhead compared with instantaneous CSI feedback.
\end{remrk}
%asjkaf

In this paper, we deploy the IA algorithm with iterative interference
leakage minimization in \cite{gomadam2011distributed} to compute
the transceivers. As all the CSIs are collected in $G_{c}$, we shall
implement the IA algorithm in a centralized manner, such that there
will be no over-the-air iterations among the nodes (over-the air iterations
will consume excessive signaling overhead and backhaul bandwidth).
On the other hand, we shall focus on the feedback scheme for the \textit{cross
links} only as the IA algorithm \cite{gomadam2011distributed} is
only related to the cross links. The outline of limited feedback topology
is described in the following algorithm, and is also illustrated in
Fig. \ref{fig:Information-flow-for} (B). 

\textit{Algorithm 1 (Limited Feedback Topology for MIMO Interference
Network Adopting IA Processing):}
\begin{itemize}
\item \textbf{Step 1}: At each Rx $j$, the cross link CSI $\{\mathbf{H}_{ji},\,\forall i\neq j\}$
are quantized to be $\{\mathbf{\hat{H}}_{ji},\,\forall i\neq j\}$
using the spatial codebooks $\{\mathbb{\mathcal{C}}_{ji},\,\forall i\neq j\}$
with $\{B_{ji},\,\forall i\neq j\}$ bits respectively. 
\item \textbf{Step 2}: The quantized codeword indexes are then fedback to
the $j$-th Tx using feedback link.
\item \textbf{Step} \textbf{3}: Each Tx $j$ receives the codebook indexes
and reconstructs the CSIs to be $\{\mathbf{\hat{H}}_{ji},\,\forall i\neq j\}$.
\item \textbf{Step 4}: The Txs forward the reconstructed CSIs to $G_{c}$
through the backhaul link.
\item \textbf{Step 5}: Based on the collected quantized CSIs from all Txs,
$G_{c}$ computes the IA transceivers as 
\begin{equation}
\left\{ (\mathbf{\hat{U}}_{i},\mathbf{\hat{V}}_{i})\right\} =\textrm{IA}\left(\{\mathbf{\hat{H}}_{nk},\forall n,k,\; n\neq k\}\right)\label{eq:IA_processing}
\end{equation}
where IA denotes the IA processing \cite{gomadam2011distributed}
and $(\mathbf{\hat{U}}_{i},\mathbf{\hat{V}}_{i})$ denotes the designed
IA transceiver. 
\item \textbf{Step 6}: $G_{c}$ distributes$\{(\mathbf{\hat{U}}_{i},\mathbf{\hat{V}}_{i})\}$
to the Txs.
\item \textbf{Step 7}: Each Tx $i$ forwards $\mathbf{\hat{U}}_{i}$ to
the Rx $i$ using the forward control link. \hfill \IEEEQED
\end{itemize}
% the properties of IA transceiver design

The design of the spatial codebook and bit allocations $\{\mathbb{\mathcal{C}}_{ji},B_{ji}\; i\neq j\}$
mentioned in Algorithm 1 will be discussed in detail in Section III.
Note that these designs are adaptive to the path loss and spatial
correlations which are long term statistics. Hence, once the spatial
codebooks and bit allocations are determined, each Rx will quantize
the instantaneous CSIs independently using the corresponding codebooks
and allocated bits. 

Assume that the network is IA feasible, we have the following properties
about $\left\{ (\mathbf{\hat{U}}_{i},\mathbf{\hat{V}}_{i})\right\} $
\cite{gomadam2011distributed}, 
\begin{equation}
\mathbf{\hat{U}}_{i}^{H}\mathbf{\hat{U}}_{i}=\mathbf{I}_{d\times d},\;\mathbf{\hat{V}}_{i}^{H}\mathbf{\hat{V}}_{i}=\mathbf{I}_{d\times d},\quad\forall i\in\{1,\cdots K\}.\label{eq:IA_uv_pro}
\end{equation}
\begin{equation}
\mathbf{\hat{U}}_{j}^{H}\mathbf{\hat{H}}_{ji}\mathbf{\hat{V}}_{i}=\mathbf{0}_{d\times d},\quad\forall i\neq j,\; i,\, j\in\{1,\cdots K\}.\label{eq:IA_UV_property}
\end{equation}

Due to the limited feedback CSI, the IA transceiver cannot achieve
perfect alignment and thus there will be some residual interference.
Denote the residual interference to noise ratio (RINR) at Rx $j$
as $I_{j}$, we have 
\begin{equation}
I_{j}=\dfrac{P}{d}\sum_{i,i\neq j}^{K}l_{ji}||\mathbf{\hat{U}}_{j}^{H}\mathbf{H}_{ji}\mathbf{\hat{V}}_{i}||^{2}.\label{eq:Residue_interference_expression}
\end{equation}

\section{Limited Feedback with Dynamic Quantization}

Define the dynamic feedback policy for the network as 
\begin{equation}
\mathcal{\mathcal{P}}=\{\mathcal{C}_{ji},\, B_{ji},\, i\neq j,\,\forall i,\, j=1\cdots K\},\label{eq:multi-resolution_codebook}
\end{equation}
where $\mathcal{C}_{ji}$ is the codebook for link $\mathbf{H}_{ji}$,
and $B_{ji}$ denotes the bits allocated for $\mathcal{C}_{ji}$.
We investigate in this section how the limited feedback scheme $\mathcal{P}$
is designed to adapt to the $\mathcal{ITP}$ in the MIMO interference
network. We first illustrate the motivation of dynamic quantization
according to the $\mathcal{ITP}$ based on a toy example. We then
elaborate the details of the proposed feedback scheme $\mathcal{P}$,
which is divided into two parts, namely the \textit{spatial codebook
design} in Section III-B and \textit{dynamic quantization via bit
allocations} in Section III-C.

\subsection{Motivation}

Consider a $K=4$, $N_{t}=3$, $N_{r}=2$, $d=1$ interference network
where $\mathbf{\Phi}_{13}^{t}=\textrm{diag}([\begin{array}{ccc}
2.8 & 0.1 & 0.1\end{array}])$, $l_{13}=1$, $\mathbf{\Phi}_{14}^{t}=\textrm{diag}([\begin{array}{ccc}
1 & 1 & 1\end{array}])$, $l_{14}=0.1$ (all other $\mathbf{\Phi}_{ji}^{r}=\mathbf{\Phi}_{ji}^{t}=\mathbf{I}$,
$l_{ji}=1$), transmit SNR $P=1$. Assume that all the other channel
matrices are perfectly known by the BS controller $G_{c}$ except
for $\mathbf{H}_{13}$ and $\mathbf{H}_{14}$. We shall then investigate
below the feedback scheme for these two links only. As the fading
gain from Tx $3$ to Rx $1$ is much larger ($l_{13}\gg l_{14}$),
it is probable that better performance may be achieved if we concentrate
on quantizing $\mathbf{H}_{13}$ only. Moreover, as $\mathbf{H}_{13}=\mathbf{H}_{13}^{w}((\mathbf{\Phi}_{13}^{t})^{\frac{1}{2}})=\mathbf{H}_{13}^{w}\cdot\textrm{diag}([\begin{array}{ccc}
2.8 & 0.1 & 0.1\end{array}])^{0.5}$ according to (\ref{eq:csi}) and thus the columns of $\mathbf{H}_{13}$
have different gains. It is probable that better performance can be
achieved if the codebook to quantize $\mathbf{H}_{13}$ has the same
statistic distribution as $\mathbf{H}_{13}$. To verify this hypothesis,
we compare the performance of the following two quantization schemes.
Note the dynamic quantization scheme illustrated below is only a simple
toy scheme which helps to show the idea. In the following two schemes,
$\mathcal{C}_{0}$ denotes a random vector quantization codebook \cite{jindal2006mimo}
with $(6\times1)$ codewords.
\begin{itemize}
\item \textbf{Conventional VQ}: Allocate equal bits to $\mathbf{H}_{13},\mathbf{H}_{14}$
and use $\mathcal{C}_{0}$ to quantize $\textrm{vec}(\mathbf{H}_{13}),\,\textrm{vec}(\mathbf{H}_{14})$
\cite{jindal2006mimo}.
\item \textbf{Dynamic Quantization}: Use all the bits to quantize $\mathbf{H}_{13}$
only, the codebook to quantize $\textrm{vec}(\mathbf{H}_{13})$ is
given by $\mathcal{C}=\left\{ \mathbf{u}\mid\mathbf{u}=\frac{\textrm{diag}([\begin{array}{cccccc}
28 & 28 & 1 & 1 & 1 & 1\end{array}])^{0.5}\cdot\mathbf{f}}{||\textrm{diag}([\begin{array}{cccccc}
28 & 28 & 1 & 1 & 1 & 1\end{array}])^{0.5}\cdot\mathbf{f}||},\,\mathbf{f}\in\mathcal{C}_{0}\right\} $.
\end{itemize}
\begin{table}
\begin{centering}
\begin{tabular}{|c|c|c|c|}
\hline 
Sum feedback bits & 4 & 10 & 16\tabularnewline
\hline 
\hline 
 Conventional VQ & 0.9057 & 0.5826 & 0.3219\tabularnewline
\hline 
Dynamic Quantization & 0.3055 & 0.1595 & 0.1333\tabularnewline
\hline 
\end{tabular}
\par\end{centering}

\centering{}\caption{\label{tab:Residue-interference-comparison}Residue interference comparison
Versus bits}
\end{table}

The comparison of RINR at Rx 1 versus the sum feedback bits of the
two links is illustrated in Table \ref{tab:Residue-interference-comparison}.
We see that the\textit{ Dynamic Quantization} scheme can achieve much
lower RINR than the \textit{Conventional VQ}. This example demonstrates
the potential benefit of dynamic quantization according to the interference
topology profile. In the following, we shall elaborate the details
of the proposed scheme that can adapt to the general $\mathcal{ITP}$
given in Def. 1.

\subsection{\label{sub:Spatial-Correlation-Adaptation}Spatial Codebook Design}

In this section, we shall propose a novel spatial codebook design
to capture the asymmetric interference topology profile of MIMO interference
channel defined in Def. 1. From the toy example in the motivation
part, we see that better system performance can be achieved by deploying
spatial codebook given by transforming a base codebook with the corresponding
spatial correlation matrices. Based on this intuition, we illustrate
below how these spatial codebooks $\{\mathcal{C}_{ji}\}$ are designed.

\textit{Algorithm 2 (Spatial Codebook and Quantization Criterion)}:
Each codebook $\mathcal{C}_{ji}=\{\mathbf{W}_{ji}^{1},\cdots\mathbf{W}_{ji}^{N_{ji}}\}$,
$\mathbf{W}_{ji}^{l}\in\mathbb{C}^{N_{r}\times N_{t}}$ ($N_{ji}=2^{B_{ji}}$)
is designed by transforming a base codebook (the base codebooks can
be obtained by using the quantization cell approximation model in
\cite{yoo2007multi}) $\mathcal{C}_{ji}^{0}=\{\mathbf{S}_{ji}^{1}\cdots\mathbf{S}_{ji}^{N_{ji}}\}$,
where $\mathbf{S}_{ji}^{l}\in\mathbb{C}^{N_{r}\times N_{t}}$ with
the spatial correlation matrices $\mathbf{\Phi}_{ji}^{r}$, $\mathbf{\Phi}_{ji}^{t}$,
i.e.
\begin{equation}
\mathbf{W}_{ji}^{l}=\frac{(\mathbf{\Phi}_{ji}^{r})^{\nicefrac{1}{2}}\mathbf{S}_{ji}^{l}(\mathbf{\Phi}_{ji}^{t})^{\nicefrac{1}{2}}}{||(\mathbf{\Phi}_{ji}^{r})^{\nicefrac{1}{2}}\mathbf{S}_{ji}^{l}(\mathbf{\Phi}_{ji}^{t})^{\nicefrac{1}{2}}||}.\label{eq:codeword_mapping}
\end{equation}
With the input matrix $\mathbf{H}_{ji}$, the selected codeword $\mathbf{\hat{H}}_{ji}$
in codebook $\mathcal{C}_{ji}$ is given by 
\begin{equation}
\mathbf{\hat{H}}_{ji}=\textrm{arg}\max_{\mathbf{W}_{ji}^{l}\in\mathcal{C}_{ji}}||\textrm{vec}(\mathbf{H}_{ji})^{H}\textrm{vec}(\mathbf{W}_{ji}^{l})||.\label{eq:expression_for_E}
\end{equation}
 \hfill \IEEEQED
\begin{remrk}
[How the Spatial Codebooks adapts to the \textit{$\mathcal{ITP}$}]We
transform a base codebook $\mathcal{C}_{ji}^{0}$ with the spatial
correlation matrices $\mathbf{\Phi}_{ji}^{r},\,\mathbf{\Phi}_{ji}^{t}$
to obtain the spatial codebook $\mathcal{C}_{ji}$. Therefore, the
spatial distribution of the CSI $\mathbf{H}_{ji}$ matches the codewords
in the spatial codebook $\mathcal{C}_{ji}$ for all $j,\; i$, and
thus less average quantization distortion will be induced. Furthermore,
the quantization resolution of the spatial codebooks $\{\mathcal{C}_{ji}\}$
is also adaptive%
\footnote{This enables us to perform dynamic quantization among the cross-links,
which is discussed in detail in Section III-C.%
} to the heterogeneity of $\mathcal{ITP}$ among the cross-links to
further enhance the feedback efficiency. 
\end{remrk}

\begin{remrk}
[Linear Complexity of Spatial Codebook Design]In the above design,
the fixed base codebooks $\{\mathcal{C}_{ji}^{0}\}$ are stored at
both the Tx and Rx side. Whenever the spatial correlations change,
the new codebooks can be found by transforming the base codebooks
using the new spatial correlation matrices (\ref{eq:codeword_mapping}).
The overall complexity of the codebook design is thus $\mathcal{O}(N)$,
where $N$ is the number of codewords.
\end{remrk}
%asldkjfa

Before we can analyze the RINR, we have to quantify the quantization
distortion in terms of the bit allocation $\{B_{ji}\}$ as well as
the $\mathcal{ITP}$ parameters. The relationship between the actual
CSI $\mathbf{H}_{ji}$ and the quantized CSI $\mathbf{\hat{H}}_{ji}$
is given by: 
\begin{equation}
\mathbf{H}_{ji}=\alpha_{ji}\mathbf{\hat{H}}_{ji}+\Delta\mathbf{H}_{ji}\label{eq:csi_quantization_distortion}
\end{equation}
where $\alpha_{ji}$ is some unknown complex scalar and $\textrm{vec}(\Delta\mathbf{H}_{ji})$
is the \textit{quantization distortion} distributed in the orthogonal
complement space of $\textrm{vec}(\mathbf{\hat{H}}_{ji})$. The following
lemma gives an upper bound on the average quantization distortion.
\begin{lemma}
[Average Quantization Distortion]\textit{\label{lem:(CSI-Quantization-Distortions):}}Denote
$D_{ji}^{avg}=\mathbb{E}\{||\Delta\mathbf{H}_{ji}||^{2}\}$ as the
average quantization distortion. Under high-resolution assumption
(i.e., $B_{ji}$ is sufficiently large.), the average quantization
distortion $D_{ji}^{avg}$ is upper bounded by
\end{lemma}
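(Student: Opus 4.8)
The plan is to reduce $D_{ji}^{avg}$ to the expectation of the squared channel norm times the squared sine of the quantization angle, decouple the radial and angular contributions using the Kronecker structure (\ref{eq:csi}), and control the angular part with the high‑resolution quantization‑cell approximation on the \emph{effective} Grassmann manifold induced by the correlation. \textbf{Step 1 (reduce to an angular distortion).} By the normalization in (\ref{eq:codeword_mapping}) every codeword satisfies $\|\mathbf{\hat{H}}_{ji}\|=1$, and in (\ref{eq:csi_quantization_distortion}) $\textrm{vec}(\Delta\mathbf{H}_{ji})$ is orthogonal to $\textrm{vec}(\mathbf{\hat{H}}_{ji})$. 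Taking the inner product of (\ref{eq:csi_quantization_distortion}) with $\textrm{vec}(\mathbf{\hat{H}}_{ji})$ gives $\alpha_{ji}=\textrm{vec}(\mathbf{\hat{H}}_{ji})^{H}\textrm{vec}(\mathbf{H}_{ji})$, so by Pythagoras $\|\Delta\mathbf{H}_{ji}\|^{2}=\|\mathbf{H}_{ji}\|^{2}-|\textrm{vec}(\mathbf{\hat{H}}_{ji})^{H}\textrm{vec}(\mathbf{H}_{ji})|^{2}$. Since (\ref{eq:expression_for_E}) selects the codeword maximizing $|\textrm{vec}(\mathbf{H}_{ji})^{H}\textrm{vec}(\mathbf{W}_{ji}^{l})|$, this equals $\|\mathbf{H}_{ji}\|^{2}\sin^{2}\theta_{ji}$, where $\theta_{ji}$ is the chordal angle between $\textrm{vec}(\mathbf{H}_{ji})$ and its nearest codeword; hence $D_{ji}^{avg}=\mathbb{E}\{\|\mathbf{H}_{ji}\|^{2}\sin^{2}\theta_{ji}\}$.

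\textbf{Step 2 (decouple magnitude and direction).} Applying the vec--Kronecker identity to (\ref{eq:csi}) yields $\textrm{vec}(\mathbf{H}_{ji})=\mathbf{T}_{ji}\mathbf{g}$ with $\mathbf{g}=\textrm{vec}(\mathbf{H}_{ji}^{w})\sim\mathcal{CN}(\mathbf{0},\mathbf{I})$ and $\mathbf{T}_{ji}$ a deterministic matrix built from $(\mathbf{\Phi}_{ji}^{t})^{\nicefrac{1}{2}}$ and $(\mathbf{\Phi}_{ji}^{r})^{\nicefrac{1}{2}}$ whose nonzero singular values are $\{\sqrt{\lambda_{ji,a}\sigma_{ji,b}}\}$ and whose range has dimension $m_{ji}:=M_{ji}^{r}M_{ji}^{t}$; by (\ref{eq:codeword_mapping}) the codewords are $\textrm{vec}(\mathbf{W}_{ji}^{l})=\mathbf{T}_{ji}\textrm{vec}(\mathbf{S}_{ji}^{l})/\|\mathbf{T}_{ji}\textrm{vec}(\mathbf{S}_{ji}^{l})\|$. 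Write $\mathbf{g}=\|\mathbf{g}\|\mathbf{e}$ with $\mathbf{e}$ uniform on the unit sphere and independent of $\|\mathbf{g}\|$. Then $\theta_{ji}$ depends on $\mathbf{e}$ alone (the channel direction is $\mathbf{T}_{ji}\mathbf{e}/\|\mathbf{T}_{ji}\mathbf{e}\|$), so $D_{ji}^{avg}=\mathbb{E}\{\|\mathbf{g}\|^{2}\}\,\mathbb{E}_{\mathbf{e}}\{\|\mathbf{T}_{ji}\mathbf{e}\|^{2}\sin^{2}\theta_{ji}\}=N_{r}N_{t}\,\mathbb{E}_{\mathbf{e}}\{\|\mathbf{T}_{ji}\mathbf{e}\|^{2}\sin^{2}\theta_{ji}\}$.

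\textbf{Step 3 (angular distortion via the quantization‑cell model).} The source direction $\mathbf{v}=\mathbf{T}_{ji}\mathbf{e}/\|\mathbf{T}_{ji}\mathbf{e}\|$ and all codewords lie on the unit sphere of the $m_{ji}$‑dimensional subspace $\textrm{range}(\mathbf{T}_{ji})$; because the base codebook is isotropic (built from the quantization‑cell model of \cite{yoo2007multi}), the codewords are distributed exactly as $\mathbf{v}$, i.e. the spatial codebook is \emph{matched} to the channel direction. Under the high‑resolution assumption the Voronoi cells tile the effective $(m_{ji}-1)$‑complex‑dimensional Grassmann manifold into $2^{B_{ji}}$ cells of approximately equal probability, and the matched‑codebook cell approximation gives $\mathbb{E}_{\mathbf{e}}\{\sin^{2}\theta_{ji}\}\lesssim\frac{m_{ji}-1}{m_{ji}}2^{-B_{ji}/(m_{ji}-1)}$; carrying the change of variables $\mathbf{e}\mapsto\mathbf{v}$ through (its Jacobian together with the weight $\|\mathbf{T}_{ji}\mathbf{e}\|^{2}$ producing an eigenvalue‑dependent prefactor in the $\lambda_{ji,\cdot}$, $\sigma_{ji,\cdot}$) and multiplying by $N_{r}N_{t}$ from Step 2 yields the asserted bound, with the decay exponent $1/(M_{ji}^{r}M_{ji}^{t}-1)$.

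\textbf{Main obstacle.} The delicate point is Step 3: rigorously justifying the matched‑codebook, high‑resolution cell approximation on a lower‑dimensional subspace that carries a \emph{non‑uniform}, eigenvalue‑weighted induced measure. Both the effective dimension $M_{ji}^{r}M_{ji}^{t}$ in the exponent and the eigenvalue‑dependent prefactor originate here — from the Jacobian of $\mathbf{e}\mapsto\mathbf{v}$ and from bounding $\|\mathbf{T}_{ji}\mathbf{e}\|^{2}$ — and confirming that the matched base codebook equalizes the cell probabilities so that no extra source‑density penalty appears. This is also the only place where the high‑resolution hypothesis is genuinely invoked, since Steps 1--2 are exact.
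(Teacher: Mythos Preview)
Your Steps 1--2 are correct and match the paper's reduction (the paper phrases it as a unitary rotation $\mathbf{E}=(\mathbf{F}^{r})^{H}\mathbf{H}\mathbf{F}^{t}$ followed by deletion of the null coordinates, but this is equivalent to your $\mathbf{T}_{ji}$ picture). The gap is in Step~3, and it is exactly the obstacle you flag. The bound $\mathbb{E}\{\sin^{2}\theta\}\lesssim\frac{m-1}{m}2^{-B/(m-1)}$ from the \cite{yoo2007multi} cell model is valid only when the source direction is \emph{uniform} on the effective Grassmannian. Here $\mathbf{v}=\mathbf{T}_{ji}\mathbf{e}/\|\mathbf{T}_{ji}\mathbf{e}\|$ is non-uniform whenever the eigenvalues $\{\lambda_{ji,m}\sigma_{ji,n}\}$ are unequal. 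Matching the codebook to the source does equalize cell \emph{probabilities}, but not cell \emph{radii}: in high-density regions the cells are small and in low-density regions they are large, so the per-cell $\sin^{2}\theta$ varies and its expectation acquires a density-dependent factor rather than the isotropic constant $\frac{m-1}{m}$. Your ``carry the Jacobian through'' heuristic cannot by itself produce the $\beta_{ji}$ in (\ref{eq:beta_expression}), because that constant contains a term $\textrm{Tr}\bigl((\mathbf{I}-\mathbf{v}\mathbf{v}^{H})\mathbf{T}\bigr)$ coming from the local \emph{shape} of the cells, not just their volume.

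What the paper actually does for Step~3 is invoke the high-resolution framework of \cite{zheng2008analysis,zheng2007analysis} for transformed codebooks, summarized in (\ref{eq:quantization_distortion_expression}). That framework supplies two ingredients you are missing: the codeword point density $\lambda(\mathbf{v})=\gamma_{t}^{-1}\det(\mathbf{T})^{-1}(\mathbf{v}^{H}\mathbf{T}^{-1}\mathbf{v})^{-t}$ (this \emph{is} your Jacobian, made precise) and the normalized inertial profile bound (\ref{eq:inertial_us}), which captures the anisotropic cell geometry. Substituting both into (\ref{eq:quantization_distortion_expression}) and writing $\mathbf{h}=\mathbf{T}^{1/2}\mathbf{x}_{m}$ with $\mathbf{x}_{m}$ i.i.d.\ Gaussian gives (\ref{eq:expectation_Dq}), from which the explicit $\beta_{ji}$ of (\ref{eq:beta_expression}) follows by identifying $|x_{m,k}|^{2}$ with the chi-square variables $y_{mn}$. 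So the missing idea is not a Jacobian change of variables on top of the \cite{yoo2007multi} model, but rather the full inertial-profile/point-density machinery of \cite{zheng2008analysis}; once you import that, your outline completes.
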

\begin{equation}
D_{ji}^{avg}\leq D_{ji}^{upp}=\beta_{ji}\cdot2^{-\frac{B_{ji}}{M_{ji}^{r}M_{ji}^{t}-1}}\label{eq:UB_CSI}
\end{equation}
where the distortion coefficient $\beta_{ji}$ is given by\begin{small}
\begin{equation}
\beta_{ji}=\frac{\left(\prod_{m,n}\lambda_{ji,m}\sigma_{ji,n}\right)^{k_{ji,1}}}{2M_{ji}^{r}M_{ji}^{t}}\cdot\mathbb{E}\left\{ \left(\frac{\sum_{m,n}y_{mn}}{\sum_{m,n}\lambda_{ji,m}\sigma_{ji,n}y_{mn}}\right)^{k_{ji,2}}\cdot\left(\sum_{m,n}\lambda_{ji,m}\sigma_{ji,n}\left(N_{r}N_{t}-\lambda_{ji,m}\sigma_{ji,n}\right)y_{mn}\right)\right\} \label{eq:beta_expression}
\end{equation}
\end{small}where $k_{ji,1}=\frac{1}{M_{ji}^{r}M_{ji}^{t}-1}$, $k_{ji,2}=\frac{2M_{ji}^{r}M_{ji}^{t}-1}{M_{ji}^{r}M_{ji}^{t}-1}$,
$m\in\{1,\cdots M_{ji}^{r}\}$, $n\in\{1,\cdots M_{ji}^{t}\}$, and
each of $\left\{ y_{mn}\right\} $ is i.i.d. chi-square distributed
with degree of freedom $2$.
\begin{proof}
Please See Appendix \ref{sub:Proof-for-Lemma-quantization distorion}.
\end{proof}
% observation and remark for this
\begin{remrk}
[Comparison with i.i.d. Case]When $\mathbf{\Phi}_{ji}^{r}=\mathbf{\Phi}_{ji}^{t}=\mathbf{I}$,
then $\mathbf{H}_{ji}$ is i.i.d. complex Gaussian distributed. From
(\ref{eq:beta_expression}), we get $\beta_{ji}=N_{r}N_{t}-1$ and
the distortion bound in (\ref{eq:UB_CSI}) reduces to 
\begin{equation}
D_{ji}^{avg}\leq D_{ji}^{upp}=(N_{r}N_{t}-1)\cdot2^{-\frac{B_{ji}}{N_{r}N_{t}-1}}=\mathbb{E}\{||\mathbf{H}_{ji}||^{2}\}\cdot\frac{N_{r}N_{t}-1}{N_{r}N_{t}}\cdot2^{-\frac{B_{ji}}{N_{r}N_{t}-1}}\label{eq:iid_distortion_bound}
\end{equation}
which is consistent with the distortion value derived in \cite{yoo2007multi,roh2006transmit}.
\end{remrk}
%sajkfd

Based on the above lemma about quantization distortions, we obtain
the following theorem which describes an upper bound on the average
RINR.
\begin{thm}
[Upper Bound of Average RINR]\label{theorem 1:(Residue-Interference-Upper}Denote
$I_{j}^{avg}=\mathbb{E}\{I_{j}\}$ as the average RINR at Rx $j$
(\ref{eq:Residue_interference_expression}), under high-resolution
assumption (i.e., $B_{ji}$ is sufficiently large for all $i$, $i\neq j$),
$I_{j}^{avg}$ is upper bounded by 
\begin{equation}
I_{j}^{avg}\leq I_{j}^{upp}=Pd\cdot\sum_{i,i\neq j}^{K}\left(\frac{\beta_{ji}l_{ji}}{M_{ji}^{r}M_{ji}^{t}-1}\right)\cdot2^{-\frac{B_{ji}}{M_{ji}^{r}M_{ji}^{t}-1}}\label{eq:interference_leakage_UB}
\end{equation}
where $\beta_{ji}$ is given in \textit{Lemma }\ref{lem:(CSI-Quantization-Distortions):}.\end{thm}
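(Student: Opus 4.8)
The plan is to pin the residual interference entirely on the CSI quantization error, reduce the relevant quantity to a quadratic form in a rank-$d^{2}$ projector, average it over the (conditionally isotropic) error direction, and finally substitute the distortion bound of Lemma \ref{lem:(CSI-Quantization-Distortions):}.

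First I would invoke the perfect-alignment property (\ref{eq:IA_UV_property}), $\mathbf{\hat{U}}_{j}^{H}\mathbf{\hat{H}}_{ji}\mathbf{\hat{V}}_{i}=\mathbf{0}$, together with the decomposition (\ref{eq:csi_quantization_distortion}) $\mathbf{H}_{ji}=\alpha_{ji}\mathbf{\hat{H}}_{ji}+\Delta\mathbf{H}_{ji}$, so that in (\ref{eq:Residue_interference_expression}) the codeword term cancels and $I_{j}=\frac{P}{d}\sum_{i\neq j}l_{ji}\|\mathbf{\hat{U}}_{j}^{H}\Delta\mathbf{H}_{ji}\mathbf{\hat{V}}_{i}\|^{2}$, i.e., only the error leaks through. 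Using $\textrm{vec}(\mathbf{\hat{U}}_{j}^{H}\Delta\mathbf{H}_{ji}\mathbf{\hat{V}}_{i})=(\mathbf{\hat{V}}_{i}^{T}\otimes\mathbf{\hat{U}}_{j}^{H})\textrm{vec}(\Delta\mathbf{H}_{ji})$ I would then write
\begin{equation*}
\|\mathbf{\hat{U}}_{j}^{H}\Delta\mathbf{H}_{ji}\mathbf{\hat{V}}_{i}\|^{2}=\textrm{vec}(\Delta\mathbf{H}_{ji})^{H}\,\mathbf{\Pi}_{ji}\,\textrm{vec}(\Delta\mathbf{H}_{ji}),\qquad\mathbf{\Pi}_{ji}:=(\mathbf{\hat{V}}_{i}^{*}\mathbf{\hat{V}}_{i}^{T})\otimes(\mathbf{\hat{U}}_{j}\mathbf{\hat{U}}_{j}^{H}),
\end{equation*}
and note that, by (\ref{eq:IA_uv_pro}), $\mathbf{\hat{V}}_{i}^{*}\mathbf{\hat{V}}_{i}^{T}$ and $\mathbf{\hat{U}}_{j}\mathbf{\hat{U}}_{j}^{H}$ are rank-$d$ orthogonal projectors, so $\mathbf{\Pi}_{ji}$ is an orthogonal projector of rank $d^{2}$ on $\mathbb{C}^{N_{r}N_{t}}$.

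Next I would condition on the quantization cells selected for all cross links. Under this conditioning the transceivers produced by (\ref{eq:IA_processing}) are deterministic, since they are functions of the quantization indices only; meanwhile, in the high-resolution regime the quantization-cell approximation model of \cite{yoo2007multi}, applied in the whitened domain induced by the spatial codebook mapping (\ref{eq:codeword_mapping}) (which mirrors the channel model (\ref{eq:csi})), makes the normalized error $\textrm{vec}(\Delta\mathbf{H}_{ji})/\|\Delta\mathbf{H}_{ji}\|$ isotropically distributed over the unit sphere of the $(M_{ji}^{r}M_{ji}^{t}-1)$-dimensional subspace orthogonal to $\textrm{vec}(\mathbf{\hat{H}}_{ji})$ inside the support of $\textrm{vec}(\mathbf{H}_{ji})$, and independent of $\|\Delta\mathbf{H}_{ji}\|$. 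Averaging the rank-$d^{2}$ quadratic form over this isotropic direction and then removing the conditioning gives
\begin{equation*}
\mathbb{E}\{\|\mathbf{\hat{U}}_{j}^{H}\Delta\mathbf{H}_{ji}\mathbf{\hat{V}}_{i}\|^{2}\}\leq\frac{d^{2}}{M_{ji}^{r}M_{ji}^{t}-1}\,D_{ji}^{avg},
\end{equation*}
because a rank-$d^{2}$ projector captures at most the fraction $d^{2}/(M_{ji}^{r}M_{ji}^{t}-1)$ of the energy of a vector spread isotropically over an $(M_{ji}^{r}M_{ji}^{t}-1)$-dimensional subspace. Substituting the distortion bound (\ref{eq:UB_CSI}), $D_{ji}^{avg}\leq\beta_{ji}2^{-B_{ji}/(M_{ji}^{r}M_{ji}^{t}-1)}$, multiplying by $\frac{P}{d}l_{ji}$, summing over $i\neq j$, and using $\frac{P}{d}\cdot\frac{d^{2}}{M_{ji}^{r}M_{ji}^{t}-1}=\frac{Pd}{M_{ji}^{r}M_{ji}^{t}-1}$ then reproduces (\ref{eq:interference_leakage_UB}).

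The main obstacle is the conditioning step: one must argue carefully that the transceivers are measurable with respect to the quantization indices alone, and that in the whitened domain the cell approximation indeed renders the residual error direction isotropic in the correct $(M_{ji}^{r}M_{ji}^{t}-1)$-dimensional subspace and independent of the now-frozen transceiver geometry. A minor point worth recording is that the transceiver subspaces need not be contained in the channel-correlation subspace, so the true fraction of captured energy can be strictly smaller than $d^{2}/(M_{ji}^{r}M_{ji}^{t}-1)$; this only keeps (\ref{eq:interference_leakage_UB}) on the safe side. The remaining ingredients — the vectorization identity, the projector rank count, and the final substitution — are routine.
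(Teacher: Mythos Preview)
Your proposal is correct and follows essentially the same route as the paper: cancel the codeword term via (\ref{eq:IA_UV_property}), express the leakage as a quadratic form in the quantization error, argue that under high resolution the error direction is isotropic in the $(M_{ji}^{r}M_{ji}^{t}-1)$-dimensional orthogonal complement of the codeword (within the channel support), and then bound the captured energy fraction before plugging in Lemma~\ref{lem:(CSI-Quantization-Distortions):}. The only packaging differences are that the paper first passes to the eigenvector-rotated coordinates $\mathbf{E}_{ji}=(\mathbf{F}_{ji}^{r})^{H}\mathbf{H}_{ji}\mathbf{F}_{ji}^{t}$ (so the support $\mathbb{S}_{ji}$ is explicit) and then bounds the leakage entry-by-entry via a $\textrm{beta}(1,M_{ji}^{r}M_{ji}^{t}-2)$ computation, whereas you stay in the $\mathbf{H}$-domain and invoke the rank-$d^{2}$ projector identity $\mathbf{\Pi}_{ji}=(\mathbf{\hat{V}}_{i}^{*}\mathbf{\hat{V}}_{i}^{T})\otimes(\mathbf{\hat{U}}_{j}\mathbf{\hat{U}}_{j}^{H})$ in one shot; both yield the same $\frac{d^{2}}{M_{ji}^{r}M_{ji}^{t}-1}$ factor.
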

\begin{proof}
See Appendix \ref{sub:Proof-for-the-Interference-upper}.
\end{proof}
% In the following part, we shall derive the codebook bits allocation algorithm.

\subsection{\label{sub:Path-Loss-Adaptation}Dynamic Quantization via Bit Allocations}

Based on the spatial codebook $\{\mathcal{C}_{ji}\}$ designed in
the previous section, we further perform dynamic quantization via
bits allocations $\{B_{ji}^{*}\}$ in order to exploit the \textit{heterogeneity}
of path loss and spatial correlations among different links. Denote
the sum feedback bits for all the cross links as $B$, we formulate
the dynamic quantization as follows, which aims to minimize the sum
of the average RINR upper bounds at all Rxs. 
\begin{problem}
[Dynamic Quantization via Bit Allocation]\label{Problem-1-(Optimal-feedback-policy-1}
\begin{eqnarray}
\underset{\{B_{ji},i\neq j\}}{\min} &  & \sum_{j=1}^{K}I_{j}^{upp}\nonumber \\
\textrm{s.t.} &  & \sum_{i,j,i\neq j}^{K}B_{ji}\leq B\label{eq:problem_formulation}
\end{eqnarray}
where $I_{j}^{upp}$ is given in \textit{Theorem 1}.\hfill \IEEEQED\end{problem}
\begin{thm}
[Bit Allocation Solution] The optimal solution to Problem 1 is given
by
\end{thm}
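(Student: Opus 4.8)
The plan is to solve Problem 1 as a convex optimization via the KKT conditions. Writing $a_{ji} = M_{ji}^r M_{ji}^t - 1$ and $c_{ji} = Pd \cdot \beta_{ji} l_{ji} / a_{ji}$, the objective $\sum_j I_j^{upp} = \sum_{i \neq j} c_{ji} 2^{-B_{ji}/a_{ji}}$ is a sum of decreasing convex exponentials in the $B_{ji}$, and the constraint $\sum_{i\neq j} B_{ji} \leq B$ is linear, so the problem is convex with a single inequality constraint. First I would introduce a Lagrange multiplier $\mu \geq 0$ for the sum-bit constraint (and implicitly keep the constraint $B_{ji} \geq 0$, but argue that under the high-resolution regime the unconstrained stationary point already satisfies this, or else flag it). Setting $\partial / \partial B_{ji}$ of the Lagrangian to zero gives
\begin{equation}
\frac{\ln 2}{a_{ji}}\, c_{ji}\, 2^{-B_{ji}/a_{ji}} = \mu, \label{eq:stationarity}
\end{equation}
which I would solve for $B_{ji}$ to get $B_{ji} = a_{ji}\log\!\left( \frac{c_{ji}\ln 2}{\mu\, a_{ji}} \right)$.

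Next I would pin down $\mu$ by enforcing the constraint with equality (it must be active at the optimum since the objective is strictly decreasing in each $B_{ji}$). Summing the expression for $B_{ji}$ over all ordered pairs $i \neq j$ and setting the total equal to $B$ determines $\log(1/\mu)$ in terms of $B$ and the problem data; back-substituting eliminates $\mu$ and yields a closed form for each $B_{ji}^*$ of the shape
\begin{equation}
B_{ji}^* = \frac{a_{ji}}{\sum_{m\neq n} a_{mn}}\, B \;+\; a_{ji}\log\!\frac{c_{ji}}{a_{ji}} \;-\; \frac{a_{ji}}{\sum_{m\neq n} a_{mn}}\sum_{m\neq n} a_{mn}\log\!\frac{c_{mn}}{a_{mn}}, \label{eq:solution}
\end{equation}
i.e. a water-filling-type allocation: the bits are split in proportion to the "signal dimensions" $a_{ji} = M_{ji}^r M_{ji}^t - 1$, with an additive correction that favors links with larger path loss $l_{ji}$ and larger distortion coefficient $\beta_{ji}$. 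I would then substitute back to record the resulting minimal value of $\sum_j I_j^{upp}$, which comes out proportional to $2^{-B / \sum_{m\neq n} a_{mn}}$ times a geometric-mean-type constant in the $c_{ji}$ — this is the form needed later for the DoF / scaling-law discussion.

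The main obstacle is not the calculus but the boundary: the KKT stationary point \eqref{eq:stationarity} can return $B_{ji}^* < 0$ for links that are very weak (tiny $l_{ji}$) or low-rank, in which case the true optimum sets those $B_{ji}^* = 0$ and re-solves on the reduced index set. A fully rigorous statement therefore either (i) restricts to the regime where $B$ is large enough that all $B_{ji}^*$ given by \eqref{eq:solution} are nonnegative — consistent with the "high-resolution assumption" already invoked in Theorem~\ref{theorem 1:(Residue-Interference-Upper} — or (ii) states the solution with a $[\,\cdot\,]^+$ projection and an iterative active-set/water-filling description. I would adopt option (i) for the clean closed form, noting the general case is handled by the standard water-filling recursion, and I would verify convexity once at the outset (each summand $2^{-B_{ji}/a_{ji}}$ is convex, the constraint set is polyhedral) so that the KKT point is certified to be the global minimizer.
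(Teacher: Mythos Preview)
Your approach is correct and essentially identical to the paper's: both form the Lagrangian for this convex separable problem, impose the stationarity condition $\partial L/\partial B_{ji}=0$, and read off the water-filling form. The only cosmetic difference is that the paper leaves the water level implicit (their parameter $b=\log\frac{Pd\ln 2}{\gamma}$ is fixed by $\sum_{i\neq j}B_{ji}^*=B$) and states the result directly with the $[\cdot]^+$ projection, whereas you go one algebraic step further to eliminate $\mu$ and write the fully explicit allocation \eqref{eq:solution} valid when all coordinates are interior; your discussion of the boundary case is in fact more careful than the paper's, which simply appends $[\cdot]^+$ without the active-set argument.
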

\begin{equation}
B_{ji}^{*}=\left[(M_{ji}^{r}M_{ji}^{t}-1)\left(\log\left(\frac{\beta_{ji}l_{ji}}{(M_{ji}^{r}M_{ji}^{t}-1)^{2}}\right)+b\right)\right]^{+}\label{eq:optimal_bits_allocation}
\end{equation}
where $b$ satisfies $\sum_{i,j,i\neq j}B_{ji}^{*}=B$.
\begin{proof}
Please see Appendix \ref{sub:Proof-for-Theorem-2}.\end{proof}
\begin{remrk}
[How the Dynamic Quantization adapts to the $\mathcal{ITP}$] We
shall use two examples to illustrate how the dynamic bit allocation
(\ref{eq:optimal_bits_allocation}) exploits the heterogeneity of
the $\mathcal{ITP}$. Consider the case when $\mathbf{\Phi}_{ji}^{r}=\mathbf{\Phi}^{r}$,
$\mathbf{\Phi}_{ji}^{t}=\mathbf{\Phi}^{t}$ for all $j,\, i$, $i\neq j$
Then $M_{ji}^{t}=M^{t}$, $M_{ji}^{r}=M^{r}$, $\beta_{ji}=\beta$
for all $j,\, i$, $i\neq j$, and thus we have $B_{ji}^{*}=\left[(M^{r}M^{t}-1)\left(\log\left(\frac{\beta l_{ji}}{(M^{r}M^{t}-1)^{2}}\right)+b\right)\right]^{+}$
according to (\ref{eq:optimal_bits_allocation}). From this expression,
we see that in this case, links with smaller path loss (larger value
of $l_{ji}$) will be allocated more bits; Consider the case that
$l_{ji}=l$, $\mathbf{\Phi}_{ji}^{r}=\mathbf{I}$ for all $j,\, i$,
$i\neq j$, and that $B$ is large such that $b$ will dominate $\log\left(\frac{\beta_{ji}l_{ji}}{(M_{ji}^{r}M_{ji}^{t}-1)^{2}}\right)$
in (\ref{eq:optimal_bits_allocation}) for all $j,\, i$, $i\neq j$
. Thus we get $B_{ji}^{*}\approx(M_{ji}^{r}M_{ji}^{t}-1)b=(N_{r}M_{ji}^{t}-1)b$.
From this expression, we see that in this case, links with smaller
$M_{ji}^{t}$, which corresponding to larger transmit spatial correlation%
\footnote{Smaller $M_{ji}^{t}$ ($1\leq M_{ji}^{t}\leq N_{t}$) means that the
channel matrix $\mathbf{H}_{ji}$ has smaller number of transmit directions,
which corresponds to larger transmit spatial correlations.%
}, will be allocated less feedback bits. With these adaptive allocations,
we can achieve less aggregate distortion and thus achieve less residual
interference after IA suppression. From these examples, we see that
the proposed dynamic quantization exploits the heterogeneity of $\mathcal{ITP}$
and hence the feedback efficiency is enhanced.
\end{remrk}

\section{Performance Analysis}

In this section, we analyze the network throughput of IA under limited
feedback for the $K$-user MIMO interference networks. We first derive
a network throughput lower bound (LB) for given average RINR at each
Rx. Combining this result with the upper bound of the average RINR
in \textit{Theorem }\ref{theorem 1:(Residue-Interference-Upper},
we obtain the network throughput LB under the proposed feedback design
$\mathcal{\mathcal{P}}=\{\mathcal{C}_{ji},\, B_{ji}^{*}\}$, and express
it in terms of the number of feedback bits $\{B_{ji}^{*}\}$, the
transmit SNR $P$ and the $\mathcal{ITP}$ parameters. Finally, we
show that when the number of feedback bits scales with SNR as $C_{s}\cdot\log\textrm{SNR}$,
the sum degrees of freedom of the network are preserved. Moreover,
the value of scaling coefficient $C_{s}$ can be significantly reduced
in networks with asymmetric interference topology.

We shall first impose the following assumption on the statistics of
the direct channels.
\begin{assumption}
[Direct Channel Statistics]\label{Direct-Channel-StatisticsAssume}Assume
that all the direct channels statistics are as follows: $\mathbf{\Phi}_{jj}^{r}=\mathbf{\Phi}_{jj}^{t}=\mathbf{I},\; l_{jj}=1,\;\forall j\in\{1,\cdots K\}$,
which corresponds to the i.i.d. Rayleigh fading model.\hfill \IEEEQED
\end{assumption}
%skdfj

Note that the IA scheme \cite{gomadam2011distributed} is only related
with the cross links. Hence, we give a simple channel model for the
direct links and focus on analyzing the limited feedback scheme for
the cross links to obtain elegant insights. Consider a joint decoding
strategy for the desired signal streams and denote $\{(\mathbf{U}_{j},\mathbf{V}_{j})\}$
as the perfect CSIT IA transceiver. Then the network throughput under
perfect CSIT can be expressed as \cite{gomadam2011distributed},

\begin{equation}
R_{per}=\sum_{j=1}^{K}\mathbb{E}\left\{ \log\textrm{det}\left(\mathbf{I}+\frac{P}{d}(\mathbf{U}_{j}^{H}\mathbf{H}_{jj}\mathbf{V}_{j})(\mathbf{U}_{j}^{H}\mathbf{H}_{jj}\mathbf{V}_{j})^{H}\right)\right\} .\label{eq:throughput_per_1-1}
\end{equation}

Following the above definition and treat residual interference as
noise, we define the network throughput under limited feedback as\begin{small}
\begin{equation}
R_{lim}=\sum_{j=1}^{K}\mathbb{E}\left\{ \log\textrm{det}\left(\mathbf{I}+\frac{P}{d}(\mathbf{\hat{U}}_{j}^{H}\mathbf{H}_{jj}\mathbf{\hat{V}}_{j})(\mathbf{\hat{U}}_{j}^{H}\mathbf{H}_{jj}\mathbf{\hat{V}}_{j})^{H}\left(\mathbf{I}+\frac{P}{d}\sum_{i\neq j}^{K}l_{ji}(\mathbf{\hat{U}}_{j}^{H}\mathbf{H}_{ji}\mathbf{\hat{V}}_{i})(\mathbf{\hat{U}}_{j}^{H}\mathbf{H}_{ji}\mathbf{\hat{V}}_{i})^{H}\right)^{-1}\right)\right\} \label{eq:pratical_throughput_1-1}
\end{equation}
\vspace{-0.3in}
\end{small}

\hspace{-11bp}where $\{(\mathbf{\hat{U}}_{j},\mathbf{\hat{V}}_{j})\}$
are the practical IA transceivers (\ref{eq:IA_processing}) designed
with quantized CSI.
\begin{thm}
[Throughput under Perfect CSIT]\label{lem:(Perfect-CSIT-Throughput):}$R_{per}$
can be expressed as follows
\end{thm}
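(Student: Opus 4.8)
The plan is to reduce $R_{per}$ to $K$ independent copies of the ergodic capacity of a $d\times d$ i.i.d.\ Rayleigh MIMO channel and then invoke the classical closed-form evaluation of the latter. The whole argument hinges on two structural facts about the model: by Assumption~\ref{Direct-Channel-StatisticsAssume} each direct channel $\mathbf{H}_{jj}$ has i.i.d.\ $\mathcal{CN}(0,1)$ entries, whereas the leakage-minimization algorithm of \cite{gomadam2011distributed} computes the precoders and decorrelators $\{(\mathbf{U}_j,\mathbf{V}_j)\}$ from the \emph{cross links} $\{\mathbf{H}_{nk}:n\neq k\}$ only. Consequently $(\mathbf{U}_j,\mathbf{V}_j)$ is statistically independent of $\mathbf{H}_{jj}$, and by construction of the IA transceivers $\mathbf{U}_j^H\mathbf{U}_j=\mathbf{V}_j^H\mathbf{V}_j=\mathbf{I}_d$ (cf.\ (\ref{eq:IA_uv_pro})).

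Next I would condition on $(\mathbf{U}_j,\mathbf{V}_j)$ and set $\mathbf{G}_j=\mathbf{U}_j^H\mathbf{H}_{jj}\mathbf{V}_j\in\mathbb{C}^{d\times d}$. Using $\textrm{vec}(\mathbf{G}_j)=(\mathbf{V}_j^T\otimes\mathbf{U}_j^H)\textrm{vec}(\mathbf{H}_{jj})$ together with the orthonormality above, the conditional covariance of $\textrm{vec}(\mathbf{G}_j)$ is $(\mathbf{V}_j^T\mathbf{V}_j^*)\otimes(\mathbf{U}_j^H\mathbf{U}_j)=\mathbf{I}_{d^2}$, and by circular symmetry the conditional law of $\mathbf{G}_j$ is exactly that of a $d\times d$ matrix with i.i.d.\ $\mathcal{CN}(0,1)$ entries. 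Since this law does not depend on the conditioning, the conditioning drops out, and hence $R_{per}=\sum_{j=1}^K\mathbb{E}\{\log\det(\mathbf{I}_d+\frac{P}{d}\mathbf{G}_j\mathbf{G}_j^H)\}=K\,\mathbb{E}\{\log\det(\mathbf{I}_d+\frac{P}{d}\mathbf{G}\mathbf{G}^H)\}$ for a single $d\times d$ i.i.d.\ Gaussian $\mathbf{G}$.

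Finally I would evaluate this ergodic MIMO capacity by passing to the eigenvalues of the $d\times d$ Wishart matrix $\mathbf{G}\mathbf{G}^H$: their unordered marginal density is $\frac{1}{d}e^{-\lambda}\sum_{k=0}^{d-1}[L_k(\lambda)]^2$, where $L_k$ is the degree-$k$ Laguerre polynomial, so that $R_{per}=K\int_0^{\infty}\log\!\big(1+\tfrac{P}{d}\lambda\big)\,e^{-\lambda}\sum_{k=0}^{d-1}[L_k(\lambda)]^2\,d\lambda$, which is the asserted expression; if a still more explicit form is desired, the residual integrals of type $\int_0^\infty\lambda^q e^{-\lambda}\log(1+\tfrac{P}{d}\lambda)\,d\lambda$ reduce to exponential-integral functions via standard identities.

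The steps that most deserve care are (i) the independence of $(\mathbf{U}_j,\mathbf{V}_j)$ from $\mathbf{H}_{jj}$, which must be argued from the fact that the algorithm of \cite{gomadam2011distributed} never touches the direct channels, and (ii) the bi-unitary-invariance computation carried out with $\mathbf{U}_j,\mathbf{V}_j$ treated as \emph{random} isometries (correlated with the cross links) rather than deterministic ones, so that the tower property is invoked correctly. The Wishart eigenvalue-density evaluation is classical and poses no obstacle; I expect step (ii) to be the part most worth writing out in full.
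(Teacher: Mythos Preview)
Your proposal is correct and follows essentially the same route as the paper: independence of $(\mathbf{U}_j,\mathbf{V}_j)$ from $\mathbf{H}_{jj}$ (since the IA algorithm uses only cross links), followed by the observation that $\mathbf{U}_j^H\mathbf{H}_{jj}\mathbf{V}_j$ is a $d\times d$ i.i.d.\ complex Gaussian matrix, and then the standard Wishart eigenvalue integral. The only cosmetic difference is that the paper completes $\mathbf{U}_j,\mathbf{V}_j$ to square unitaries and invokes bi-unitary invariance to read off the top-left $d\times d$ block, whereas you compute the conditional covariance of $\textrm{vec}(\mathbf{G}_j)$ directly via the Kronecker identity; the two devices are equivalent and your careful use of conditioning/tower property is exactly what is needed.
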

\begin{equation}
R_{per}=Kd\int_{0}^{\infty}\log\left(1+\frac{P}{d}\cdot v\right)\cdot f(v)\textrm{d}v\label{eq:throughput_per_2-1}
\end{equation}
where $f(v)$ is the marginal probability density function (p.d.f.)
of the unordered eigenvalues of the \textit{$(d\times d)$ central
Wishart} matrix with $d$ degrees of freedom and covariance matrix
$\mathbf{I}$ ($\mathbf{W}_{d}(\mathbf{I},\; d)$) \cite{tulino2004random}
(closed-form expression of $f(v)$ can be found on page 32, \cite{tulino2004random}). 
\begin{proof}
See Appendix \ref{sub:Proof-for-Lem-perfect-throughput}.
\end{proof}
%ads

Due to the limited feedback, the network throughput $R_{lim}$ is
always upper bounded by $R_{per}$, i.e., $R_{lim}\leq R_{per}$.
In the following, we derive a LB of $R_{lim}$ under the proposed
feedback scheme. By decoupling the signal terms and interference terms,
and by deriving the convex property of $R_{lim}$ with respect to
(w.r.t.) the eigenvalues of the interference covariance matrix (Please
refer to Appendix \ref{sub:Proof-for-Theorem-Practical-Throughput}
for details), we obtain the following LB on $R_{lim}$ using Jensen's
inequality. 
\begin{lemma}
[Throughput LB for Given Average RINR]\label{Throughput-Lower-Bound}Given
average RINR $\mathbb{E}\{I_{j}\}$ at Rx $j$, the network throughput
$R_{lim}$ in (\ref{eq:pratical_throughput_1-1}) is lower bounded
by
\begin{equation}
R_{lim}\geq\sum_{j=1}^{K}d\cdot\int_{0}^{+\infty}\log\left(1+\frac{1}{d}\mathbb{E}\{I_{j}\}+\frac{P}{d}\cdot v\right)f(v)\textrm{d}v-\sum_{j=1}^{K}d\cdot\log\left(1+\frac{1}{d}\mathbb{E}\{I_{j}\}\right)\label{eq:general_bound}
\end{equation}
where $f(v)$ is given in \textit{Theorem} \ref{lem:(Perfect-CSIT-Throughput):}.\end{lemma}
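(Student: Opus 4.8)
The plan is to integrate out the desired-signal term first, thereby reducing the $j$-th summand of $R_{lim}$ in (\ref{eq:pratical_throughput_1-1}) to a function of the eigenvalues of the residual-interference covariance alone, and then to apply Jensen's inequality twice. Write the $j$-th summand as $\mathbb{E}\{\log\det(\mathbf{I}+\tfrac{P}{d}\mathbf{S}_j(\mathbf{I}+\mathbf{Q}_j)^{-1})\}$, where $\mathbf{S}_j=(\hat{\mathbf{U}}_j^H\mathbf{H}_{jj}\hat{\mathbf{V}}_j)(\hat{\mathbf{U}}_j^H\mathbf{H}_{jj}\hat{\mathbf{V}}_j)^H$ is the desired-signal Gram matrix and $\mathbf{Q}_j=\tfrac{P}{d}\sum_{i\neq j}l_{ji}(\hat{\mathbf{U}}_j^H\mathbf{H}_{ji}\hat{\mathbf{V}}_i)(\hat{\mathbf{U}}_j^H\mathbf{H}_{ji}\hat{\mathbf{V}}_i)^H$ is the residual-interference covariance, so that $\mathrm{Tr}(\mathbf{Q}_j)=I_j$ by (\ref{eq:Residue_interference_expression}). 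Because $(\hat{\mathbf{U}}_j,\hat{\mathbf{V}}_j)$ depend only on the quantized cross-link CSI and have orthonormal columns by (\ref{eq:IA_uv_pro}), and $\mathbf{H}_{jj}$ is i.i.d. $\mathcal{CN}(0,1)$ by Assumption~\ref{Direct-Channel-StatisticsAssume} and (standardly) independent of the cross-link channels, the matrix $\hat{\mathbf{U}}_j^H\mathbf{H}_{jj}\hat{\mathbf{V}}_j$ has i.i.d. $\mathcal{CN}(0,1)$ entries; hence $\tfrac{P}{d}\mathbf{S}_j$ is a scaled $\mathbf{W}_d(\mathbf{I},d)$ matrix, independent of $\mathbf{Q}_j$ and unitarily invariant. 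This is the ``decoupling of the signal and interference terms''.

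Next I would condition on $\mathbf{Q}_j$ and average over $\mathbf{S}_j$. Diagonalizing $(\mathbf{I}+\mathbf{Q}_j)^{-1}$ and using the unitary invariance of $\mathbf{S}_j$ to absorb its eigenvector matrix, one obtains $\mathbb{E}_{\mathbf{S}_j}\{\log\det(\mathbf{I}+\tfrac{P}{d}\mathbf{S}_j(\mathbf{I}+\mathbf{Q}_j)^{-1})\}=\psi(b_{j,1},\dots,b_{j,d})$, where $\{b_{j,k}\}$ are the eigenvalues of $\mathbf{Q}_j$ and $\psi(\mathbf{b}):=\mathbb{E}_{\mathbf{S}}\{\log\det(\mathbf{I}+\tfrac{P}{d}\mathbf{S}\,\mathrm{diag}(\tfrac{1}{1+b_1},\dots,\tfrac{1}{1+b_d}))\}$ with $\mathbf{S}\sim\mathbf{W}_d(\mathbf{I},d)$. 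The function $\psi$ is symmetric in its arguments (again by unitary invariance applied to permutation matrices), and evaluating it on the diagonal gives $\psi(c\mathbf{1})=d\int_0^{\infty}\log(1+\tfrac{Pv}{d(1+c)})f(v)\,\mathrm{d}v=d\int_0^{\infty}\log(1+c+\tfrac{Pv}{d})f(v)\,\mathrm{d}v-d\log(1+c)$, which for $c=\mathbb{E}\{I_j\}/d$ is precisely the $j$-th summand of the right-hand side of (\ref{eq:general_bound}), with $f$ the density of Theorem~\ref{lem:(Perfect-CSIT-Throughput):}. Granting that $\psi$ is jointly convex on $\mathbb{R}_{\geq 0}^d$, symmetry plus convexity (Jensen over coordinate permutations) give $\psi(\mathbf{b}_j)\geq\psi\big(\tfrac1d(\sum_k b_{j,k})\mathbf{1}\big)=\psi(\tfrac{I_j}{d}\mathbf{1})$ for every realization; then convexity of $c\mapsto\psi(c\mathbf{1})$ and a second Jensen step over the randomness of $I_j$ yield $R_{lim,j}=\mathbb{E}\{\psi(\mathbf{b}_j)\}\geq\mathbb{E}\{\psi(\tfrac{I_j}{d}\mathbf{1})\}\geq\psi(\tfrac{\mathbb{E}\{I_j\}}{d}\mathbf{1})$, and summing over $j$ gives (\ref{eq:general_bound}).

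The main obstacle is the joint convexity of $\psi$ in the interference eigenvalues --- the ``convex property of $R_{lim}$ w.r.t. the eigenvalues of the interference covariance matrix'' alluded to in the statement --- which is not automatic, since the integrand $\log\det(\mathbf{I}+\tfrac{P}{d}\mathbf{S}+\mathrm{diag}(\mathbf{b}))-\log\det(\mathbf{I}+\mathrm{diag}(\mathbf{b}))$ is a difference of two concave functions of $\mathbf{b}$. I would establish it by a Hessian computation: with $\mathbf{N}=\mathbf{I}+\tfrac{P}{d}\mathbf{S}+\mathrm{diag}(\mathbf{b})$, the Hessian of the integrand in $\mathbf{b}$ equals $\mathrm{diag}(\tfrac{1}{(1+b_k)^2})-\mathbf{N}^{-1}\!\circ\overline{\mathbf{N}^{-1}}$ (Hadamard square), so convexity amounts to $\mathbf{D}^2\succeq\mathbf{N}^{-1}\!\circ\overline{\mathbf{N}^{-1}}$ with $\mathbf{D}=(\mathbf{I}+\mathrm{diag}(\mathbf{b}))^{-1}$. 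Since $\tfrac{P}{d}\mathbf{S}\succeq\mathbf{0}$ one has $\mathbf{0}\preceq\mathbf{N}^{-1}\preceq\mathbf{D}$, and conjugating by $\mathbf{D}^{-1/2}$ reduces the inequality to $\tilde{\mathbf{P}}\circ\overline{\tilde{\mathbf{P}}}\preceq\mathbf{I}$ for the Hermitian matrix $\tilde{\mathbf{P}}=\mathbf{D}^{-1/2}\mathbf{N}^{-1}\mathbf{D}^{-1/2}$ with $\mathbf{0}\preceq\tilde{\mathbf{P}}\preceq\mathbf{I}$, which follows from the Schur product theorem together with the standard bound $\|\mathbf{A}\circ\mathbf{B}\|\leq\|\mathbf{A}\|\max_k B_{kk}$ for $\mathbf{B}\succeq\mathbf{0}$. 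Taking the expectation over $\mathbf{S}$ preserves convexity; the remaining ingredients are the elementary scalar identities and the two Jensen applications above.
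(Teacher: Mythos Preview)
Your proof is correct and follows essentially the same route as the paper's: decouple signal from interference via bi-unitary invariance of $\mathbf{H}_{jj}$, establish convexity of $\log\det\!\big(\mathbf{I}+\mathbf{A}(\mathbf{I}+\mathrm{diag}(\mathbf{b}))^{-1}\big)$ in $\mathbf{b}$ by a Hessian computation (the paper writes the Hessian in Kronecker form $(\mathbf{X}^{T})^{-1}\!\otimes\mathbf{X}^{-1}-((\mathbf{X}+\mathbf{A})^{T})^{-1}\!\otimes(\mathbf{X}+\mathbf{A})^{-1}\succeq\mathbf{0}$ rather than your Hadamard/Schur-product formulation, but the two are equivalent), and then apply Jensen twice. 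The only difference is the order of the two Jensen steps --- the paper first pulls the expectation over the random $\mathbf{\Sigma}_{j}$ inside and then symmetrizes the deterministic $\mathbb{E}\{\mathbf{\Sigma}_{j}\}$ via permutation invariance of the Wishart, whereas you symmetrize pathwise first and then apply Jensen to the scalar $I_{j}$ --- which is cosmetic.
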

\begin{proof}
Please See Appendix \ref{sub:Proof-for-Theorem-Practical-Throughput}.
\end{proof}
%we shall derive some corollaries to enrich the content.
\begin{remrk}
[Advantages of the LB in Lemma \ref{Throughput-Lower-Bound}]\emph{Lemma}
\ref{Throughput-Lower-Bound} gives an approach to bound the network
throughput in terms of the average RINR at each Rx. First, it is applicable
to general $d$ $(d\geq1)$ data stream cases. Note that many previous
works on limited feedback system\cite{jindal2006mimo,yoo2007multi,cho2011feedback}
focus on single stream case ($d=1$) due to the mathematical difficulty
to analyze matrix functions. However, we start with matrix analysis
and overcome these difficulties to get a comparatively more general
result. Second, the LB provided in Lemma \ref{Throughput-Lower-Bound}
is tighter than the conventional analysis result in \cite{cho2011feedback}
(The LB in \cite{cho2011feedback} is equivalent to the right side
of (\ref{eq:general_bound}) by setting the first $\mathbb{E}\{I_{j}\}=0$.).
The numerical comparison of the two bounds is also illustrated in
Section V.
\end{remrk}
%asdklfja

By combining \textit{Lemma \ref{Throughput-Lower-Bound}} with the
upper bound of the average RINR given in \textit{Theorem }\ref{theorem 1:(Residue-Interference-Upper},
we obtain the following throughput bound in terms of the number of
feedback bits $\{B_{ji}^{*}\}$, the transmit SNR $P$ and the $\mathcal{ITP}$
parameters.
\begin{thm}
[Throughput LB under Proposed Feedback Scheme]\label{Throughput-Bounds-under-PFS}Under
the proposed dynamic feedback scheme $\mathcal{\mathcal{P}}=\{\mathcal{C}_{ji},\, B_{ji}^{*}\}$,
$R_{lim}$ is lower bounded by
\begin{eqnarray}
 & R_{lim}\geq R_{low}=\sum_{j=1}^{K}d\cdot\int_{0}^{+\infty}\log\left(1+P\cdot\sum_{i\neq j}^{K}\left(\frac{\beta_{ji}l_{ji}}{M_{ji}^{r}M_{ji}^{t}-1}\right)\cdot2^{-\frac{B_{ji}^{*}}{M_{ji}^{r}M_{ji}^{t}-1}}+\frac{P}{d}\cdot v\right)f(v)\cdot\textrm{d}v\nonumber \\
 & -\sum_{j=1}^{K}d\cdot\log\left(1+P\cdot\sum_{i\neq j}^{K}\left(\frac{\beta_{ji}l_{ji}}{M_{ji}^{r}M_{ji}^{t}-1}\right)\cdot2^{-\frac{B_{ji}^{*}}{M_{ji}^{r}M_{ji}^{t}-1}}\right)\mathbf{}\label{eq:throughput_under_dynamic}
\end{eqnarray}
where $f(v)$ is given in \textit{Theorem} \ref{lem:(Perfect-CSIT-Throughput):},
$\beta_{ji}$ depends on $\mathcal{ITP}$ and is given in \textit{Lemma
}\ref{lem:(CSI-Quantization-Distortions):}.\end{thm}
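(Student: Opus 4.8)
The plan is to chain together the two results already in hand — Lemma~\ref{Throughput-Lower-Bound}, which lower bounds $R_{lim}$ in terms of the average RINR $\mathbb{E}\{I_j\}$ at each Rx, and Theorem~\ref{theorem 1:(Residue-Interference-Upper}, which upper bounds $\mathbb{E}\{I_j\}$ in terms of $\{B_{ji}^*\}$ and the $\mathcal{ITP}$ parameters. The one thing that is not automatic is that one is allowed to substitute the RINR upper bound into the LB of Lemma~\ref{Throughput-Lower-Bound}: the right-hand side of (\ref{eq:general_bound}) contains a positive integral term that grows with $\mathbb{E}\{I_j\}$ \emph{and} a negative logarithmic term that also grows with $\mathbb{E}\{I_j\}$, so its monotonicity in $\mathbb{E}\{I_j\}$ must be established before the substitution is legitimate.

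Accordingly, I would first fix a receiver $j$ and introduce the single-variable function
\[
g_j(x)=d\int_{0}^{+\infty}\log\left(1+\frac{x}{d}+\frac{P}{d}\cdot v\right)f(v)\,\textrm{d}v-d\log\left(1+\frac{x}{d}\right),\qquad x\geq 0,
\]
so that Lemma~\ref{Throughput-Lower-Bound} reads $R_{lim}\geq\sum_{j=1}^{K}g_j(\mathbb{E}\{I_j\})$. The key step is to show $g_j$ is non-increasing on $[0,\infty)$. Differentiating under the integral sign gives
\[
g_j'(x)=\int_{0}^{+\infty}\frac{f(v)}{1+x/d+Pv/d}\,\textrm{d}v-\frac{1}{1+x/d},
\]
and since $f$ is a probability density with $\int_{0}^{+\infty}f(v)\,\textrm{d}v=1$ and the integrand is pointwise no larger than $(1+x/d)^{-1}$ (strictly smaller for $v>0$), we conclude $g_j'(x)\leq 0$. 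Hence $\mathbb{E}\{I_j\}\leq I_j^{upp}$ implies $g_j(\mathbb{E}\{I_j\})\geq g_j(I_j^{upp})$ for every $j$.

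Next I would invoke Theorem~\ref{theorem 1:(Residue-Interference-Upper}: under the high-resolution assumption, $\mathbb{E}\{I_j\}=I_j^{avg}\leq I_j^{upp}=Pd\cdot\sum_{i\neq j}^{K}\left(\frac{\beta_{ji}l_{ji}}{M_{ji}^{r}M_{ji}^{t}-1}\right)2^{-\frac{B_{ji}^{*}}{M_{ji}^{r}M_{ji}^{t}-1}}$, so that $\tfrac{1}{d}I_j^{upp}=P\sum_{i\neq j}^{K}\left(\frac{\beta_{ji}l_{ji}}{M_{ji}^{r}M_{ji}^{t}-1}\right)2^{-\frac{B_{ji}^{*}}{M_{ji}^{r}M_{ji}^{t}-1}}$ is precisely the RINR-dependent quantity appearing inside both logarithms of (\ref{eq:throughput_under_dynamic}). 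Substituting $x=I_j^{upp}$ into $g_j$, using the monotonicity just proven, and summing over $j=1,\dots,K$ gives $R_{lim}\geq\sum_{j}g_j(\mathbb{E}\{I_j\})\geq\sum_{j}g_j(I_j^{upp})=R_{low}$, which is the claim.

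The main obstacle is exactly the monotonicity of $g_j$, i.e. verifying that lowering the residual interference cannot hurt the Jensen-based lower bound even though the subtracted correction term in Lemma~\ref{Throughput-Lower-Bound} also shrinks; the derivative computation above resolves this cleanly. A secondary, routine point is justifying differentiation under the integral, which follows from the explicit closed form and integrability of $f(v)$ (Theorem~\ref{lem:(Perfect-CSIT-Throughput):}, the $\mathbf{W}_{d}(\mathbf{I},d)$ eigenvalue density) together with a uniform domination of $\log(1+x/d+Pv/d)$ and its $x$-derivative for $x$ in a neighbourhood of any fixed point, so this can be handled in a single line.
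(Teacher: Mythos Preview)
Your proposal is correct and follows exactly the route the paper takes: the paper's entire proof is the one-line remark ``By substituting the upper bound expression of $\mathbb{E}\{I_{j}\}$ in \textit{Theorem~\ref{theorem 1:(Residue-Interference-Upper}} into \textit{Lemma~\ref{Throughput-Lower-Bound}}, we can get the desired expression.'' Your monotonicity check on $g_j$ is a genuine refinement --- the paper never verifies that replacing $\mathbb{E}\{I_j\}$ by the larger quantity $I_j^{upp}$ preserves the inequality in (\ref{eq:general_bound}) --- so your argument is in fact more complete than the original.
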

\begin{proof}
By substituting the upper bound expression of $\mathbb{E}\{I_{j}\}$
in \textit{Theorem \ref{theorem 1:(Residue-Interference-Upper} }into
\textit{Lemma \ref{Throughput-Lower-Bound}, }we can get the desired
expression.\end{proof}
\begin{remrk}
[Interpretation of Theorem \ref{Throughput-Bounds-under-PFS}]For
IA under limited feedback in general MIMO interference networks, the
previous works \cite{krishnamachari2009interference}, \cite{thukral2009interference}
focus on analyzing the feedback bits scaling law. Thus it gives the
performance at extremely high feedback bits regime only and failed
to quantize the performance when we have finite feedback bits. However,
with the above result in \textit{Theorem }\ref{Throughput-Bounds-under-PFS},
we can quantize the throughput in terms of the transmit SNR $P$,
the number of feedback bits $B$ and $\mathcal{ITP}$ parameters.
For instance, consider a homogeneous i.i.d. fading case, i.e., $\mathbf{\Phi}_{ji}^{r}=\mathbf{\Phi}_{ji}^{t}=\mathbf{I},\; l_{ji}=1$
for all $i,$ $j$. Then the above term (\ref{eq:throughput_under_dynamic})
reduces to 
\[
R_{lim}\geq R_{low}=Kd\int_{0}^{+\infty}\log\left(1+\frac{P}{(1+\omega)d}\cdot v\right)f(v)\cdot\textrm{d}v=R_{per}\left(\frac{P}{1+\omega}\right)
\]
where $\omega=P(K-1)2^{-\frac{B}{K(K-1)(N_{r}N_{t}-1)}}$. This indicates
that $R_{lim}$ shall be no less than the throughput under perfect
CSIT with a power degradation ratio of $\frac{1}{1+\omega}$.
\end{remrk}
%asdfklaj

To obtain some simple insights on how the asymmetric $\mathcal{ITP}$
will affect our system performance under the proposed scheme, we give
the following corollary.
\begin{cor}
[Feedback Bits Scaling with SNR]\label{Feedback-Bits-Scaling}Denote
$\rho_{ji}=\frac{M_{ji}^{t}M_{ji}^{r}}{N_{t}N_{r}}$ ($0<\rho_{ji}\leq1)$.
When the number of sum feedback bits $B$ scales with $\log P$ as%
\footnote{Here the notation $I_{\{l_{ji}>0\}}$ denotes the indicator function.
In practice, we might never have exactly zero path gain. However,
when the path loss $l_{ji}$ is so large such that the interference
power is always below the noise floor within our SNR operation regime,
we can treat $l_{ji}=0$ and thus have $I_{\{l_{ji}>0\}}=0$.%
} 
\begin{equation}
B\geq\sum_{i,j,i\neq j}^{K}\left\{ I_{\{l_{ji}>0\}}\cdot(N_{r}N_{t}\rho_{ji}-1)\right\} \cdot\log P+C_{b}\label{eq:bits_scaling}
\end{equation}
where $C_{b}$ is some bounded constant independent of $P$, we have
that the sum DoFs of the network are preserved, i.e.,
\begin{equation}
\lim_{P\rightarrow\infty}\frac{R_{lim}}{\log P}=Kd.\label{eq:preserve_throughput_scaling}
\end{equation}
 \end{cor}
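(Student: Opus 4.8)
The plan is to feed the bit-allocation scaling \eqref{eq:bits_scaling} into the throughput lower bound $R_{low}$ of Theorem \ref{Throughput-Bounds-under-PFS} and show that the residual-interference contribution stays bounded as $P\to\infty$, which forces the $\log P$-slope of $R_{low}$ to be $Kd$; combined with the trivial upper bound $R_{lim}\le R_{per}$ and the fact that $R_{per}$ has slope $Kd$ (immediate from Theorem \ref{lem:(Perfect-CSIT-Throughput):}, since $\int_0^\infty \log(1+\tfrac{P}{d}v)f(v)\,\mathrm dv = \log P + \mathcal O(1)$), the squeeze gives \eqref{eq:preserve_throughput_scaling}.

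First I would look at a single cross-link term in the RINR bound, namely
\begin{equation}
\gamma_{ji}(P) := P\cdot\frac{\beta_{ji}l_{ji}}{M_{ji}^{r}M_{ji}^{t}-1}\cdot 2^{-\frac{B_{ji}^{*}}{M_{ji}^{r}M_{ji}^{t}-1}}.\nonumber
\end{equation}
The key step is to verify that the optimal allocation $\{B_{ji}^{*}\}$ given by Theorem 4, subject to the sum constraint \eqref{eq:bits_scaling}, distributes bits so that every link with $l_{ji}>0$ receives at least $(N_rN_t\rho_{ji}-1)\log P - \mathcal O(1) = (M_{ji}^{r}M_{ji}^{t}-1)\log P - \mathcal O(1)$ bits; since $\rho_{ji}=\tfrac{M_{ji}^{t}M_{ji}^{r}}{N_tN_r}$, the exponent $\tfrac{B_{ji}^{*}}{M_{ji}^{r}M_{ji}^{t}-1}$ then grows like $\log P$, so $2^{-B_{ji}^{*}/(M_{ji}^{r}M_{ji}^{t}-1)}=\mathcal O(1/P)$ and hence $\gamma_{ji}(P)=\mathcal O(1)$ for each $j,i$ with $l_{ji}>0$ (links with $l_{ji}=0$ contribute nothing). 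To make this precise I would use the waterfilling form \eqref{eq:optimal_bits_allocation}: from $\sum_{i,j}B_{ji}^{*}=B$ and the assumed scaling of $B$, solving for the water level $b$ gives $b = \log P + \mathcal O(1)$ (the $\log P$ coefficients on both sides match because $\sum_{i,j:\,l_{ji}>0}(M_{ji}^{r}M_{ji}^{t}-1)=\sum_{i,j:\,l_{ji}>0}(N_rN_t\rho_{ji}-1)$), and substituting back yields $B_{ji}^{*}=(M_{ji}^{r}M_{ji}^{t}-1)(\log P + \mathcal O(1))$ for active links, so each $\gamma_{ji}(P)$ is bounded; therefore $\mathbb E\{I_j\}\le I_j^{upp}=d\sum_{i\ne j}\gamma_{ji}(P) =: \Theta_j = \mathcal O(1)$.

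With $\Theta_j$ uniformly bounded, I would plug into \eqref{eq:throughput_under_dynamic}: the negative term $-\sum_j d\log(1+\Theta_j)$ is an $\mathcal O(1)$ constant, and the integral term is $\sum_j d\int_0^\infty \log(1+\Theta_j + \tfrac{P}{d}v)f(v)\,\mathrm dv \ge \sum_j d\int_0^\infty \log(1+\tfrac{P}{d}v)f(v)\,\mathrm dv = Kd\log P + \mathcal O(1)$. Hence $R_{low}\ge Kd\log P - \mathcal O(1)$, so $\liminf_{P\to\infty} R_{lim}/\log P \ge Kd$; the reverse inequality $\limsup_{P\to\infty}R_{lim}/\log P \le Kd$ follows from $R_{lim}\le R_{per}$ and $R_{per}=Kd\log P + \mathcal O(1)$, giving \eqref{eq:preserve_throughput_scaling}. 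The main obstacle is the middle step: showing that the waterfilling solution \eqref{eq:optimal_bits_allocation} under the budget \eqref{eq:bits_scaling} really does allocate $(M_{ji}^{r}M_{ji}^{t}-1)\log P + \mathcal O(1)$ bits to each active link — one must check that no active link is clipped to zero by the $[\cdot]^+$ operation for large $P$ (true since the water level $b\to\infty$) and that the $\beta_{ji}$, which depend only on the $\mathcal{ITP}$ and not on $P$, contribute only to the $\mathcal O(1)$ term. Everything else is bookkeeping with the monotonicity of $\log$ and the fixed density $f(v)$.
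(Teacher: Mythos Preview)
Your proposal is correct and follows essentially the same route as the paper: both arguments solve the waterfilling formula \eqref{eq:optimal_bits_allocation} under the budget \eqref{eq:bits_scaling} to obtain $b=\log P+\mathcal O(1)$, deduce $B_{ji}^{*}=(M_{ji}^{r}M_{ji}^{t}-1)\log P+\mathcal O(1)$ for active links, and then conclude that the residual-interference term in the throughput lower bound stays bounded so the $Kd$ slope is preserved. The only cosmetic difference is that the paper first applies one more Jensen step to collapse the per-receiver penalties into a single $\log(1+\cdot)$ term before checking boundedness, whereas you bound each $\gamma_{ji}(P)$ individually and then explicitly close the squeeze via $R_{lim}\le R_{per}$; your version is in fact slightly more complete on that last point.
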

\begin{proof}
See Appendix \ref{sub:Proof-for-Corollary-Scaling_Power}.
\end{proof}
% we shall give a remark in this situation
\begin{remrk}
[Interpretation of Corollary \ref{Feedback-Bits-Scaling}]We achieve
a similar result with \cite{krishnamachari2009interference,thukral2009interference}
on limited feedback analysis of IA on MIMO interference network, that
the sum DoF achievable by IA can be maintained when the number of
feedback bits scales on $\mathcal{O}(\log P)$ (\ref{eq:bits_scaling}).
Moreover, different from the i.i.d. channel fading model assumption
in these works, we consider a general asymmetric interference topology
and show how the asymmetric $\mathcal{ITP}$ can be exploited in the
proposed dynamic feedback scheme to reduce feedback bits. From \textit{Corollary
}\ref{Feedback-Bits-Scaling}, we see that in networks when the spatial
correlation matrices are not full rank, i.e., $M_{ji}^{r}=\textrm{rank}(\mathbf{\Phi}_{ji}^{r})<N_{r}$,
$M_{ji}^{t}=\textrm{rank}(\mathbf{\Phi}_{ji}^{t})<N_{t}$ $(\textrm{such that }\rho_{ji}<1)$
for some $j,\, i$, or in networks when the path loss is so large
such that $l_{ji}=0$ for some $j,\, i$, the scaling bits (\ref{eq:bits_scaling})
to maintain the sum DoFs of the system could be reduced.
\end{remrk}

\section{Numerical Results}

In this section, we verify the performance gain of the proposed scheme
through simulations. We shall first give the following random interference
topology model for the cross links. Note the direct links are still
assumed to have homogeneous path loss and i.i.d. fading (\emph{Assumption}
\ref{Direct-Channel-StatisticsAssume}).
\begin{definitn}
[Random Interference Topology Model]Assume the dynamics of the cross
link is contributed by both shadowing effect and transmit spatial
correlation. The shadowing effect is modeled by log-normal shadowing,
and the transmit spatial correlation is modeled using the Exponential
Correlation Model described in \cite{loyka2001channel}. Therefore,
in the $\mathcal{ITP}$, 
\[
\mathbf{\Phi}_{ji}^{r}=\mathbf{I},\;\mathbf{\Phi}_{ji}^{t}=\left[\begin{array}{cccc}
1 & \epsilon & \cdots & \epsilon^{N_{t}-1}\\
\epsilon^{*} & 1 &  & \epsilon^{N_{t}-2}\\
\vdots & \vdots & \ddots & \vdots\\
(\epsilon^{*})^{N_{t}-1} & (\epsilon^{*})^{N_{t}-2} & \cdots & 1
\end{array}\right],\; l_{ji}\sim\ln\mathcal{N}(u,{\normalcolor \delta^{2}),\;{\color{blue}{\normalcolor \forall i,j,j\neq i}}}
\]
where $\{l_{ji}\}$ for different cross-links are assumed to be i.i.d.,
$u$ is set to be $-\frac{1}{2}\delta^{2}$ to normalize the mean%
\footnote{The expectation of a log-normal distributed variable is $\mathbb{E}(\ln\mathcal{N}(u,\delta^{2}))=e^{u+\frac{1}{2}\delta^{2}}$.%
} of large fading parameters $l_{ji}$, $\forall j,i\neq j$ to be
1. \hfill \IEEEQED
\end{definitn}
%asd

Under the above model, the dynamics of the interference topology can
be expressed by two parameters, $|\epsilon|$ and $\delta^{2}$, which
stand for the dynamics of the spatial correlation and the dynamics
of shadowing effect respectively. Note that $|\epsilon|=0$ corresponds
to no correlation and $|\epsilon|=1$ corresponds to the strongest
correlation; while larger $\delta^{2}$ corresponds to larger dynamics
of the shadowing effect. In the following simulations, we compare
the performance of the proposed \textbf{dynamic feedback scheme} (DFS)
with the following baselines.
\begin{itemize}
\item \textbf{Conventional VQ} (CVQ): Each cross-link is allocated equal
feedback bits, and MIMO codebooks with symmetrically distributed codewords
are deployed to quantize the $\textrm{vec}(\mathbf{H})$. 
\item \textbf{Half Dynamic Scheme 1 }(HDS1): Deploy spatial codebooks (Section
III-B) but assign equal bits to all cross-links.
\item \textbf{Half Dynamic Scheme 2} (HDS2): Deploy symmetric codebooks
but dynamically allocate bits to the cross-links (Section III-C).
\item \textbf{Random Beamforming} (RB): Each Tx, Rx randomly choose a precoder
and decorrelator.
\end{itemize}

\subsection{Performance Comparison w.r.t. Amount of Feedback}

In Fig. \ref{fig:Throughput-Comparison-Versus-bits}, we consider
a $K=4$, $N_{t}=3$, $N_{r}=2$, $d=1$ MIMO interference network.
We vary the number of the feedback bits $B$ and compare the network
throughput of different schemes under the following parameter settings:
interference topology dynamics $(|\epsilon|,\,\delta^{2})=(0.7,\,3)$,
transmit SNR $10\log_{10}P=25\textrm{dB}$. It shows that DFS can
achieve higher throughput compared with the baselines, and larger
performance gain over CVQ is achieved in relatively higher feedback
bits regime. This shows that the proposed dynamic scheme can better
adapt to the interference topology and thus achieves less performance
degradation. On the other hand, we see that the proposed LB of DFS
(derived in this paper) can better bound the DFS than conventional
LB derived according to \cite{cho2011feedback}, especially in low
feedback bits regime.

\begin{figure}
\begin{centering}
\includegraphics[scale=0.65]{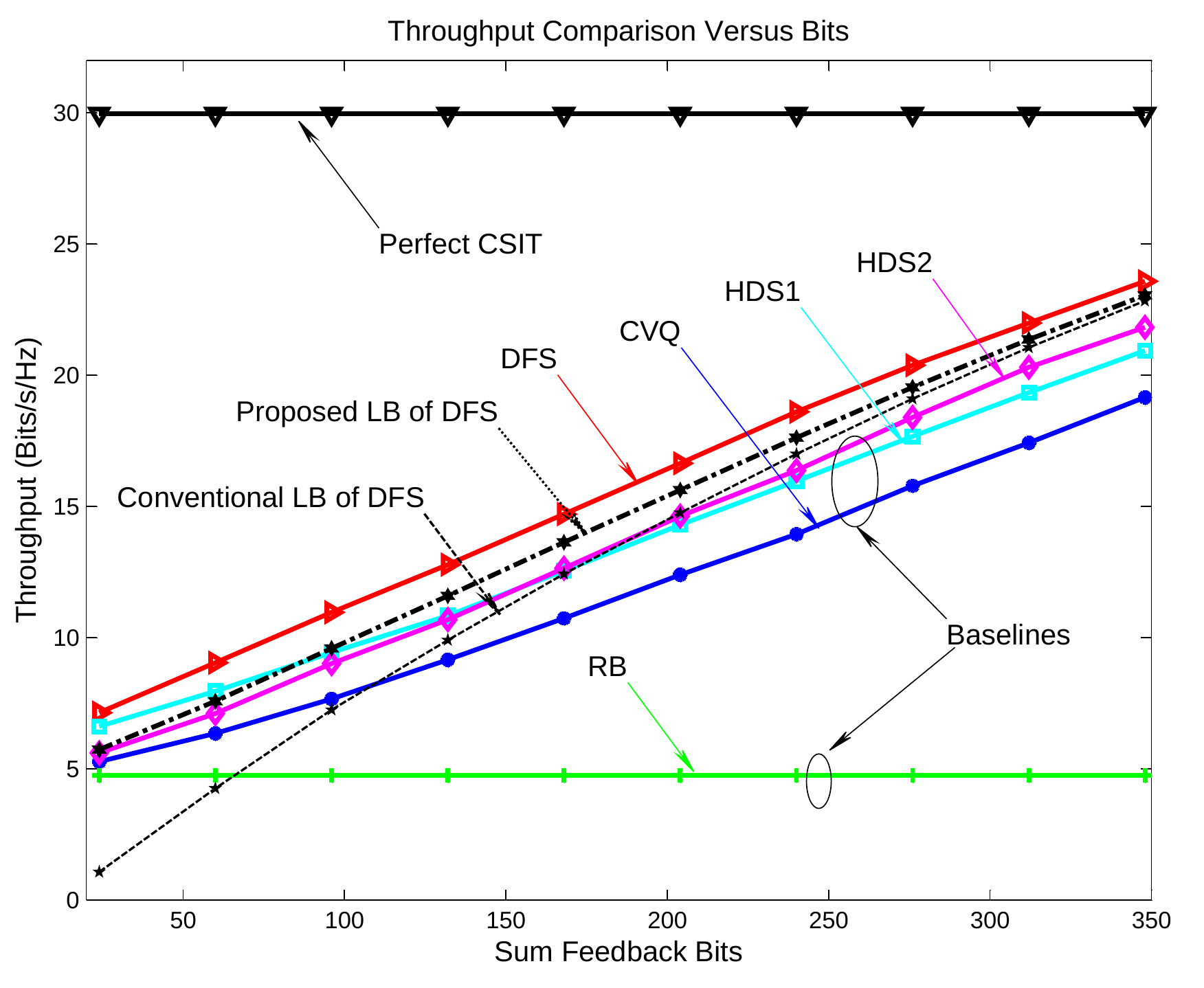}
\par\end{centering}

\caption{\label{fig:Throughput-Comparison-Versus-bits}Throughput comparison
versus sum feedback bits under$(|\epsilon|,\delta^{2})=(0.7,3)$ and
$10\log_{10}P=25\textrm{dB}$. }
\end{figure}

\subsection{Throughput Comparison w.r.t. Transmit SNR}

In Fig. \ref{fig:Throughput-power}, we consider a $K=4$, $N_{t}=3$,
$N_{r}=2$, $d=1$ MIMO interference network. We vary the transmit
SNR $P$ and compare the throughput of different schemes under the
following parameter settings: interference topology profile dynamics
$(|\epsilon|,\delta^{2})=(0.7,3)$, sum feedback bits $B=120$ and
$B=300$. The reason that we have two $B$ settings is to help illustrate
how the throughput goes with SNR in different feedback bits regimes.
It is shown that under both B settings, DFS can achieve a higher throughput
than the baselines. Moreover, larger performance gain is achieved
in the high SNR regime. 

\begin{figure}
\begin{centering}
\includegraphics[scale=0.65]{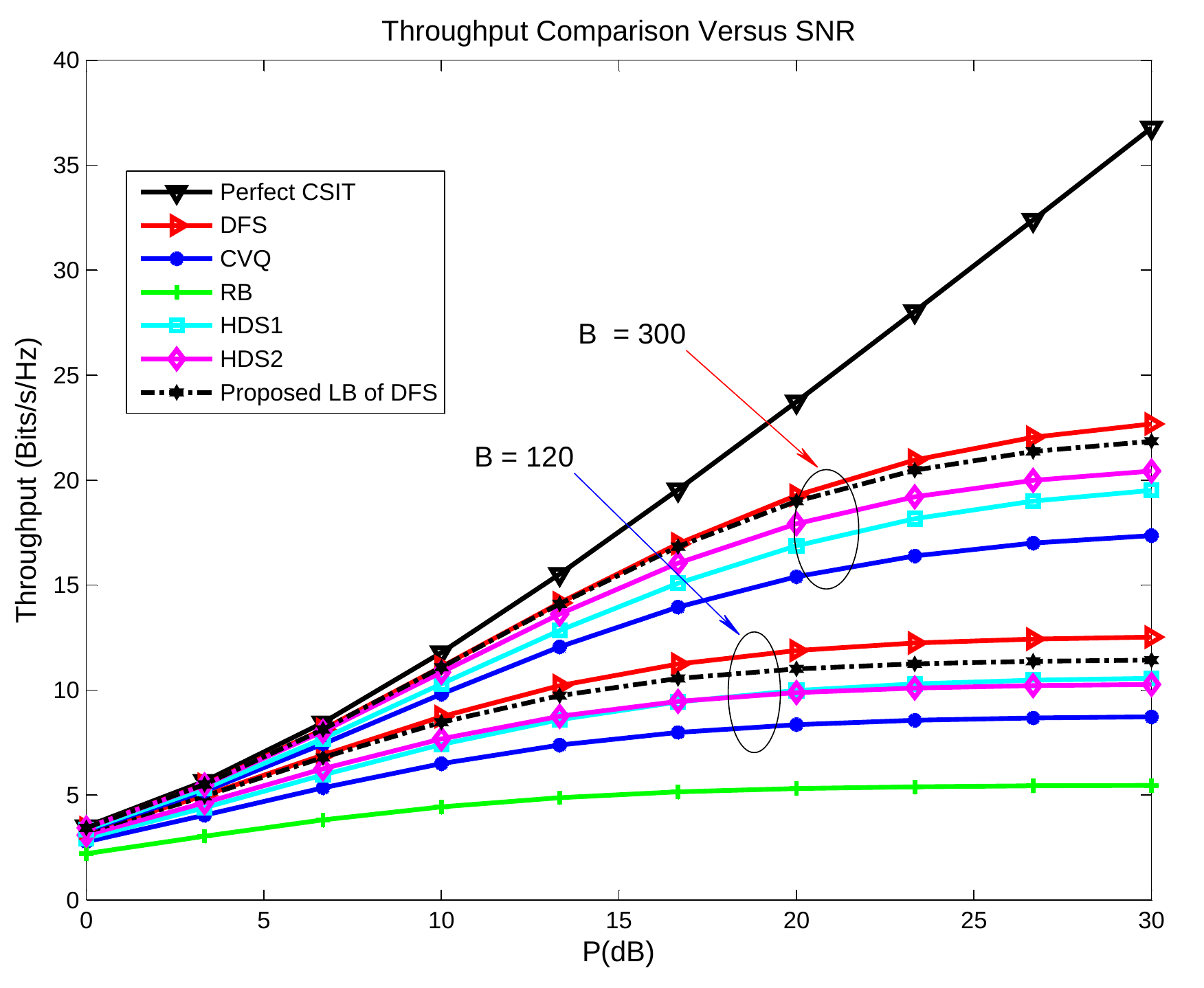}
\par\end{centering}

\caption{\label{fig:Throughput-power}Throughput comparison versus transmit
SNR under $(|\epsilon|,\delta^{2})=(0.7,3)$ and $B=120$, $300$.}
\end{figure}

In Fig \ref{fig:Throughput-scaling-versus-power}, we consider a $K=4$,
$N_{t}=3$, $N_{r}=2$, $d=1$ MIMO interference network. We vary
the transmit SNR $P$, scale the sum feedback bits with SNR as $B=K(K-1)(N_{r}N_{t}-1)\log P$
(see \textit{Corollary 1}) and show the throughput of different schemes
under interference topology dynamics $(|\epsilon|,\delta^{2})=(0.7,3)$.
From this figure, we can see that DFS achieves a larger throughput
compared with the baselines, which demonstrates its performance advantages.
Moreover, in the high SNR regime, we see that DFS, HDS1, HDS2 and
CVQ have the same slope as the perfect CSIT throughput. Therefore,
the sum DoFs of the network are maintained under this feedback bits
scaling condition. 

\begin{figure}
\begin{centering}
\includegraphics[scale=0.65]{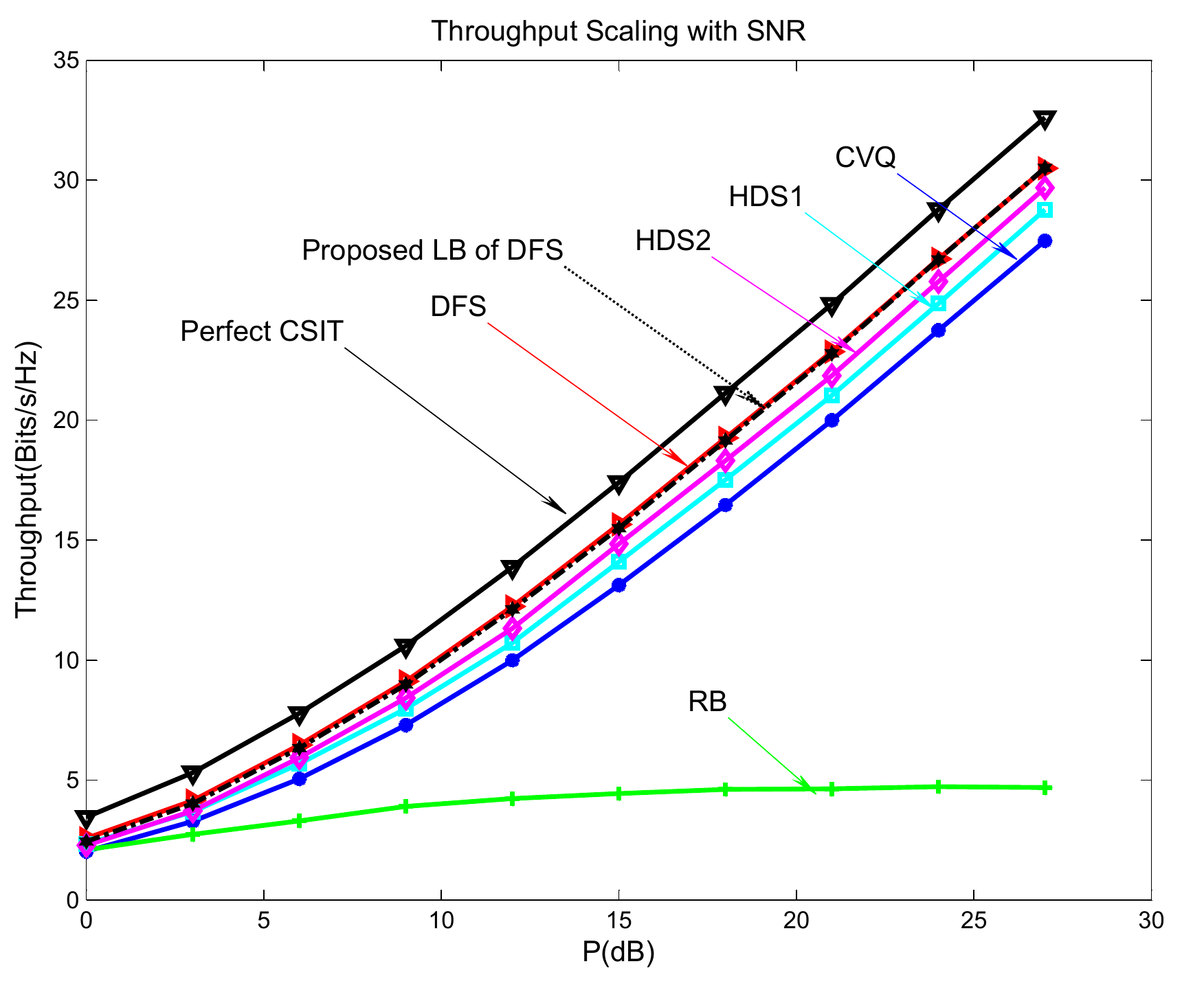}
\par\end{centering}

\caption{\label{fig:Throughput-scaling-versus-power}Throughput scaling with
transmit SNR under $(|\epsilon|,\delta^{2})=(0.7,3)$ and $B=K(K-1)(N_{r}N_{t}-1)\log P$.}
\end{figure}

\subsection{Throughput Comparison w.r.t. Interference Topology}

In Fig. \ref{fig:Throughput-comparison-versus-correlation} and Fig.
\ref{fig:Throughput-comparison-versus-pathloss}, we consider a $K=4$,
$N_{t}=6$, $N_{r}=4$, $d=2$ MIMO interference network. The reason
that we change to $d=2$ is to help verify that the proposed scheme
is also applicable to $d>1$ schemes. 

In Fig. \ref{fig:Throughput-comparison-versus-correlation}, we vary
the correlation coefficient $|\epsilon|$ and compare the network
throughput of different schemes under the following parameter settings:
shadowing dynamics $\delta^{2}=3$, transmit SNR $10\log_{10}P=25\textrm{dB}$
and sum feedback bits $B=828$. Note we choose a moderate number of
feedback bits to illustrate and compare the performance in residual-interference
limited region. From Fig. \ref{fig:Throughput-comparison-versus-correlation},
we observe that as $|\epsilon|$ goes higher, DFS and HDS1 achieve
larger performance gains over CVQ while HDS2 does not. This is because
DFS and HDS1 (but not HDS2) deploy the spatial codebook design (Section
III-B). From this fact, we conclude that the spatial codebook design
indeed captures the spatial correlations of the channel matrices and
thus improves the feedback efficiency. On the other hand, by comparing
HDS1 and HDS2, we see that under the proposed random interference
topology model, the spatial codebook design can contribute more to
the performance gain in the relatively higher spatial correlation
region.

\begin{figure}
\begin{centering}
\includegraphics[scale=0.65]{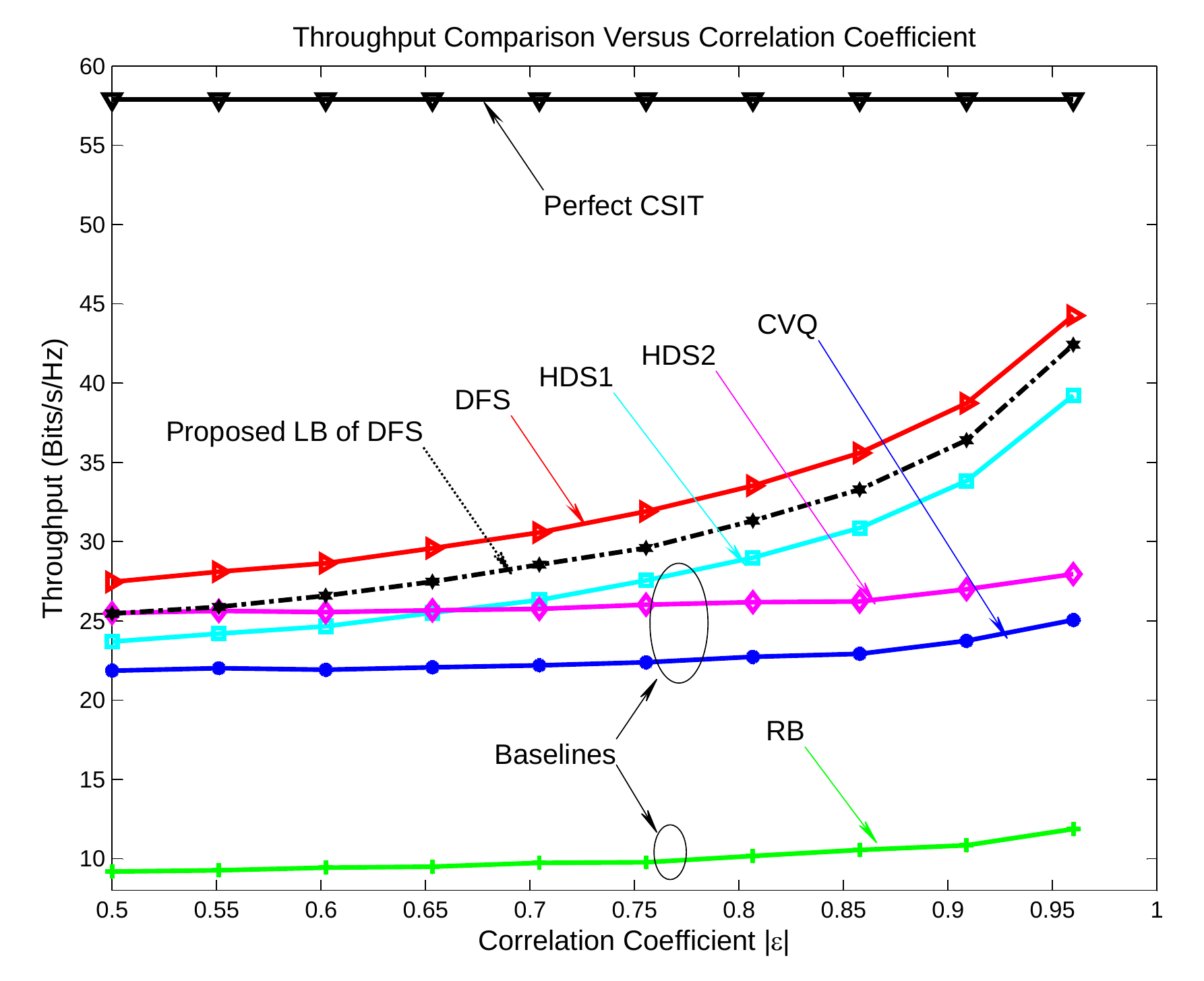}
\par\end{centering}

\caption{\label{fig:Throughput-comparison-versus-correlation}Throughput comparison
versus correlation coefficient $|\epsilon|$ under $\delta^{2}$ =
3, $10\log_{10}P$ = 25 dB and $B$ = 828.}
\end{figure}

In Fig. \ref{fig:Throughput-comparison-versus-pathloss}, we vary
the shadowing dynamics $\delta^{2}$ and compare the network throughput
of different schemes under the following parameter settings: correlation
coefficient $|\epsilon|=0.7$, transmit SNR $10\log_{10}P=25\textrm{dB}$
and sum feedback bits $B=828$. We see that as $\delta^{2}$ goes
higher, DFS and HDS2 achieve larger performance gains over CVQ while
HDS1 does not. This is because DFS and HDS2 (but not HDS1) deploy
the dynamic quantization via bit allocations. From this fact, we have
that the bit allocations indeed captures the shadowing dynamics and
thus improves the feedback efficiency. On the other hand, by comparing
HDS1 and HDS2, we see that under the proposed random interference
topology model, the bit allocations can contribute more to the performance
gain in the relatively higher shadowing dynamics region.

\begin{figure}
\begin{centering}
\includegraphics[scale=0.65]{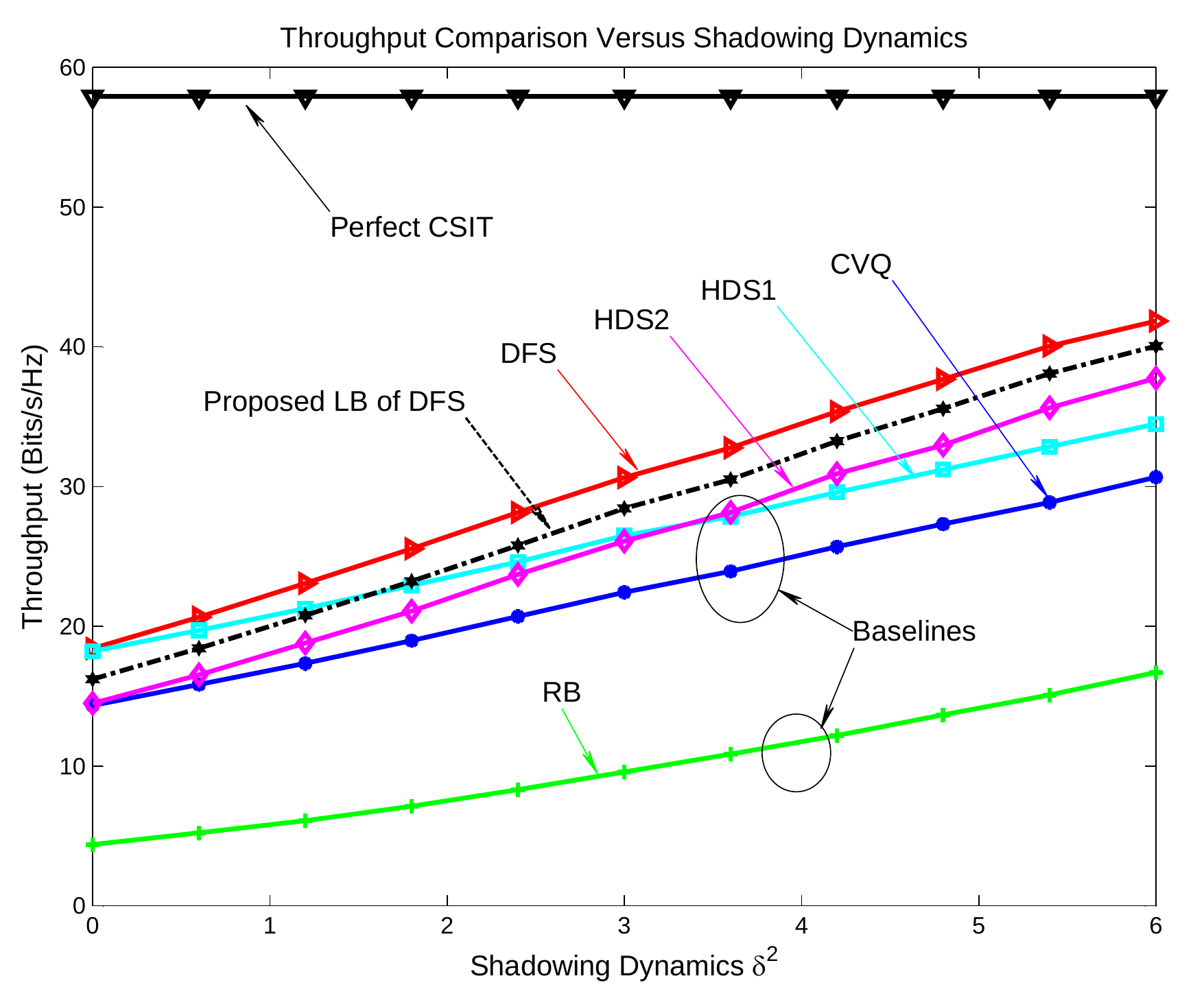}
\par\end{centering}

\caption{\label{fig:Throughput-comparison-versus-pathloss}Throughput comparison
versus shadowing dynamics $\delta^{2}$ under $|\epsilon|$ = 0.7,
$10\log_{10}P$ = 25 dB and $B$ = 828.}
\end{figure}

\section{Conclusions}

In this paper, we consider MIMO interference networks and investigate
the performance of IA under limited feedback. We consider a general
interference topology model which embraces various practical situations
such as spatial correlations and path loss effects. A novel spatial
codebook design with dynamic quantization is proposed to adapt to
the path loss and spatial correlations. We analyze the performance
bounds under the proposed dynamic feedback scheme, in terms of the
transmit SNR, feedback bits and the interference topology parameters.
Both analytical and simulation results show that the heterogeneity
of path loss and spatial correlations can be exploited in the proposed
scheme to enhance feedback efficiency.

\appendix

\subsection{\label{sub:Preliminaries_Transformed_Codebook}Preliminaries of Codebook
Design Heuristic: Transformed Codebook}

In \cite{love2006limited,xia2006design,zheng2008analysis}, the transformed
codebook design is proposed to improve the limited feedback performance
on correlated MISO channel. We now briefly illustrate the main ideas
of these works. Denote the MISO channel representation as 
\begin{equation}
y=\mathbf{h}^{H}\mathbf{f}\cdot x+n\label{eq:MISO_equation}
\end{equation}
where the correlated channel state is modeled as $\mathbf{h}=\mathbf{R}\mathbf{g}$,
$\mathbf{g}$ is i.i.d. complex Gaussian distributed with zero mean
and unit variance and $\mathbf{\Phi}_{T}=\mathbf{R}\mathbf{R}^{H}$is
the transmit correlation matrix known at the Tx side. The goal is
to design an efficient beamforming codebook (note the best beamforming
vector under perfect CSI should be $\mathbf{f}=\frac{\mathbf{h}}{||\mathbf{h}||}$)
so as to reduce the performance loss induced by limited feedback.
It is shown that the codebook given by: 
\begin{equation}
\left\{ \mathbf{f}_{i}\mid\mathbf{f}_{i}=\frac{\mathbf{R}\mathbf{c}_{i}}{||\mathbf{R}\mathbf{c}_{i}||},\;\mathbf{c}_{i}\in\mathcal{C}_{0}\right\} \label{eq:transformed_codebook}
\end{equation}
where $\mathcal{C}_{0}$ is a base codebook with codewords symmetrically
distributed in the Grassmannian subspace, can adapt to the channel
correlation and achieve near optimal performance. 

In \cite{zheng2008analysis,zheng2007analysis}, an upper bound of
the performance loss by using this transformed codebook is derived
via high-resolution asymptotic analysis. It is shown that the asymptotic
distortion of a finite rate feedback system is given by
\begin{equation}
D=\mathbb{E}\{D_{Q}(\mathbf{v},\mathbf{\hat{v}})\}=2^{-\frac{2B}{k_{q}}}\int_{\mathbb{Z}}\int_{\mathbb{Q}}m(\mathbf{v};\mathbf{z};\mathbb{E}_{\mathbf{z}}(\mathbf{v}))p(\mathbf{v},\mathbf{z})\lambda(\mathbf{v})^{-\frac{2}{k_{q}}}d\mathbf{v}d\mathbf{z}\label{eq:quantization_distortion_expression}
\end{equation}
where $D_{Q}$ is the user defined distortion function, $B$ the quantization
bits, $m(\mathbf{v};\mathbf{z};\mathbb{E}_{\mathbf{z}}(\mathbf{v}))$
the normalized inertial profile, $p(\mathbf{v},\mathbf{z})$ is the
probability density function at point $\mathbf{v}$ with side information
$\mathbf{z}$, and $\lambda(\mathbf{v})$ is the codeword point density.
Please refer to works \cite{zheng2008analysis,zheng2007analysis}
for the specific details.

\subsection{\label{sub:Proof-for-Lemma-quantization distorion}Proof for Lemma
\ref{lem:(CSI-Quantization-Distortions):} (CSI Quantization Distortion)}

The subscript $(i,j)$ is omitted for notation convenience in the
following derivations. Denote $\mathbf{\Phi}^{r}=\mathbf{F}^{r}\mathbf{\Lambda}^{r}(\mathbf{F}^{r})^{H}$,
$\mathbf{\Phi}^{t}=\mathbf{F}^{t}\mathbf{\Lambda}^{t}(\mathbf{F}^{t})^{H}$
as the eigenvalue decomposition, we have 
\begin{equation}
(\mathbf{\Phi}^{r})^{\nicefrac{1}{2}}=\mathbf{F}^{r}(\mathbf{\Lambda}^{r})^{\nicefrac{1}{2}}(\mathbf{F}^{r})^{H},\;(\mathbf{\Phi}^{t})^{\nicefrac{1}{2}}=\mathbf{F}^{t}(\mathbf{\Lambda}^{t})^{\nicefrac{1}{2}}(\mathbf{F}^{t})^{H}.\label{eq:csi_2}
\end{equation}
Define
\begin{equation}
\mathbf{E}=(\mathbf{F}^{r})^{H}\mathbf{H}\mathbf{F}^{t},\quad\mathbf{\hat{E}}=(\mathbf{F}^{r})^{H}\mathbf{\hat{H}}\mathbf{F}^{t},\quad\Delta\mathbf{E}=(\mathbf{F}^{r})^{H}\Delta\mathbf{H}\mathbf{F}^{t}.\label{eq:definition_of_e}
\end{equation}
Therefore, we have $\mathbf{E}=\alpha\mathbf{\hat{E}}+\Delta\mathbf{E}$
according to (\ref{eq:csi_quantization_distortion}). We shall prove
the following two lemmas first.
\begin{lemma}
\label{quantization-e}$\textrm{vec}(\Delta\mathbf{E})$ is distributed
in the orthogonal complement space of $\textrm{vec}(\mathbf{\hat{E}})$
and $\mathbb{E}\{||\Delta\mathbf{E}||^{2}\}=\mathbb{E}\{||\Delta\mathbf{H}||^{2}\}$.\end{lemma}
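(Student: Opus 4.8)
The plan is to exploit the fact that the three maps in (\ref{eq:definition_of_e}) are, after vectorisation, one and the same \emph{unitary} transformation, so that inner products and Frobenius norms are automatically preserved. Concretely, since $\mathbf{F}^{r}$ and $\mathbf{F}^{t}$ are the unitary eigenvector matrices of $\mathbf{\Phi}^{r},\mathbf{\Phi}^{t}$, the identity $\textrm{vec}(\mathbf{A}\mathbf{X}\mathbf{B})=(\mathbf{B}^{T}\otimes\mathbf{A})\textrm{vec}(\mathbf{X})$ applied to (\ref{eq:definition_of_e}) gives $\textrm{vec}(\mathbf{E})=\mathbf{Q}\,\textrm{vec}(\mathbf{H})$, $\textrm{vec}(\mathbf{\hat{E}})=\mathbf{Q}\,\textrm{vec}(\mathbf{\hat{H}})$ and $\textrm{vec}(\Delta\mathbf{E})=\mathbf{Q}\,\textrm{vec}(\Delta\mathbf{H})$ with the common matrix $\mathbf{Q}=(\mathbf{F}^{t})^{T}\otimes(\mathbf{F}^{r})^{H}$. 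First I would verify that $\mathbf{Q}$ is unitary: $(\mathbf{F}^{r})^{H}$ is unitary because $\mathbf{F}^{r}$ is, $(\mathbf{F}^{t})^{T}$ is unitary because $((\mathbf{F}^{t})^{T})^{H}(\mathbf{F}^{t})^{T}=(\mathbf{F}^{t}(\mathbf{F}^{t})^{H})^{*}=\mathbf{I}$, and a Kronecker product of unitary matrices is unitary, so $\mathbf{Q}^{H}\mathbf{Q}=\mathbf{I}$.

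For the orthogonality claim, recall that the decomposition $\mathbf{E}=\alpha\mathbf{\hat{E}}+\Delta\mathbf{E}$ already follows from (\ref{eq:csi_quantization_distortion}) by applying $\mathbf{Q}$ to $\mathbf{H}=\alpha\mathbf{\hat{H}}+\Delta\mathbf{H}$ (the scalar $\alpha$ is unchanged since $\mathbf{Q}$ is linear), and that (\ref{eq:csi_quantization_distortion}) also asserts $\textrm{vec}(\Delta\mathbf{H})^{H}\textrm{vec}(\mathbf{\hat{H}})=0$. Then
\[
\textrm{vec}(\Delta\mathbf{E})^{H}\textrm{vec}(\mathbf{\hat{E}})=\textrm{vec}(\Delta\mathbf{H})^{H}\mathbf{Q}^{H}\mathbf{Q}\,\textrm{vec}(\mathbf{\hat{H}})=\textrm{vec}(\Delta\mathbf{H})^{H}\textrm{vec}(\mathbf{\hat{H}})=0,
\]
so $\textrm{vec}(\Delta\mathbf{E})$ lies in the orthogonal complement of $\textrm{vec}(\mathbf{\hat{E}})$, which is exactly the claimed structure. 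For the norm claim, use that the Frobenius norm of a matrix equals the Euclidean norm of its vectorisation together with the unitarity of $\mathbf{Q}$: $||\Delta\mathbf{E}||^{2}=||\textrm{vec}(\Delta\mathbf{E})||^{2}=||\mathbf{Q}\,\textrm{vec}(\Delta\mathbf{H})||^{2}=||\textrm{vec}(\Delta\mathbf{H})||^{2}=||\Delta\mathbf{H}||^{2}$. This holds pointwise in the channel realisation, hence in particular $\mathbb{E}\{||\Delta\mathbf{E}||^{2}\}=\mathbb{E}\{||\Delta\mathbf{H}||^{2}\}$.

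There is essentially no analytical obstacle here; the only things that require care are bookkeeping details: making sure that the transpose $(\mathbf{F}^{t})^{T}$ (not the conjugate transpose) appears in the Kronecker identity and is nonetheless unitary, and confirming that the same scalar $\alpha$ carries over so that $\Delta\mathbf{E}$ defined in (\ref{eq:definition_of_e}) genuinely is the residual in $\mathbf{E}=\alpha\mathbf{\hat{E}}+\Delta\mathbf{E}$ — both are immediate consequences of linearity of the map $\mathbf{X}\mapsto(\mathbf{F}^{r})^{H}\mathbf{X}\mathbf{F}^{t}$.
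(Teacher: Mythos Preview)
Your proposal is correct and follows essentially the same route as the paper: both arguments vectorise the relation (\ref{eq:definition_of_e}), observe that the resulting linear map is a Kronecker product of unitary matrices and hence unitary, and then use that unitary maps preserve inner products (giving the orthogonality) and norms (giving the Frobenius-norm equality). The only cosmetic difference is that the paper writes the computation via the inverse map $(\mathbf{F}^{t})^{*}\otimes\mathbf{F}^{r}$ from $\textrm{vec}(\mathbf{E})$ back to $\textrm{vec}(\mathbf{H})$, whereas you use the forward map $\mathbf{Q}=(\mathbf{F}^{t})^{T}\otimes(\mathbf{F}^{r})^{H}$; your additional remark that $||\Delta\mathbf{E}||=||\Delta\mathbf{H}||$ holds pointwise (not merely in expectation) is a correct strengthening.
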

\begin{proof}
From (\ref{eq:csi_quantization_distortion}) and (\ref{eq:definition_of_e}),
we have: 
\begin{eqnarray*}
\textrm{vec}(\Delta\mathbf{H})^{H}\cdot\textrm{vec}(\hat{\mathbf{H}}) & = & \textrm{vec}(\Delta\mathbf{E})^{H}\cdot{\color{blue}{\normalcolor (({\normalcolor \mathbf{F}^{t}})^{*}}}\otimes\mathbf{F}^{r})^{H}\cdot{\normalcolor {\normalcolor {\color{blue}{\normalcolor (}{\normalcolor (\mathbf{F}^{t}}{\normalcolor )}^{{\normalcolor *}}}}}\otimes\mathbf{F}^{r})\cdot\textrm{vec}(\hat{\mathbf{E}})\\
 & \overset{(q_{0})}{=} & \textrm{vec}(\Delta\mathbf{E})^{H}\cdot\textrm{vec}(\hat{\mathbf{E}})=0
\end{eqnarray*}
where ($q_{0}$) comes from the fact that $(\mathbf{F}^{t})^{*}\otimes\mathbf{F}^{r}$
is a unitary matrix. Therefore, $\textrm{vec}(\Delta\mathbf{E})$
is distributed in the orthogonal complement space of $\textrm{vec}(\hat{\mathbf{E}})$.
The formula $\mathbb{E}\{||\Delta\mathbf{H}||^{2}\}=\mathbb{E}\{||\Delta\mathbf{E}||^{2}\}$
directly comes from the unitary invariance property of Frobenius norm
\cite{bernstein2011matrix}.\end{proof}
\begin{lemma}
\label{As-quantization-bits-expression}As quantization bits $B\rightarrow\infty$,
the asymptotic average distortion $\mathbb{E}\{||\Delta\mathbf{E}||^{2}\}$
is upper bounded by
\[
\mathbb{E}\{||\Delta\mathbf{E}||^{2}\}\leq\beta\cdot2^{-\frac{B}{M^{r}M^{t}-1}}
\]
where $\beta$ is a constant that depends on $\mathbf{\Phi}^{r}$,
$\mathbf{\Phi}^{t}$, $M^{r}=\textrm{rank}(\mathbf{\Phi}^{r})$, $M^{t}=\textrm{rank}(\mathbf{\Phi}^{r})$.\end{lemma}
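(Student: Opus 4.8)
The plan is to pass to the unitarily rotated domain of (\ref{eq:definition_of_e}), reduce the claim to high‑resolution quantization of a \emph{correlated} unit direction on a complex projective space of the correct dimension, and then apply the transformed‑codebook analysis recalled in Appendix~\ref{sub:Preliminaries_Transformed_Codebook}.

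First I would use the unitary invariance of $\mathbf{H}^w$. By (\ref{eq:csi}), (\ref{eq:csi_2}) and (\ref{eq:definition_of_e}),
\begin{equation}
\mathbf{E}=(\mathbf{\Lambda}^r)^{\nicefrac{1}{2}}(\mathbf{F}^r)^H\mathbf{H}^w\mathbf{F}^t(\mathbf{\Lambda}^t)^{\nicefrac{1}{2}}\overset{d}{=}(\mathbf{\Lambda}^r)^{\nicefrac{1}{2}}\mathbf{G}(\mathbf{\Lambda}^t)^{\nicefrac{1}{2}},
\end{equation}
with $\mathbf{G}$ having i.i.d.\ $\mathcal{CN}(0,1)$ entries. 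Since $\mathbf{\Lambda}^r$ (resp.\ $\mathbf{\Lambda}^t$) has only $M^r$ (resp.\ $M^t$) nonzero diagonal entries $\{\lambda_m\}$ (resp.\ $\{\sigma_n\}$), $\textrm{vec}(\mathbf{E})$ is supported on a fixed $M:=M^rM^t$‑dimensional coordinate subspace of $\mathbb{C}^{N_rN_t}$, its active entries are $\sqrt{\lambda_m\sigma_n}\,g_{mn}$, and $\|\mathbf{E}\|^2=\sum_{m,n}\lambda_m\sigma_n|g_{mn}|^2$. The same algebra applied to (\ref{eq:codeword_mapping}) shows that every transformed codeword $\hat{\mathbf{E}}^l=(\mathbf{F}^r)^H\mathbf{W}^l\mathbf{F}^t$ is a unit‑Frobenius‑norm matrix supported on the \emph{same} subspace, and since the inner product $\textrm{vec}(\mathbf{H})^H\textrm{vec}(\mathbf{W}^l)$ is preserved by the unitary $(\mathbf{F}^t)^*\otimes\mathbf{F}^r$ (as in step $(q_0)$ of Lemma~\ref{quantization-e}), the selection rule (\ref{eq:expression_for_E}) is exactly nearest‑neighbor quantization of the direction $\bar{\mathbf{e}}:=\textrm{vec}(\mathbf{E})/\|\mathbf{E}\|$ on $\mathbb{CP}^{M-1}$ under the chordal metric. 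Writing $\hat{\mathbf{e}}=\textrm{vec}(\hat{\mathbf{E}})$ (which has unit norm) and invoking the orthogonality in Lemma~\ref{quantization-e},
\begin{equation}
\|\Delta\mathbf{E}\|^2=\|\mathbf{E}\|^2\bigl(1-|\bar{\mathbf{e}}^H\hat{\mathbf{e}}|^2\bigr)=\|\mathbf{E}\|^2\sin^2\theta,
\end{equation}
with $\theta$ the angle between $\textrm{vec}(\mathbf{E})$ and $\textrm{vec}(\hat{\mathbf{E}})$.

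Next I would invoke the transformed‑codebook heuristic of Appendix~\ref{sub:Preliminaries_Transformed_Codebook}: the base codebook $\mathcal{C}_{ji}^0$ is a ``good'' codebook built from the quantization‑cell‑approximation model, and the map (\ref{eq:codeword_mapping}) warps its cells so that, in the high‑resolution regime, the induced codeword point density on $\mathbb{CP}^{M-1}$ matches the source density of $\bar{\mathbf{e}}$, i.e.\ the transformed codebook is asymptotically distortion‑optimal. The manifold $\mathbb{CP}^{M-1}$ has real dimension $k_q=2(M-1)=2(M^rM^t-1)$ and $\sin^2\theta$ is locally a squared distance, so the asymptotic distortion formula (\ref{eq:quantization_distortion_expression}) (after the $\lambda(\cdot)$‑optimization) gives $\mathbb{E}\{\|\mathbf{E}\|^2\sin^2\theta\}\le\beta\cdot 2^{-2B/k_q}=\beta\cdot 2^{-B/(M^rM^t-1)}$ for $B$ large, which is the claimed exponent; the constant $\beta$ depends only on the spectra $\{\lambda_m\},\{\sigma_n\}$ and $M^r,M^t$.

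Finally I would identify $\beta$ explicitly. Conditioning on $\{|g_{mn}|^2\}$ — which are i.i.d.\ and, up to scale, the chi‑square‑with‑two‑degrees‑of‑freedom variables $\{y_{mn}\}$ of (\ref{eq:beta_expression}) — the source density at $\bar{\mathbf{e}}$, the local inertial profile $m(\cdot)$, and the norm factor $\|\mathbf{E}\|^2=\sum_{m,n}\lambda_m\sigma_n|g_{mn}|^2$ become explicit functions of $\{y_{mn}\}$ and the spectra; carrying out the optimization in (\ref{eq:quantization_distortion_expression}) and the resulting integral over $\mathbb{CP}^{M-1}$ collapses everything to the expectation over $\{y_{mn}\}$ displayed in (\ref{eq:beta_expression}). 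As a consistency check, for $\mathbf{\Phi}^r=\mathbf{\Phi}^t=\mathbf{I}$ all $\lambda_m=\sigma_n=1$, $M=N_rN_t$, and the expression reduces to $\beta=N_rN_t-1$. I expect this last step to be the main obstacle: justifying that the transformed codebook attains the high‑resolution optimal point density for a \emph{non‑isotropic} source on $\mathbb{CP}^{M-1}$, and then reducing the Grassmannian integral to the clean chi‑square expectation of (\ref{eq:beta_expression}); by contrast the dimension reduction and the $\sin^2\theta$ identity are routine.
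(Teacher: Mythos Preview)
Your proposal is correct and follows the same route as the paper: rotate to the eigenvalue domain via (\ref{eq:definition_of_e}), restrict to the $M^{r}M^{t}$ active coordinates, recognize the reduced problem as a transformed-codebook quantization (the paper writes this out explicitly as (\ref{eq:mapped_MISO})), and apply the high-resolution formula (\ref{eq:quantization_distortion_expression}). The obstacle you flag is not actually one in the paper's argument: rather than proving the transformed codebook attains the \emph{optimal} point density on $\mathbb{CP}^{M^{r}M^{t}-1}$, the paper simply computes its \emph{actual} point density (\ref{eq:codeword_density}) and an explicit inertial-profile upper bound (\ref{eq:inertial_us}) by change of variables under $\mathbf{T}^{1/2}$, and substitutes these directly into (\ref{eq:quantization_distortion_expression}) to obtain (\ref{eq:expectation_Dq}), from which (\ref{eq:beta_expression}) follows by expanding $\mathbf{T}$ in terms of $\{\lambda_m\sigma_n\}$.
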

\begin{proof}
Based on (\ref{eq:expression_for_E}), we have $\mathbf{\hat{H}}=\mathbf{W}^{l_{0}}$,
where $l_{0}$ is given by: 
\begin{eqnarray*}
l_{0} & = & \textrm{arg}\max_{1\leq l\leq2^{B}}||\textrm{vec}(\mathbf{H})^{H}\textrm{vec}(\mathbf{W}^{l})|=\textrm{arg}\max_{1\leq l\leq2^{B}}||\textrm{vec}(\mathbf{F}^{r}\mathbf{E}(\mathbf{F}^{t})^{H})^{H}\textrm{vec}(\mathbf{W}^{l})||\\
 & = & \textrm{arg}\max_{1\leq l\leq2^{B}}||\textrm{vec}(\mathbf{E})^{H}\cdot\textrm{vec}((\mathbf{F}^{r})^{H}\mathbf{W}^{l}(\mathbf{F}^{t}))||\\
 & \overset{(h_{1})}{=} & \textrm{arg}\max_{1\leq l\leq2^{B}}||\textrm{vec}(\mathbf{E})^{H}\cdot\textrm{vec}\left(\frac{(\mathbf{\Lambda}^{r})^{\nicefrac{1}{2}}(\mathbf{F}^{r})^{H}\mathbf{S}^{l}\mathbf{F}^{t}(\mathbf{\Lambda}^{t})^{\nicefrac{1}{2}}}{||(\mathbf{\Lambda}^{r})^{\nicefrac{1}{2}}(\mathbf{F}^{r})^{H}\mathbf{S}^{l}\mathbf{F}^{t}(\mathbf{\Lambda}^{t})^{\nicefrac{1}{2}}||}\right)||\\
 & = & \textrm{arg}\max_{1\leq l\leq2^{B}}||\textrm{vec}(\mathbf{E})^{H}\frac{(\mathbf{\Lambda}^{t}\otimes\mathbf{\Lambda}^{r})^{\nicefrac{1}{2}}\textrm{vec}\left((\mathbf{F}^{r})^{H}\mathbf{S}^{l}\mathbf{F}^{t}\right)}{||(\mathbf{\Lambda}^{t}\otimes\mathbf{\Lambda}^{r})^{\nicefrac{1}{2}}\textrm{vec}\left((\mathbf{F}^{r})^{H}\mathbf{S}^{l}\mathbf{F}^{t}\right)||}||,
\end{eqnarray*}
where in $(h_{1})$, $\mathbf{W}^{l}=\frac{\mathbf{F}^{r}(\mathbf{\Lambda}^{r})^{\nicefrac{1}{2}}(\mathbf{F}^{r})^{H}\mathbf{S}^{l}\mathbf{F}^{t}(\mathbf{\Lambda}^{t})^{\nicefrac{1}{2}}(\mathbf{F}^{t})^{H}}{||\mathbf{F}^{r}(\mathbf{\Lambda}^{r})^{\nicefrac{1}{2}}(\mathbf{F}^{r})^{H}\mathbf{S}^{l}\mathbf{F}^{t}(\mathbf{\Lambda}^{t})^{\nicefrac{1}{2}}(\mathbf{F}^{t})^{H}||}$
according to (\ref{eq:codeword_mapping}). Based on (\ref{eq:transformed_codebook}),
(\ref{eq:csi_2}), we can further obtain 
\begin{eqnarray*}
\textrm{vec}(\mathbf{\hat{E}}) & = & \textrm{vec}((\mathbf{F}^{r})^{H}\mathbf{\hat{H}}\mathbf{F}^{t})=\textrm{vec}((\mathbf{F}^{r})^{H}\mathbf{W}^{l_{0}}\mathbf{F}^{t})\\
 & = & \frac{(\mathbf{\Lambda}^{t}\otimes\mathbf{\Lambda}^{r})^{\nicefrac{1}{2}}\textrm{vec}\left((\mathbf{F}^{r})^{H}\mathbf{S}^{l_{0}}\mathbf{F}^{t}\right)}{||(\mathbf{\Lambda}^{t}\otimes\mathbf{\Lambda}^{r})^{\nicefrac{1}{2}}\textrm{vec}\left((\mathbf{F}^{r})^{H}\mathbf{S}^{l_{0}}\mathbf{F}^{t}\right)||}.
\end{eqnarray*}

Therefore, $\textrm{vec}(\mathbf{\hat{E}})$ can be regarded as the
selected codeword from the codebook $\mathcal{C}^{1}$ with input
vector $\textrm{vec}(\mathbf{E})$, i.e.,
\begin{equation}
\textrm{vec}(\mathbf{\hat{E}})=\textrm{arg}\max_{\mathbf{f}^{l}\in\mathcal{C}^{1}}||\textrm{vec}(\mathbf{E})^{H}\mathbf{f}_{l}||,\label{eq:E_expresssion_quantize}
\end{equation}
\begin{equation}
\mathcal{C}^{1}=\left\{ \mathbf{f}^{l}\mid\mathbf{f}^{l}=\frac{\mathbf{P}^{\frac{1}{2}}\textrm{vec}\left((\mathbf{F}^{r})^{H}\mathbf{S}^{l}\mathbf{F}^{t}\right)}{||\mathbf{P}^{\frac{1}{2}}\textrm{vec}\left((\mathbf{F}^{r})^{H}\mathbf{S}^{l}\mathbf{F}^{t}\right)||},\,\mathbf{S}^{l}\in\mathcal{C}^{0}\right\} ,\label{eq:new_codebook}
\end{equation}
where $\mathbf{P}=\mathbf{\Lambda}^{t}\otimes\mathbf{\Lambda}^{r}$. 

Before trying to calculate the quantization distortion, we shall first
eliminate the non effective dimensions \cite{zheng2008analysis}.
As $\textrm{vec}(\mathbf{E})=\mathbf{P}^{\frac{1}{2}}\textrm{vec}\left((\mathbf{F}^{r})^{H}\mathbf{H}^{w}\mathbf{F}^{t}\right)$
from (\ref{eq:definition_of_e}) where $\mathbf{\Lambda}^{r}=\textrm{diag}([\begin{array}{cccc}
\lambda_{1} & \cdots & \lambda_{M^{r}} & \mathbf{0}\end{array}])$, $\mathbf{\Lambda}^{t}=\textrm{diag}([\begin{array}{cccc}
\sigma & \cdots & \sigma_{M^{t}} & \mathbf{0}\end{array}])$, we have that there are $\left(N_{r}N_{t}-M^{r}M^{t}\right)$ entries
of $\textrm{vec}(\mathbf{E})$ always being null. Besides, the corresponding
entries in all codewords in codebook $\mathcal{C}^{1}$ are also zero
according to (\ref{eq:new_codebook}).

Denote the non-zero support (non-zero index set) for $\textrm{vec}(\mathbf{E})$
as $\mathbb{S}$ and the cardinality $|\mathbb{S}|=M^{t}M^{r}$. Denote
$\mathbf{g}(\mathbb{A})$ as the reduced vector formed by elements
of $\mathbf{g}$ whose index lies in $\mathbb{A}$. We get the following
reduced quantization model (which includes the source channel model,
the codebook, and the codeword selection criterion): 
\begin{equation}
\begin{cases}
\textrm{channel correlation model} & \mathbf{h}=\mathbf{T}^{\frac{1}{2}}\mathbf{x}_{m}\\
\textrm{transformed codebook} & \mathcal{C}^{\#}=\left\{ \mathbf{f}_{l}=\frac{\mathbf{T}^{\frac{1}{2}}\cdot\left(\textrm{vec}\left((\mathbf{F}^{r})^{H}\mathbf{S}^{l}\mathbf{F}^{t}\right)(\mathbb{S})\right)}{||\mathbf{T}^{\frac{1}{2}}\cdot\left(\textrm{vec}\left((\mathbf{F}^{r})^{H}\mathbf{S}^{l}\mathbf{F}^{t}\right)(\mathbb{S})\right)||}\mid\mathbf{S}^{l}\in\mathcal{C}^{0}\right\} \\
\textrm{codeword selection} & \mathbf{\hat{h}}=\textrm{arg}\max_{\mathbf{f}_{l}\in\mathcal{C}^{\#}}||\mathbf{h}^{H}\mathbf{f}_{l}||
\end{cases}\label{eq:mapped_MISO}
\end{equation}
where $\mathbf{h}=\textrm{vec}(\mathbf{E})(\mathbb{S})$, $\mathbf{\hat{h}}=\textrm{vec}(\hat{\mathbf{E}})(\mathbb{S})$,
$\mathbf{x}_{m}=\textrm{vec}\left((\mathbf{F}^{r})^{H}\mathbf{H}^{w}\mathbf{F}^{t}\right)(\mathbb{S})\in\mathbb{C}^{M^{r}M^{t}\times1}$,
and 
\begin{equation}
\mathbf{T}=\mathbf{P}(\mathbb{S},\,\mathbb{S})=\textrm{diag}([\begin{array}{ccccccc}
\lambda_{1}\sigma_{1}, & \lambda_{2}\sigma_{1} & \cdots & \lambda_{M^{r}}\sigma_{1}, & \lambda_{1}\sigma_{2} & \cdots & ,\lambda_{M^{r}}\sigma_{M^{t}}\end{array}]).\label{eq:Expression_T}
\end{equation}
As $\mathbf{H}^{w}$ is i.i.d. complex Gaussian distributed and $\mathbf{F}^{r}$,
$\mathbf{F}^{t}$ are unitary matrices, we have $(\mathbf{F}^{r})^{H}\mathbf{H}^{w}\mathbf{F}^{t}$
as well as $\mathbf{x}_{m}$ are i.i.d. complex Gaussian according
to the bi-unitarily invariant property of Gaussian random matrix \cite{tulino2004random}.
On the other hand, the codewords in $\{\textrm{vec}\left((\mathbf{F}^{r})^{H}\mathbf{S}^{l}\mathbf{F}^{t}\right)\mid\mathbf{S}^{l}\in\mathcal{C}^{0}\}$
are also isotropically distributed \cite{zheng2008analysis}. Therefore,
we can deploy the similar problem formulation described in \cite{zheng2008analysis}
and use the high-resolution asymptotic analysis to calculate the quantization
distortion. Define the distortion function to be
\begin{equation}
D_{Q}=||\textrm{vec}(\mathbf{E})||^{2}\cdot\left(1-|\left\langle \mathbf{v}_{e},\mathbf{\hat{v}}_{e}\right\rangle |^{2}\right)=||\mathbf{h}||^{2}\cdot\left(1-|\left\langle \mathbf{v}_{h},\mathbf{\hat{v}}_{h}\right\rangle |^{2}\right),\label{eq:proposed_distortion-1}
\end{equation}
where $\mathbf{v}_{e}=\frac{\textrm{vec}(\mathbf{E})}{||\textrm{vec}(\mathbf{E})||}$,
$\mathbf{v}_{h}=\frac{\mathbf{h}}{||\mathbf{h}||}$, and $\hat{\mathbf{v}}_{e}$,
$\mathbf{\hat{v}}_{h}$ are the corresponding quantized vectors.

We get the inertial profile $\tilde{m}_{tr-c}(\mathbf{v},\mathbf{h})$
\cite{zheng2008analysis} is upper bounded by
\begin{equation}
\tilde{m}_{tr-c}(\mathbf{v},\mathbf{h})\leq\frac{\gamma_{t}^{-1/(t-1)}||\mathbf{h}||^{2}(\mathbf{v}_{h}^{H}\mathbf{T}^{-1}\mathbf{v}_{h})}{t}\textrm{Tr}((\mathbf{I}-\mathbf{v}_{h}\mathbf{v}_{h}^{H})\mathbf{T}).\label{eq:inertial_us}
\end{equation}
where $t=M^{r}M^{t}$ (please refer to \cite{zheng2008analysis,zheng2007analysis}
for value of $\gamma_{t}$). The codeword density $\lambda(\mathbf{v})$
is 
\begin{equation}
\lambda(\mathbf{v})=\gamma_{t}^{-1}\cdot\textrm{det}(\mathbf{T}){}^{-1}\cdot(\mathbf{v}_{h}^{H}\mathbf{T}^{-1}\mathbf{v}_{h})^{-t}.\label{eq:codeword_density}
\end{equation}

By substituting (\ref{eq:inertial_us}), (\ref{eq:codeword_density})
into (\ref{eq:quantization_distortion_expression}), we get
\begin{equation}
\mathbb{E}(D_{Q})\leq\frac{\textrm{det}(\mathbf{T})^{\nicefrac{1}{t-1}}}{t}\mathbb{E}\left\{ \frac{\left(\mathbf{h}^{H}\mathbf{T}^{-1}\mathbf{h}\right)^{(\nicefrac{2t-1}{t-1})}\left(\textrm{Tr}(\mathbf{T})||\mathbf{h}||^{2}-\mathbf{h}^{H}\mathbf{T}\mathbf{h}\right)}{||\mathbf{h}||^{(\nicefrac{4t-2}{t-1})}}\right\} \cdot2^{-\frac{B}{t-1}}=\beta\cdot2^{-\frac{B}{M^{r}M^{t}-1}}.\label{eq:expectation_Dq}
\end{equation}

Substitute the expression of $\mathbf{T}$ (\ref{eq:Expression_T})
into the above formula, we can get the expression of $\beta$ as shown
in \textit{Lemma \ref{lem:(CSI-Quantization-Distortions):}}. For
the details of the omitted derivation, please refer to \cite{zheng2008analysis,zheng2007analysisDetailed}. 

We can express the relationship between the $\textrm{vec}(\mathbf{E})$
and $\textrm{vec}(\mathbf{\hat{E}})$ as: 
\[
\textrm{vec}(\mathbf{E})=||\mathbf{E}||\cdot|\left\langle \mathbf{v}_{e},\mathbf{\hat{v}}_{e}\right\rangle |e^{j\theta}\cdot\textrm{vec}(\mathbf{\hat{E}})+||\mathbf{E}||\cdot\left(1-|\left\langle \mathbf{v}_{e},\mathbf{\hat{v}}_{e}\right\rangle |^{2}\right)^{\nicefrac{1}{2}}\cdot\textrm{vec}(\mathbf{Z})
\]
where $\textrm{vec}(\mathbf{Z})$ is a random unit-norm vector distributed
in the orthogonal complement space of $\textrm{vec}(\mathbf{\hat{E}})$,
$\left\langle \cdot\right\rangle $ denotes the inner product operator
of two vectors. By setting $\alpha=||\mathbf{E}||\cdot|\left\langle \mathbf{v}_{e},\mathbf{\hat{v}}_{e}\right\rangle |e^{j\theta}$,
$\Delta\mathbf{E}=||\mathbf{E}||\cdot\left(1-|\left\langle \mathbf{v}_{e},\mathbf{\hat{v}}_{e}\right\rangle |^{2}\right)^{\nicefrac{1}{2}}\cdot\mathbf{Z}$,
we obtain 
\begin{equation}
\mathbb{E}\{||\Delta\mathbf{E}||^{2}\}=\mathbb{E}\{||\mathbf{E}||^{2}(1-|\left\langle \mathbf{v}_{e},\mathbf{\hat{v}}_{e}\right\rangle |^{2})\}=\mathbb{E}\{D_{Q}\}.\label{eq:delta_e_ex}
\end{equation}

Substitute the upper bound on $\mathbb{E}\{D_{Q}\}$ in (\ref{eq:expectation_Dq})
into (\ref{eq:delta_e_ex}), we get the lemma is proved.
\end{proof}
%aid

By combining \textit{Lemma \ref{quantization-e} }and \textit{Lemma
\ref{As-quantization-bits-expression}}, \textit{Lemma} \ref{lem:(CSI-Quantization-Distortions):}
is proved.

\subsection{\label{sub:Proof-for-the-Interference-upper}Proof for Theorem 1
(Upper Bound of average RINR)}

From (\ref{eq:csi_quantization_distortion}) and (\ref{eq:definition_of_e})
we have 
\[
\mathbf{H}_{ji}=\alpha_{ji}\mathbf{\hat{H}}_{ji}+\mathbf{F}_{ji}^{r}\Delta\mathbf{E}_{ji}(\mathbf{F}_{ji}^{t})^{H}.
\]

Based on (\ref{eq:Residue_interference_expression}) and the fact
that $\mathbf{\hat{U}}_{j}^{H}\mathbf{\hat{H}}_{ji}\mathbf{\hat{V}}_{i}=\mathbf{0}$
in (\ref{eq:IA_UV_property}), we have
\[
I_{j}=\dfrac{P}{d}\sum_{i,i\neq j}^{K}l_{ji}||\mathbf{\hat{U}}_{j}^{H}\mathbf{F}_{ji}^{r}\Delta\mathbf{E}_{ji}(\mathbf{F}_{ji}^{t})^{H}\mathbf{\hat{V}}_{i}||^{2}.
\]

We shall prove the following lemma first.
\begin{lemma}
\label{interference_leak-lemma}We have the following property $\forall j,\, i,\; i\neq j$,
\end{lemma}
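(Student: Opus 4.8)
The plan is to establish the per-link estimate
\[
\mathbb{E}\!\left\{\big\|\mathbf{\hat{U}}_{j}^{H}\mathbf{F}_{ji}^{r}\Delta\mathbf{E}_{ji}(\mathbf{F}_{ji}^{t})^{H}\mathbf{\hat{V}}_{i}\big\|^{2}\right\}\leq\frac{d^{2}}{M_{ji}^{r}M_{ji}^{t}-1}\,\mathbb{E}\!\left\{\|\Delta\mathbf{E}_{ji}\|^{2}\right\},
\]
from which \textit{Theorem}~\ref{theorem 1:(Residue-Interference-Upper} follows. First I would vectorize the summand: with $\textrm{vec}(\mathbf{A}\mathbf{X}\mathbf{B})=(\mathbf{B}^{T}\otimes\mathbf{A})\textrm{vec}(\mathbf{X})$,
\[
\big\|\mathbf{\hat{U}}_{j}^{H}\mathbf{F}_{ji}^{r}\Delta\mathbf{E}_{ji}(\mathbf{F}_{ji}^{t})^{H}\mathbf{\hat{V}}_{i}\big\|^{2}=\textrm{vec}(\Delta\mathbf{E}_{ji})^{H}\mathbf{Q}_{ji}\,\textrm{vec}(\Delta\mathbf{E}_{ji}),
\]
where $\mathbf{Q}_{ji}=\big((\mathbf{F}_{ji}^{t})^{H}\mathbf{\hat{V}}_{i}\mathbf{\hat{V}}_{i}^{H}\mathbf{F}_{ji}^{t}\big)^{*}\otimes\big((\mathbf{F}_{ji}^{r})^{H}\mathbf{\hat{U}}_{j}\mathbf{\hat{U}}_{j}^{H}\mathbf{F}_{ji}^{r}\big)$. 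By (\ref{eq:IA_uv_pro}) and unitarity of $\mathbf{F}_{ji}^{r},\mathbf{F}_{ji}^{t}$, each Kronecker factor is a rank-$d$ orthogonal projector, so $\mathbf{Q}_{ji}$ is an orthogonal projector with $\textrm{Tr}(\mathbf{Q}_{ji})=d^{2}$ and $\mathbf{0}\preceq\mathbf{Q}_{ji}\preceq\mathbf{I}$.

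Next I would use the geometry of the quantization error. From the proof of \textit{Lemma}~\ref{lem:(CSI-Quantization-Distortions):}, $\textrm{vec}(\Delta\mathbf{E}_{ji})$ is supported on the fixed $M_{ji}^{r}M_{ji}^{t}$-dimensional effective subspace $\mathcal{S}_{ji}$ (the nonzero support determined by $\mathbf{\Phi}_{ji}^{r},\mathbf{\Phi}_{ji}^{t}$) and is orthogonal to $\textrm{vec}(\mathbf{\hat{E}}_{ji})$. Writing $\textrm{vec}(\Delta\mathbf{E}_{ji})=\|\Delta\mathbf{E}_{ji}\|\,\mathbf{s}_{ji}$ with $\|\mathbf{s}_{ji}\|=1$, the high-resolution quantization-cell approximation behind the transformed codebook (cf.\ Appendix~\ref{sub:Preliminaries_Transformed_Codebook} and \cite{zheng2008analysis}) makes $\mathbf{s}_{ji}$, conditioned on the selected codeword $\mathbf{\hat{E}}_{ji}$ and on $\|\Delta\mathbf{E}_{ji}\|$, isotropically distributed on the unit sphere of the $(M_{ji}^{r}M_{ji}^{t}-1)$-dimensional space $\mathcal{S}_{ji}\ominus\textrm{span}(\textrm{vec}(\mathbf{\hat{E}}_{ji}))$, whose orthogonal projector I denote $\mathbf{\Pi}_{ji}$. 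Moreover, the IA transceivers — hence $\mathbf{Q}_{ji}$ — are deterministic functions of all the quantized channels $\{\mathbf{\hat{H}}_{nk}\}$, and conditioned on $\mathbf{\hat{E}}_{ji}$ they depend only on $\{\mathbf{\hat{H}}_{nk}:(n,k)\neq(j,i)\}$, which is independent of the pair $(\|\Delta\mathbf{E}_{ji}\|,\mathbf{s}_{ji})$ because the small-scale fading is independent across links. Conditioning on $\{\mathbf{\hat{H}}_{nk}\}$ and $\|\Delta\mathbf{E}_{ji}\|$ and averaging over $\mathbf{s}_{ji}$,
\[
\mathbb{E}\!\left\{\mathbf{s}_{ji}^{H}\mathbf{Q}_{ji}\mathbf{s}_{ji}\,\middle|\,\{\mathbf{\hat{H}}_{nk}\},\|\Delta\mathbf{E}_{ji}\|\right\}=\frac{\textrm{Tr}(\mathbf{\Pi}_{ji}\mathbf{Q}_{ji}\mathbf{\Pi}_{ji})}{M_{ji}^{r}M_{ji}^{t}-1}\leq\frac{\textrm{Tr}(\mathbf{Q}_{ji})}{M_{ji}^{r}M_{ji}^{t}-1}=\frac{d^{2}}{M_{ji}^{r}M_{ji}^{t}-1},
\]
where the inequality uses $\mathbf{Q}_{ji}\succeq\mathbf{0}$ together with $\mathbf{\Pi}_{ji}\preceq\mathbf{I}$. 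Multiplying by $\|\Delta\mathbf{E}_{ji}\|^{2}$ and taking the outer expectation gives the per-link estimate.

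Plugging this into $I_{j}=\frac{P}{d}\sum_{i\neq j}l_{ji}\|\mathbf{\hat{U}}_{j}^{H}\mathbf{F}_{ji}^{r}\Delta\mathbf{E}_{ji}(\mathbf{F}_{ji}^{t})^{H}\mathbf{\hat{V}}_{i}\|^{2}$, using $\mathbb{E}\{\|\Delta\mathbf{E}_{ji}\|^{2}\}=\mathbb{E}\{\|\Delta\mathbf{H}_{ji}\|^{2}\}=D_{ji}^{avg}$ from \textit{Lemma}~\ref{quantization-e}, and then $D_{ji}^{avg}\leq\beta_{ji}2^{-B_{ji}/(M_{ji}^{r}M_{ji}^{t}-1)}$ from \textit{Lemma}~\ref{lem:(CSI-Quantization-Distortions):}, yields \textit{Theorem}~\ref{theorem 1:(Residue-Interference-Upper} (the factors $P/d$ and $d^{2}$ combining into the overall $Pd$). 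The main obstacle is the isotropy claim for $\mathbf{s}_{ji}$ inside $\mathcal{S}_{ji}\ominus\textrm{span}(\textrm{vec}(\mathbf{\hat{E}}_{ji}))$: this is precisely where the structure of the transformed spatial codebook and the high-resolution cell approximation are indispensable, and it is what upgrades the crude pointwise estimate $\mathbf{s}_{ji}^{H}\mathbf{Q}_{ji}\mathbf{s}_{ji}\leq1$ to the sharper $d^{2}/(M_{ji}^{r}M_{ji}^{t}-1)$. Everything else — vectorization, the trace bound $\textrm{Tr}(\mathbf{\Pi}\mathbf{Q}\mathbf{\Pi})\leq\textrm{Tr}(\mathbf{Q})$ for PSD $\mathbf{Q}$, and the cross-link independence — is routine.
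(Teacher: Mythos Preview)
Your argument is correct and rests on the same key ingredient as the paper's proof: the high-resolution assumption that, conditioned on the selected codeword, the normalized quantization error $\mathbf{s}_{ji}$ is isotropic on the unit sphere of the $(M_{ji}^{r}M_{ji}^{t}-1)$-dimensional complement of $\textrm{vec}(\hat{\mathbf{E}}_{ji})$ inside the effective support. The packaging, however, differs. The paper expands $\|\mathbf{\hat{U}}_{j}^{H}\mathbf{F}_{ji}^{r}\Delta\mathbf{E}_{ji}(\mathbf{F}_{ji}^{t})^{H}\mathbf{\hat{V}}_{i}\|^{2}$ entrywise into $d^{2}$ scalar inner products $|(\mathbf{G}_{ji}^{r}(n)^{T}\otimes\mathbf{G}_{ji}^{l}(m))\,\textrm{vec}(\Delta\mathbf{E}_{ji})|^{2}$, invokes the IA condition $\mathbf{\hat{U}}_{j}^{H}\mathbf{\hat{H}}_{ji}\mathbf{\hat{V}}_{i}=\mathbf{0}$ to place each restricted row vector inside that same complement, and then applies the $\textrm{beta}(1,M_{ji}^{r}M_{ji}^{t}-2)$ formula for the squared cosine, finally bounding the restricted row norm by $1$. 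You work instead at the operator level with the rank-$d^{2}$ projector $\mathbf{Q}_{ji}$ and the single trace estimate $\textrm{Tr}(\mathbf{\Pi}_{ji}\mathbf{Q}_{ji}\mathbf{\Pi}_{ji})\leq\textrm{Tr}(\mathbf{Q}_{ji})=d^{2}$. This is more compact and, notably, does not require the IA orthogonality fact at all: in the paper's calculation that fact only upgrades an inequality to an equality \emph{before} the step $(e)$ where the restricted norm is bounded by $1$, so it is not needed for the final bound either. Your discussion of the conditioning (why $\mathbf{Q}_{ji}$ is conditionally independent of $(\|\Delta\mathbf{E}_{ji}\|,\mathbf{s}_{ji})$ given $\hat{\mathbf{E}}_{ji}$) is also more explicit than the paper's. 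The obstacle you flag---justifying isotropy of $\mathbf{s}_{ji}$ via the transformed-codebook cell approximation---is precisely the point where the paper, too, appeals to the high-resolution heuristic rather than giving a rigorous argument.
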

\[
\mathbb{E}\{||\mathbf{\hat{U}}_{j}^{H}\mathbf{F}_{ji}^{r}\Delta\mathbf{E}_{ji}(\mathbf{F}_{ji}^{t})^{H}\mathbf{\hat{V}}_{i}||^{2}\}\leq\frac{d^{2}\beta_{ji}}{M_{ji}^{r}M_{ji}^{t}-1}\cdot2^{-\frac{B_{ji}}{M_{ji}^{r}M_{ji}^{t}-1}}.
\]

\begin{proof}
Denote $\mathbf{G}_{ji}^{l}=\mathbf{\hat{U}}_{j}^{H}\mathbf{F}_{ji}^{r}\in\mathbb{C}^{d\times N_{r}}$,
$\mathbf{G}_{ji}^{r}=(\mathbf{F}_{ji}^{t})^{H}\mathbf{\hat{V}}_{i}\in\mathbb{C}^{N_{t}\times d}$.
We get that $\mathbf{G}_{ji}^{l}$ has orthonormal rows and $\mathbf{G}_{ji}^{r}$
has orthonormal columns. Denote $\mathbf{G}_{ji}^{l}(m)$ as the $m$-th
row of $\mathbf{G}_{ji}^{l}$, $\mathbf{G}_{ji}^{r}(n)$ as the $n$-th
column of $\mathbf{G}_{ji}^{r}$, $1\leq m,n\leq d$. We get
\begin{eqnarray}
\mathbb{E}\{||\mathbf{\hat{U}}_{j}^{H}\mathbf{F}_{ji}^{r}\Delta\mathbf{E}_{ji}(\mathbf{F}_{ji}^{t})^{H}\mathbf{\hat{V}}_{i}||^{2}\} & = & \sum_{m,n}\mathbb{E}\left\{ ||\mathbf{G}_{ji}^{l}(m)\Delta\mathbf{E}_{ji}\mathbf{G}_{ji}^{r}(n)||^{2}\right\} \nonumber \\
 & = & \sum_{m,n}\mathbb{E}\left\{ ||\textrm{vec}\left(\mathbf{G}_{ji}^{l}(m)\Delta\mathbf{E}_{ji}\mathbf{G}_{ji}^{r}(n)\right)||^{2}\right\} \nonumber \\
 & = & \sum_{m,n}\mathbb{E}\left\{ ||\left(\mathbf{G}_{ji}^{r}(n)^{T}\otimes\mathbf{G}_{ji}^{l}(m)\right)\cdot\textrm{vec}\left(\Delta\mathbf{E}_{ji}\right)||^{2}\right\} \nonumber \\
 & = & \sum_{m,n}\mathbb{E}\left\{ ||\left(\mathbf{G}_{ji}^{r}(n)^{T}\otimes\mathbf{G}_{ji}^{l}(m)\right)(\mathbb{S}_{ji}^{T})\cdot\textrm{vec}\left(\Delta\mathbf{E}_{ji}(\mathbb{S}_{ji})\right)||^{2}\right\} \label{eq:sum_beta}
\end{eqnarray}
where $\mathbb{S}_{ji}$ denotes the non-zero support for the column
vector $\textrm{vec}(\mathbf{E}_{ji})$ (as explained in Lemma \ref{As-quantization-bits-expression}
in Appendix \ref{sub:Proof-for-Lemma-quantization distorion}), $\mathbb{S}_{ji}^{T}$
as the transpose of the support $\mathbb{S}_{ji}$ (note we use $\mathbb{S}_{ji}^{T}$
for $\mathbf{G}_{ji}^{r}(n)^{T}\otimes\mathbf{G}_{ji}^{l}(m)$ as
it is a row vector). 

We shall then illustrate two facts. First, we have $\mathbf{\hat{U}}_{j}^{H}\mathbf{F}_{ji}^{r}\mathbf{\hat{E}}_{ji}(\mathbf{F}_{ji}^{t})^{H}\mathbf{\hat{V}}_{i}=\mathbf{0}$
according to (\ref{eq:IA_UV_property}), and thus 
\[
\left(\mathbf{G}_{ji}^{r}(n)^{T}\otimes\mathbf{G}_{ji}^{l}(m)\right)(\mathbb{S}_{ji}^{T})\cdot\textrm{vec}\left(\mathbf{\hat{E}}_{ji}\right)(\mathbb{S}_{ji})=\left(\mathbf{G}_{ji}^{r}(n)^{T}\otimes\mathbf{G}_{ji}^{l}(m)\right)\cdot\textrm{vec}\left(\mathbf{\hat{E}}_{ji}\right)=0.
\]
Therefore, $\left(\left(\mathbf{G}_{ji}^{r}(n)^{T}\otimes\mathbf{G}_{ji}^{l}(m)\right)(\mathbb{S}_{ji}^{T})\right)^{H}$
lies in the $(M_{ji}^{r}M_{ji}^{t}-1)$ dimensional orthogonal complement
space of $\textrm{vec}\left(\mathbf{\hat{E}}_{ji}\right)(\mathbb{S}_{ji})$.
Second, under large quantization bits, the codeword density (\ref{eq:codeword_density})
near $\textrm{vec}(\mathbf{\hat{E}}_{ji})(\mathbb{S}_{ji})$ can be
approximated as uniform and thus$\textrm{vec}(\Delta\mathbf{E}_{ji})(\mathbb{S}_{ji})$
is approximately isotropically distributed in $(M_{ji}^{r}M_{ji}^{t}-1)$
dimensional orthogonal complement space of $\textrm{vec}(\mathbf{\hat{E}}_{ji})(\mathbb{S}_{ji})$.
Based on these two facts, we have

\begin{eqnarray}
 &  & \mathbb{E}\left\{ ||\left(\mathbf{G}_{ji}^{r}(n)^{T}\otimes\mathbf{G}_{ji}^{l}(m)\right)(\mathbb{S}_{ji}^{T})\cdot\textrm{vec}(\Delta\mathbf{E}_{ji})(\mathbb{S}_{ji})||^{2}\right\} \nonumber \\
 & = & \mathbb{E}\left\{ ||\left(\mathbf{G}_{ji}^{r}(n)^{T}\otimes\mathbf{G}_{ji}^{l}(m)\right)(\mathbb{S}_{ji}^{T})||^{2}||\textrm{vec}(\Delta\mathbf{E}_{ji})(\mathbb{S}_{ji})||^{2}\right\} \cdot\mathbb{E}\left(\textrm{beta}(1,M_{ji}^{r}M_{ji}^{t}-2)\right)\nonumber \\
 & \overset{(e)}{\leq} & \mathbb{E}\left\{ ||\textrm{vec}(\Delta\mathbf{E}_{ji})(\mathbb{S}_{ji})||^{2}\right\} \cdot\frac{1}{M_{ji}^{r}M_{ji}^{t}-1}\nonumber \\
 & = & \frac{\beta_{ji}}{M_{ji}^{r}M_{ji}^{t}-1}\cdot2^{-\frac{B_{ji}}{M_{ji}^{r}M_{ji}^{t}-1}}\label{eq:one_beta}
\end{eqnarray}
where $(e)$ comes from the fact that $||\mathbf{G}_{ji}^{r}(n)^{T}\otimes\mathbf{G}_{ji}^{l}(m)||=1$,
beta($\cdot$) denotes beta distribution \cite{yoo2007multi}. 

By combining (\ref{eq:sum_beta}) and (\ref{eq:one_beta}), \textit{Lemma
\ref{interference_leak-lemma} }is proved.
\end{proof}
%sdlkf

Based on\textit{ Lemma \ref{interference_leak-lemma}, }we easily
get: 
\[
\mathbb{E}\{I_{j}\}\leq\dfrac{P}{d}\sum_{i,i\neq j}^{K}l_{ji}\mathbb{E}\{||\mathbf{\hat{U}}_{j}^{H}\mathbf{F}_{ji}^{r}\Delta\mathbf{E}_{ji}(\mathbf{F}_{ji}^{t})^{H}\mathbf{\hat{V}}_{i}||^{2}\}\leq Pd\sum_{i,i\neq j}^{K}\left(\frac{\beta_{ji}l_{ji}}{M_{ji}^{r}M_{ji}^{t}-1}\right)2^{-\frac{B_{ji}}{M_{ji}^{r}M_{ji}^{t}-1}}.
\]

\subsection{\label{sub:Proof-for-Theorem-2}Proof for Theorem 2 (Bits Allocation
Solution)}

Formulate the Lagrangian with multiplier $\gamma$, and set the derivative
w.r.t. $B_{ji}$ and $\gamma$ to zero 
\[
L=Pd\cdot\sum_{i.j,i\neq j}^{K}\left(\frac{\beta_{ji}l_{ji}}{M_{ji}^{r}M_{ji}^{t}-1}\right)\cdot2^{-\frac{B_{ji}}{M_{ji}^{r}M_{ji}^{t}-1}}+\gamma\left(\sum_{i,j,i\neq j}^{K}B_{ji}-B\right).
\]
\begin{equation}
\frac{\partial L}{\partial B_{ji}}=-Pd\cdot\frac{\beta_{ji}l_{ji}\ln2}{(M_{ji}^{r}M_{ji}^{t}-1)^{2}}\cdot2^{-\frac{B_{ji}}{M_{ji}^{r}M_{ji}^{t}-1}}+\gamma=0.\label{eq:kkt_1}
\end{equation}
\begin{equation}
\frac{\partial L}{\partial\gamma}=\sum_{i,j,i\neq j}^{K}B_{ji}-B=0.\label{eq:kkt_2}
\end{equation}

From (\ref{eq:kkt_1}), we get 
\[
B_{ji}=(M_{ji}^{r}M_{ji}^{t}-1)\left(\log\left(\frac{\beta_{ji}l_{ji}}{(M_{ji}^{r}M_{ji}^{t}-1)^{2}}\right)+b\right)
\]
where $b=\log\frac{Pd\cdot\ln2}{{\color{blue}\gamma}}$. Combine the
above expression with (\ref{eq:kkt_2}) as well as the condition that
$B_{ji}\geq0$, we get the desired solution.

\subsection{\label{sub:Proof-for-Lem-perfect-throughput}Proof for Theorem 3
(Throughput under Perfect CSIT )}

Given any $(\mathbf{U}_{j},\mathbf{V}_{j})$, we can construct unitary
matrices $\mathbf{\dot{U}}_{j}=\left[\begin{array}{cc}
\mathbf{U}_{j} & \mathbf{U}_{j}^{c}\end{array}\right]$, $\mathbf{\dot{V}}_{j}=\left[\begin{array}{cc}
\mathbf{V}_{j} & \mathbf{V}_{j}^{c}\end{array}\right]$. As $\{(\mathbf{U}_{j},\mathbf{V}_{j})\}$ are independent of the
direct channel states $\{\mathbf{H}_{jj}\}$, we have that $\{(\mathbf{\dot{U}}_{j},\mathbf{\dot{V}}_{j})\}$
are also independent of them. By combining this feature with the fact
that $\{\mathbf{H}_{jj}\}$ are i.i.d. complex Gaussian distributed,
we get $\mathbf{\dot{U}}_{j}^{H}\mathbf{H}_{jj}\mathbf{\dot{V}}_{j}$
is also i.i.d. complex Gaussian distributed according to the \textit{bi-unitarily
invariant property} of i.i.d. complex Gaussian matrix \cite{tulino2004random}.
Therefore, 
\begin{equation}
\mathbf{U}_{j}^{H}\mathbf{H}_{jj}\mathbf{V}_{j}=\mathbf{U}_{j}^{H}(\mathbf{\dot{U}}_{j}\mathbf{\dot{U}}_{j}^{H})\mathbf{H}_{jj}(\mathbf{\dot{V}}_{j}\mathbf{\dot{V}}_{j}^{H})\mathbf{V}_{j}=\left[\begin{array}{cc}
\underset{d\times d}{\mathbf{I}} & \mathbf{0}\end{array}\right](\mathbf{\dot{U}}_{j}^{H}\mathbf{H}_{jj}\mathbf{\dot{V}}_{j})\left[\begin{array}{c}
\mathbf{I}_{d\times d}\\
\mathbf{0}
\end{array}\right]=\underset{d\times d}{\tilde{\mathbf{H}}_{jj}}\label{eq:unitary_diagonal}
\end{equation}
where $\tilde{\mathbf{H}}_{jj}$ denotes the left upper $(d\times d)$
sub matrix of $(\mathbf{\dot{U}}_{j}^{H}\mathbf{H}_{jj}\mathbf{\dot{V}}_{j})$,
and is thus i.i.d. complex Gaussian distributed. Therefore, $\tilde{\mathbf{H}}_{jj}\tilde{\mathbf{H}}_{jj}^{H}$
is a central Wishart matrix with degree of freedom $d$ and covariance
matrix $I_{d}$ . We have,
\[
R_{per}=\sum_{j=1}^{K}\mathbb{E}\left[\log\textrm{det}\left(\mathbf{I}+\frac{P}{d}\tilde{\mathbf{H}}_{jj}\tilde{\mathbf{H}}_{jj}^{H}\right)\right]=Kd\int_{0}^{\infty}\log(1+\frac{P}{d}\cdot v)f(v)\textrm{d}v
\]
where $f(v)$ is the marginal probability density (p.d.f.) function
of the unordered eigenvalues of the\textit{ $(d\times d)$ central
Wishart} matrix with $d$ degrees of freedom and covariance matrix
$\mathbf{I}_{d}$ $\mathbf{W}_{d}(\mathbf{I}_{d},\; d)$) \cite{tulino2004random}
(closed-form expression of $f(v)$ can be found in page 32, \cite{tulino2004random}
).

\subsection{\label{sub:Proof-for-Theorem-Practical-Throughput}Proof for Lemma
2 (Throughput LB for Given RINR)}

For any given $(\mathbf{\hat{U}}_{i},\mathbf{\hat{V}}_{i})$, we can
construct unitary matrices $\mathbf{\bar{U}}_{j}=\left[\begin{array}{cc}
\mathbf{\hat{U}}_{j} & \mathbf{\hat{U}}_{j}^{c}\end{array}\right]$, $\mathbf{\bar{V}}_{j}=\left[\begin{array}{cc}
\mathbf{\hat{V}}_{j} & \mathbf{\hat{V}}_{j}^{c}\end{array}\right]$. As $\{(\mathbf{\bar{U}}_{j},\mathbf{\bar{V}}_{j})\}$ are independent
of the i.i.d. complex Gaussian matrix $\mathbf{H}_{jj}$, we have
$\mathbf{\bar{U}}_{j}^{H}\mathbf{H}_{jj}\mathbf{\bar{V}}_{j}$ is
also i.i.d. complex Gaussian distributed \cite{tulino2004random}.
Therefore, 
\[
\mathbf{\hat{U}}_{j}^{H}\mathbf{H}_{jj}\mathbf{\hat{V}}_{j}=\mathbf{\hat{U}}_{j}^{H}(\mathbf{\bar{U}}_{j}\mathbf{\bar{U}}_{j}^{H})\mathbf{H}_{jj}(\mathbf{\bar{V}}_{j}\mathbf{\bar{V}}_{j}^{H})\mathbf{\hat{V}}_{j}=\left[\begin{array}{cc}
\mathbf{I}_{d} & \mathbf{0}\end{array}\right](\mathbf{\bar{U}}_{j}^{H}\mathbf{H}_{jj}\mathbf{\bar{V}}_{j})\left[\begin{array}{c}
\mathbf{I}_{d}\\
\mathbf{0}
\end{array}\right].
\]
Hence,$\mathbf{\hat{U}}_{j}^{H}\mathbf{H}_{jj}\mathbf{\hat{V}}_{j}$
(the left upper $(d\times d)$ sub matrix of $\mathbf{\bar{U}}_{j}^{H}\mathbf{H}_{jj}\mathbf{\bar{V}}_{j}$)
is i.i.d. complex Gaussian and is statistically independent of $(\mathbf{\hat{U}}_{j},\mathbf{\hat{V}}_{j})$
according to the \textit{bi-unitarily invariant property} of i.i.d.
complex Gaussian matrix \cite{tulino2004random}. On the other hand,
as $\mathbf{H}_{jj}$ is independent of $\{\mathbf{H}_{ji},\, i\neq j\}$,
we get that $\hat{\mathbf{U}}_{j}^{H}\mathbf{H}_{jj}\mathbf{\hat{V}}_{j}$
and $\mathbf{\hat{U}}_{j}^{H}\mathbf{H}_{ji}\mathbf{\hat{V}}_{i}$
(for all $i\neq j$ ) are conditionally independent given $\{(\mathbf{\hat{U}}_{j},\mathbf{\hat{V}}_{j})\}$.
Combine this feature with the fact that $\mathbf{\hat{U}}_{j}^{H}\mathbf{H}_{jj}\mathbf{\hat{V}}_{j}$
is statistically independent of $\{(\mathbf{\hat{U}}_{j},\mathbf{\hat{V}}_{j})\}$,
we have that $\mathbf{\hat{U}}_{j}^{H}\mathbf{H}_{jj}\mathbf{\hat{V}}_{j}$
is independent of $\mathbf{\hat{U}}_{j}^{H}\mathbf{H}_{ji}\mathbf{\hat{V}}_{i}$
for all $i\neq j$. Hence, we get that the desired signal and interference
signal are decoupled. Denote 
\begin{equation}
\frac{P}{d}\sum_{i\neq j}^{K}l_{ji}(\mathbf{\hat{U}}_{j}^{H}\mathbf{H}_{ji}\mathbf{\hat{V}}_{i})(\mathbf{\hat{U}}_{j}^{H}\mathbf{H}_{ji}\mathbf{\hat{V}}_{i})^{H}=\mathbf{R}_{j}\mathbf{\Sigma}_{j}\mathbf{R}_{j}^{H}\label{eq:expression_diagonalization}
\end{equation}
as the eigenvalue decomposition, where $\mathbf{R}_{j}$ is a unitary
matrix and $\mathbf{\Sigma}_{j}$ is the diagonal matrix with real
positive eigenvalues. We have that the i.i.d. complex Gaussian matrix
$\mathbf{\hat{U}}_{j}^{H}\mathbf{H}_{jj}\mathbf{\hat{V}}_{j}$ is
independent of the unitary matrix $\mathbf{R}_{j}$. Therefore, we
have $\mathbf{H}_{j}^{q}=\mathbf{R}_{j}^{H}\mathbf{\hat{U}}_{j}^{H}\mathbf{H}_{jj}\mathbf{\hat{V}}_{j}$
is also i.i.d. complex Gaussian distributed statistically independent
of $\mathbf{R}_{j}^{H}$, and thus is also independent of $\mathbf{\Sigma}_{j}$.
Therefore, we can first take expectation w.r.t. $\mathbf{\Sigma}_{j}$,
and then w.r.t. $\mathbf{H}_{j}^{q}$ for $R_{lim}$ in (\ref{eq:pratical_throughput_1-1}),
i.e., 
\begin{equation}
R_{lim}=\sum_{j=1}^{K}\mathbb{E}_{\{\mathbf{H}_{j}^{q}\}}\left\{ \mathbb{E}_{\{\mathbf{\Sigma}_{j}\}}\left\{ \log\textrm{det}\left(\mathbf{I}+\frac{P}{d}\left(\mathbf{H}_{j}^{q}(\mathbf{H}_{j}^{q})^{H}\right)\left(\mathbf{I}+\mathbf{\Sigma}_{j}\right)^{-1}\right)\right\} \right\} .\label{eq:reduced_expression}
\end{equation}
To help prove the theorem, we shall first prove the following lemma.
\begin{lemma}
\label{-The-real-convexity} The function $g(\mathbf{X})=\log\textrm{det}\left(\mathbf{I}+\mathbf{A}\mathbf{X}^{-1}\right)$
is convex w.r.t. $\mathbf{X}$, where $\mathbf{A}\in\mathbb{C}^{d\times d}$
is a constant \textit{Hermitian} positive definite (PD) matrix and
$\mathbf{X}$ is defined on $\mathbb{D}_{d}=\left\{ \textrm{diag}([\begin{array}{ccc}
x_{1} & \cdots & x_{d}\end{array}])\mid x_{i}>0,\,\forall i\right\} $. \end{lemma}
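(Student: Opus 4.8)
The plan is to reduce the claim to a one-dimensional second-derivative test. First I would clear the inverse: since $\mathbf{I}+\mathbf{A}\mathbf{X}^{-1}=(\mathbf{X}+\mathbf{A})\mathbf{X}^{-1}$ and $\det(\mathbf{I}+\mathbf{A}\mathbf{X}^{-1})=\det(\mathbf{X}+\mathbf{A})/\det(\mathbf{X})>0$ (the eigenvalues of $\mathbf{A}\mathbf{X}^{-1}$ coincide with those of the PD matrix $\mathbf{X}^{-1/2}\mathbf{A}\mathbf{X}^{-1/2}$), we may write
\[
g(\mathbf{X})=\log\det(\mathbf{X}+\mathbf{A})-\log\det(\mathbf{X})=\log\det(\mathbf{X}+\mathbf{A})-\sum_{i=1}^{d}\log x_{i}.
\]
Since $\mathbb{D}_{d}$ is convex, it suffices to show that for every $\mathbf{X}\in\mathbb{D}_{d}$ and every real diagonal direction $\mathbf{D}=\textrm{diag}(z_{1},\dots,z_{d})$ the scalar function $t\mapsto g(\mathbf{X}+t\mathbf{D})$ has nonnegative second derivative at $t=0$.

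A routine matrix differentiation (using $\tfrac{d}{dt}\log\det\mathbf{M}(t)=\textrm{Tr}(\mathbf{M}(t)^{-1}\dot{\mathbf{M}}(t))$ and $\tfrac{d}{dt}\mathbf{M}(t)^{-1}=-\mathbf{M}(t)^{-1}\dot{\mathbf{M}}(t)\mathbf{M}(t)^{-1}$) should give, with $\mathbf{B}=\mathbf{X}+\mathbf{A}\succ\mathbf{0}$,
\[
\left.\frac{d^{2}}{dt^{2}}g(\mathbf{X}+t\mathbf{D})\right|_{t=0}=\sum_{i=1}^{d}\frac{z_{i}^{2}}{x_{i}^{2}}-\textrm{Tr}\!\left(\mathbf{B}^{-1}\mathbf{D}\mathbf{B}^{-1}\mathbf{D}\right),
\]
the first (positive) term coming from $-\sum_i\log x_i$ and the second (negative) term being exactly the Hessian contribution that makes $\log\det(\mathbf{X}+\mathbf{A})$ concave. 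Hence the whole statement boils down to the inequality $\textrm{Tr}(\mathbf{B}^{-1}\mathbf{D}\mathbf{B}^{-1}\mathbf{D})\le\textrm{Tr}(\mathbf{X}^{-1}\mathbf{D}\mathbf{X}^{-1}\mathbf{D})$, where I have used that the right-hand side equals $\sum_i z_i^2/x_i^2$ because $\mathbf{X}^{-1}$ and $\mathbf{D}$ are both diagonal.

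To establish this inequality I would invoke two L\"owner-order monotonicity facts. (i) Because $\mathbf{A}\succ\mathbf{0}$ we have $\mathbf{B}=\mathbf{X}+\mathbf{A}\succ\mathbf{X}\succ\mathbf{0}$, hence $\mathbf{B}^{-1}\preceq\mathbf{X}^{-1}$ by antitonicity of matrix inversion. (ii) For fixed $\mathbf{D}$ the functional $\Psi(\mathbf{C}):=\textrm{Tr}(\mathbf{C}\mathbf{D}\mathbf{C}\mathbf{D})$ is nondecreasing on the PSD cone: if $\mathbf{0}\preceq\mathbf{C}_{1}\preceq\mathbf{C}_{2}$ and $\mathbf{\Delta}=\mathbf{C}_{2}-\mathbf{C}_{1}$, the identity $\mathbf{C}_{2}\mathbf{D}\mathbf{C}_{2}\mathbf{D}-\mathbf{C}_{1}\mathbf{D}\mathbf{C}_{1}\mathbf{D}=\mathbf{\Delta}\mathbf{D}\mathbf{C}_{2}\mathbf{D}+\mathbf{C}_{1}\mathbf{D}\mathbf{\Delta}\mathbf{D}$ gives, after taking traces and using cyclicity, $\Psi(\mathbf{C}_{2})-\Psi(\mathbf{C}_{1})=\textrm{Tr}(\mathbf{\Delta}\,\mathbf{D}\mathbf{C}_{2}\mathbf{D})+\textrm{Tr}(\mathbf{\Delta}\,\mathbf{D}\mathbf{C}_{1}\mathbf{D})\ge0$, since each $\mathbf{D}\mathbf{C}_{k}\mathbf{D}\succeq\mathbf{0}$ (a congruence of a PSD matrix, as $\mathbf{D}$ is Hermitian) and the trace of a product of two PSD matrices is nonnegative. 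Chaining (i) and (ii): $\textrm{Tr}(\mathbf{B}^{-1}\mathbf{D}\mathbf{B}^{-1}\mathbf{D})=\Psi(\mathbf{B}^{-1})\le\Psi(\mathbf{X}^{-1})=\sum_{i}z_{i}^{2}/x_{i}^{2}$, so the second derivative above is $\ge0$ and convexity of $g$ on $\mathbb{D}_{d}$ follows.

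I expect the substantive point to be exactly this last step. The decomposition $g=\log\det(\mathbf{X}+\mathbf{A})-\log\det\mathbf{X}$ reads as ``concave plus convex'', which is indeterminate; the real content is that the convex piece $-\log\det\mathbf{X}$ dominates the concave piece, and this domination uses in an essential way two features of the hypotheses: that $\mathbf{X}$ is \emph{diagonal} (so that the second-order term of $-\log\det\mathbf{X}$ is precisely $\sum_i z_i^2/x_i^2=\textrm{Tr}(\mathbf{X}^{-1}\mathbf{D}\mathbf{X}^{-1}\mathbf{D})$), and that $\mathbf{A}$ is \emph{positive definite} (so that the comparison $\mathbf{B}^{-1}\preceq\mathbf{X}^{-1}$ points in the useful direction). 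I would also double-check the signs in the second-derivative computation, since that is the most error-prone part of the argument.
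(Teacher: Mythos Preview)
Your proof is correct and follows essentially the same route as the paper: write $g(\mathbf{X})=\log\det(\mathbf{X}+\mathbf{A})-\log\det(\mathbf{X})$, take a second-order test, and use $(\mathbf{X}+\mathbf{A})^{-1}\preceq\mathbf{X}^{-1}$ to show the convex part dominates the concave part. The only cosmetic difference is that the paper packages the Hessian via the Kronecker-product formula $\mathcal{H}g=(\mathbf{X}^{T})^{-1}\otimes\mathbf{X}^{-1}-((\mathbf{X}+\mathbf{A})^{T})^{-1}\otimes(\mathbf{X}+\mathbf{A})^{-1}\succeq 0$ (invoking monotonicity of $\otimes$ on PSD matrices), whereas you unpack the same inequality by hand as the trace monotonicity $\mathrm{Tr}(\mathbf{B}^{-1}\mathbf{D}\mathbf{B}^{-1}\mathbf{D})\le\mathrm{Tr}(\mathbf{X}^{-1}\mathbf{D}\mathbf{X}^{-1}\mathbf{D})$.
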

\begin{proof}
$g(\mathbf{X})=\log\textrm{det}\left(\mathbf{X}+\mathbf{A}\right)-\log\textrm{det}\left(\mathbf{X}\right)$.
According to \cite{hjorungnes2007complex}, the second order differential
of $g(\mathbf{X})$ is 
\[
\textrm{d}^{2}g(\mathbf{X})=\frac{1}{\ln2}\cdot\textrm{d}\textrm{vec}(\mathbf{X})^{T}\cdot\mathcal{H}_{\mathbf{X},\mathbf{X}}g(\mathbf{X})\cdot\textrm{d}\textrm{vec}(\mathbf{X})
\]
where $\mathcal{H}_{\mathbf{X},\mathbf{X}}g(\mathbf{X})=-((\mathbf{X}+\mathbf{A})^{T})^{-1}\otimes(\mathbf{X}+\mathbf{A})^{-1}+(\mathbf{X}^{T})^{-1}\otimes\mathbf{X}^{-1}$.
Since $\mathbf{X}+\mathbf{A}\succeq\mathbf{X}$ and both $\mathbf{X}+\mathbf{A}$,$\mathbf{X}\succeq$0
(here $\mathbf{A}\succeq\mathbf{B}$ means that $\mathbf{A}-\mathbf{B}$
is PD), then $(\mathbf{X}+\mathbf{A})^{-1}\preceq\mathbf{X}^{-1}$,
$((\mathbf{X}+\mathbf{A})^{T})^{-1}\preceq(\mathbf{X}^{T})^{-1}$,
and it is easy to verify that $\mathcal{H}_{\mathbf{X},\mathbf{X}}g(\mathbf{X})\succeq0$
\cite{bernstein2011matrix}. Therefore, $g(\mathbf{X})$ is convex
w.r.t. $\mathbf{X}$. 
\end{proof}
%asdf

With the convexity property in \textit{Lemma \ref{-The-real-convexity}}
and using the Jensen's Inequality on (\ref{eq:reduced_expression}),
we have 
\[
R_{lim}{\color{blue}{\normalcolor \geq}}\sum_{j=1}^{K}\mathbb{E}\left\{ \log\textrm{det}\left(\mathbf{I}+\frac{P}{d}\left(\mathbf{H}_{j}^{q}(\mathbf{H}_{j}^{q})^{H}\right)\left(\mathbf{I}+\mathbb{E}\left\{ \mathbf{\Sigma}_{j}\right\} \right)^{-1}\right)\right\} 
\]
and $\textrm{Tr}\left(\mathbb{E}\left\{ \mathbf{\Sigma}_{j}\right\} \right)=\mathbb{E}\left\{ \textrm{Tr}(\mathbf{\Sigma}_{j})\right\} =\mathbb{E}\left\{ \textrm{Tr}(\frac{P}{d}\sum_{i\neq j}^{K}l_{ji}(\mathbf{\hat{U}}_{j}^{H}\mathbf{H}_{ji}\mathbf{\hat{V}}_{i})(\mathbf{\hat{U}}_{j}^{H}\mathbf{H}_{ji}\mathbf{\hat{V}}_{i})^{H})\right\} =\mathbb{E}(I_{j})$. 

Denote $\mathbf{P}_{d}$ as a permutation matrix with dimension $d$,
and the set of all permutation matrices with dimension $d$ as $\mathbb{P}_{d}$.
Since $\mathbf{H}_{j}^{q}(\mathbf{H}_{j}^{q})^{H}$ is a \textit{central
Wishart matrix}, we have that

\begin{eqnarray*}
R_{lim} & \geq & \sum_{j=1}^{K}\mathbb{E}\left\{ \log\textrm{det}\left(\mathbf{I}+\frac{P}{d}\left(\mathbf{H}_{j}^{q}(\mathbf{H}_{j}^{q})^{H}\right)\left(\mathbf{I}+\mathbb{E}\left\{ \mathbf{\Sigma}_{j}\right\} \right)^{-1}\right)\right\} \\
 & = & \sum_{j=1}^{K}\mathbb{E}\left\{ \log\textrm{det}\left(\mathbf{I}+\frac{P}{d}\left(\mathbf{H}_{j}^{q}(\mathbf{H}_{j}^{q})^{H}\right)\left(\mathbf{I}+\mathbf{P}_{d}\cdot\mathbb{E}\left\{ \mathbf{\Sigma}_{j}\right\} \cdot\mathbf{P}_{d}\right)^{-1}\right)\right\} 
\end{eqnarray*}
for any $\mathbf{P}_{d}\in\mathbb{P}_{d}$. Further using Jensen's
inequality, we get

\begin{eqnarray*}
R_{lim} & \geq & \sum_{j=1}^{K}\mathbb{E}\left\{ \log\textrm{det}\left(\mathbf{I}+\frac{P}{d}\left(\mathbf{H}_{j}^{q}(\mathbf{H}_{j}^{q})^{H}\right)\left(\mathbf{I}+\frac{1}{d!}\sum_{\mathbf{P}_{d}\in\mathbb{P}_{d}}\mathbf{P}_{d}\cdot\mathbb{E}\left\{ \mathbf{\Sigma}_{j}\right\} \cdot\mathbf{P}_{d}^{T}\right)^{-1}\right)\right\} \\
 & = & \sum_{j=1}^{K}\mathbb{E}\left\{ \log\textrm{det}\left(\mathbf{I}+\frac{P}{d}\left(\mathbf{H}_{j}^{q}(\mathbf{H}_{j}^{q})^{H}\right)\left(\mathbf{I}+\frac{\mathbb{E}\{I_{j}\}}{d}\cdot\mathbf{I}\right)^{-1}\right)\right\} \\
 & \overset{(r)}{=} & \sum_{j=1}^{K}d\cdot\int_{0}^{+\infty}\log\left(1+\frac{1}{d}\mathbb{E}\{I_{j}\}+\frac{P}{d}\cdot v\right)f\textrm{(v)d}v-\sum_{j=1}^{K}d\cdot\log\left(1+\frac{1}{d}\mathbb{E}\{I_{j}\}\right)
\end{eqnarray*}
where in $(r)$, $\mathbf{H}_{j}^{q}(\mathbf{H}_{j}^{q})^{H}$ is
a \textit{central Wishart} matrix with $d$ degrees of freedom and
covariance matrix $\mathbf{I}_{d}$ ($\mathbf{W}_{d}(\mathbf{I}_{d},\; d)$)
), $f(v)$ is given in Theorem \ref{lem:(Perfect-CSIT-Throughput):}.

\subsection{\label{sub:Proof-for-Corollary-Scaling_Power}Proof for Corollary
1 (Scaling Law with Transmit SNR)}

With (\ref{eq:throughput_under_dynamic}) and by Jensen's inequality,
we can further get
\begin{eqnarray*}
R_{lim} & \geq & \sum_{j=1}^{K}d\cdot\int_{0}^{+\infty}\log\left(1+\frac{P}{d}\cdot v\right)f(v)\textrm{d}v-\sum_{j=1}^{K}d\cdot\log\left(1+P\sum_{i,i\neq j}^{K}\left(\frac{\beta_{ji}l_{ji}}{M_{ji}^{r}M_{ji}^{t}-1}\right)\cdot2^{-\frac{B_{ji}^{*}}{M_{ji}^{r}M_{ji}^{t}-1}}\right)\\
 & \geq & R_{per}-Kd\cdot\log\left(1+\frac{P}{K}\cdot\sum_{i.j,i\neq j}^{K}\left(\frac{\beta_{ji}l_{ji}}{M_{ji}^{r}M_{ji}^{t}-1}\right)\cdot2^{-\frac{B_{ji}^{*}}{M_{ji}^{r}M_{ji}^{t}-1}}\right).
\end{eqnarray*}
As $\lim_{P\rightarrow\infty}\frac{R_{per}}{\log P}=Kd$, we get that
the sum DoFs of the system are kept if
\begin{equation}
\frac{P}{K}\cdot\sum_{i.j,i\neq j}^{K}\left(\frac{\beta_{ji}l_{ji}}{M_{ji}^{r}M_{ji}^{t}-1}\right)\cdot2^{-\frac{B_{ji}^{*}}{M_{ji}^{r}M_{ji}^{t}-1}}\leq C_{0},\label{eq:bits_scaling_eq}
\end{equation}
where $C_{0}$ is some bounded constant that does not depend on $P$.

As $P\rightarrow\infty$, $B\rightarrow\infty$, we have $b\rightarrow\infty$
in (\ref{eq:optimal_bits_allocation}). Therefore, we have that when
$l_{ji}\neq0$, $B_{ji}^{*}=(M_{ji}^{r}M_{ji}^{t}-1)b+c_{ji}$; when
$l_{ji}=0$, $B_{ji}^{*}=0$ in (\ref{eq:optimal_bits_allocation}).
Substitute these $\{B_{ji}^{*}\}$ into (\ref{eq:bits_scaling_eq}),
we obtain $b\geq\log P+C_{1}$. Hence, the sum feedback bits is
\[
B=\sum_{i.j,i\neq j}^{K}B_{ji}=\sum_{i\neq j,l_{ji}\neq0}(M_{ji}^{r}M_{ji}^{t}-1)b+C_{b}\geq\sum_{i,j,i\neq j}^{K}\left\{ I_{\{l_{ji}>0\}}\cdot(N_{r}N_{t}\rho_{ji}-1)\right\} \log P+C_{b},
\]
where $c_{ji}$, $C_{1}$, $C_{b}$ above are some bounded constants
independent of $P$.

\bibliographystyle{IEEEtran}
\bibliography{Lfia_Ref}

\end{document}